\documentclass[10pt, conference, letterpaper]{IEEEtran}
\usepackage{booktabs} 
\usepackage[boxed]{algorithm2e}  
\usepackage{multirow,makecell}
\usepackage{setspace}
\usepackage{enumerate}
  \usepackage{mathrsfs}
\usepackage{graphicx}
\usepackage{subfigure}
\usepackage{amsthm}
\usepackage{amsmath,amssymb}
\usepackage{url}
\usepackage{color}
\usepackage{afterpage}
\usepackage{hyperref}
\usepackage{threeparttable}
\allowdisplaybreaks[2,4]
\newtheorem{definition}{Definition} 
\newtheorem{lemma}{Lemma} 
\newtheorem{theorem}{Theorem} 
\newtheorem{corollary}{Corollary}
\ifCLASSINFOpdf
  \else

\fi

\hyphenation{op-tical net-works semi-conduc-tor}

\begin{document}
%
\title{Evolving Influence Maximization in Evolving Networks}


\author{\IEEEauthorblockN{Xudong Wu, Luoyi Fu, Zixin Zhang, Jingfan Meng, Xinbing Wang, and Guihai Chen}
\IEEEauthorblockA{Shanghai Jiao Tong University}
 \{xudongwu, yiluofu, zixin98, JeffMeng, xwang8\}@sjtu.edu.cn, gchen@cs.sjtu.edu.cn}

\maketitle
%



\begin{abstract}
Influence Maximization (IM) aims to maximize the number of people that become aware of a product by finding the `best' set of `seed' users to initiate the product advertisement. Unlike prior arts on static social networks containing fixed number of users, we undertake the first study of IM in more realistic evolving networks with temporally growing topology. The task of evolving IM ({\bfseries EIM}), however, is far more challenging over static cases  in the sense that seed selection should consider its impact on future users and the probabilities that users influence one another also evolve over time.

We address the challenges through $\mathbb{EIM}$, a newly proposed bandit-based framework that alternates between seed nodes selection and knowledge (i.e., nodes' growing speed and evolving influences) learning during network evolution. Remarkably, $\mathbb{EIM}$ involves three novel components to handle the uncertainties brought by evolution:
(1) A fully adaptive particle learning of nodes' growing speed for accurately estimating future influenced size, with real growing behaviors delineated by a set of weighted particles.
(2) A bandit-based refining method with growing arms to cope with the
 evolving influences via growing edges from previous influence diffusion feedbacks.
(3) {\bf Evo-IMM}, a priority based seed selection
algorithm with the objective to maximize the influence spread to highly attractive users during evolution. Theoretically, $\mathbb{EIM}$ returns a regret bound that provably
maintains its sublinearity with respect to the growing network size. Empirically, the effectiveness
of $\mathbb{EIM}$ are also validated, with three notable
million-scale evolving network datasets possessing complete social
relationships and nodes' joining time. The results confirm the superiority of $\mathbb{EIM}$ in terms of an up to $50\%$ larger influenced size over four static baselines.

\end{abstract}

\IEEEpeerreviewmaketitle
   \section {Introduction}\label{intro}
With the development of massive social networks (e.g., Facebook, Wechat and Twitter, etc), Influence Maximization (IM)  has become a key technology of viral marketing in modern business \cite{ICDE17, mobihoc, IMM}. Given a social network $G$ and an integer $K$, the goal of IM is to select $K$ seed users in $G$ in hope that their adoptions of a promoted product or idea can maximize the expected number of final adopted users through word-of-mouth effect \cite{icdm, retweet, quinn2015directed}. Initially put forwarded by Kempe et al. \cite{IM1}, the problem of IM has been intensively studied by a plethora of subsequent works,
proposing improvements or modifications from multiple aspects, including influence size estimation \cite{IMM, sigmetircs1, fIMM, cikm1}, adaptive seeding \cite{mobihoc, retweet}, boosting seeding \cite{ICDE17}, and many others.

The fundamental task in IM, as we noted above, lies in estimating the expected influenced size
of each alternative seed set based on each user's activation probabilities, referring to the the probability that a user successfully
influences his social neighbors after having been influenced himself. And
the \emph{influences} among users are quantified by those activation probabilities. While existing literature works well in finding the most influential seed users, they are all constrained to the assumption that the number of nodes in the network, along with their edges in between, are fixed during influence diffusion. Consequently, it violates real practices as many realistic social networks are usually growing over time. Take Wechat
\cite{Wechat}, the most popular social media app in China as an example.
The number of Wechat accounts (nodes) grew from zero to 300 million during its
early two years, with $410$ thousand new users per day on average, and are continuing to fastly approach almost 1
billion ones \cite{sigmoids}. And Facebook also exhibits a fast growth with roughly $340 K$ new users per day \cite{li2017time}. Similar phenomena also hold in a wide range of other real
social applications including Twitter, Academic networks,
etc. Meanwhile, a viral marketing action such as the web advertisements via messages or emails propagation may consume up weeks to months \cite{iribarren2009impact}. Thus, given an evolving network $G_{t}$ at time $t$ and time span $T$ for a viral marketing action, $G_{t}$ have greatly evolved to $G_{t+T}$ during influence diffused from seed users to the expected maximal size. Consequently, the expected influenced size estimated by existing IM techniques over $G_{t}$ cannot reflect the influence of seed set over $G_{t+T}$, which severely impacts the quality of selected seed users.

{\bf The above issue motivates the study of evolving influence maximization ({\bf EIM}), whose problem formulation should incorporate the evolutionary nature of $G$ during propagation.} \emph{Interpreted technically, given an instance of evolving social network $G_t$ at time $t$ and an integer
$K$, the goal of  {\bf EIM} is to select $K$ seed users to maximize the influence diffused to both existing users and those will join during time $t$ to $t+T$.}
 Different from the well investigated existing IM problems,
the task of {\bf EIM} turns out to be highly non-trivial due to the following three
challenges in reality:
(1) The growing speed of a specific network exhibits uncertainties due to multiple external factors (e.g., the number of potential users, user interests and peer competitions). Such uncertain growing speed hinders accurately predicting how the network evolve during time $t$ to $t+T$.
(2) There is no prior knowledge about the influences via newly emerged edges, and they may also evolve over time with the changes of social relations among users (e.g., from friends to strangers or on the contrast). Although some recent efforts \cite{icdm, retweet, cosn, semi-bandit, iwqos} have been dedicated to online IM where influences among users are uncertain,
the underlying network topology is still assumed to be completely known, thus inapplicable to the situations with both growing nodes and edges. (3) In evolving networks, newly added users are more inclined to establish relationship with those of higher popularity. Thus users in $G_{t}$ have different attractiveness to new users, as opposed to existing IM studies which treat each user equally. Unfortunately, as far as we know, no studies have been directed toward IM in temporally growing networks. Consequently, it remains open how to effectively resolve {\bf EIM} that can jointly deal with the unknown influences, uncertain growing speed and heterogeneous attractiveness.


\textbf{This motivates us to present a first look into {\bf EIM} problem}. By proving its NP-hardness, we attempt to solve the above three challenges in {\bf EIM} by $\mathbb{EIM}$, a new and novel bandit-based {\bfseries E}volving {\bfseries I}nfluence {\bfseries M}aximization framework with multiple \emph{periods} of IM campaigns \footnote{\vspace{-1mm}We shall elaborate in Section 3 the reason for choosing the bandit-based framework and the incorporation of multiple periods.}. Each period amounts to an IM campaign which chooses seeds that improve the knowledge and/or that lead to a large spread to both existing users and those that will join till the end of this period, and incurs a \emph{regret} in such influence spread due to the lack of network knowledge. Different from prior IM studies, here the network knowledge includes the networks' growing speed and evolving influences via continuously emerging new edges in network evolution. Thus $\mathbb{EIM}$ seeks to minimize the accumulated regret incurred by choosing suboptimal seeds over multiple periods.

 While we defer the details of $\mathbb{EIM}$ design in later sections (Sections \ref{Modeling Evolving}, \ref{learning}, \ref{evolving influence maximization}), here we briefly unfold its three novel components in addressing the aforementioned three challenges in {\bf EIM}:

(1) It is unrealistic to assume the complete network topology is known in advance, thus a fully adaptive particle learning method is proposed to capture the uncertain network growing speed, with real growing function of nodes explicitly represented by a set of weighted particles. By modeling network evolution via the popular Preferential Attachment (PA) rule (i.e., new users prefer connecting to higher degree nodes), we are able to predict potential added users during influence diffusion with weighted particles (Section \ref{Modeling Evolving}).

(2) Considering the evolving influences among users, we model
the influences via continuously emerging edges as the growing arms
in the bandits, thus ensuring the applicability of $\mathbb{EIM}$ to the evolving network with growing nodes and edges (Section 5). By modeling the activating probabilities as the dynamic rewards distribution of the arms, the reward
of each edge as the edge-level feedback can then be taken to adaptively refine the
estimating values of the evolving influences.


(3) Aiming at maximizing the influence diffused to both existing and future users, we introduce a novel priority based seed selection algorithm {\bfseries Evo-IMM} that incorporates the heterogeneity of users' attractiveness formed by PA rule (Section \ref{evolving influence maximization}). In {\bfseries Evo-IMM}, users with higher attractiveness to future ones are sampled with higher priority in seeds selection. {\bfseries Evo-IMM} turns out to provably enjoy comparable performance such as approximation ratio and time complexity with the static counterparts.

 We validate the performance of $\mathbb{EIM}$ from both theoretical and empirical perspectives. Theoretically, although the growing size and successive emerging orders of the arms duo to network evolution further challenges the knowledge learning compared to classical bandits, the regret bound of $\mathbb{EIM}$ still provably maintains to be sublinear to the number of trials under the growing network size (Section \ref{Performance Analysis}). Empirically, the effectiveness of $\mathbb{ EIM}$ is validated on both synthetic and real world evolving networks, with up to 200 years of time span and million scale data size respectively (Section \ref{experiments}). Notably, the real evolving networks are extracted from the true academic networks with complete co-authorship, citation and joining time of all authors and papers, which is severely lacking in existing IM works.
Experimental results demonstrate the superiority of $\mathbb{EIM}$. For example, $\mathbb{EIM}$ achieves a $50\%$ lager influenced size than four static baselines in an evolving Co-author network with $1.7$ million nodes. 

  \section{Related Works}
 \vspace{-1mm}
\subsection{Static Influence Maximization Problem}\label{Preliminaries}
  \vspace{-1mm}
Kempe. et al. \cite{IM1} are the first to formulate influence maximization problem over a given network as a combinatorial optimization problem. Particularly, in their seminal work \cite{IM1}, they treat the network as a graph $G=(V, E)$, where there is an influence cascade process
 triggered by a small number of influenced users that are called \emph{seed} users. The influence diffusion process is then characterized by the later widely adopted Independent Cascade (IC) model \cite{ICDE17}-\cite{IM1}, whose definition is given as follows:
\begin{definition}\label{IC}
(Independent Cascade (IC) model.) In the IC model, the influences among users are characterized by the activation probabilities. Specifically, once user $u_{i}$ is influenced, he has a single chance to activate his social neighbor $u_{j}$ successfully with activation probability $p_{ij}$ via edge between users $u_{i}$ and $u_{j}$). And whether or not $u_{i}$ can influence $u_{j}$ successfully is independent of the history of information diffusion.
\end{definition}
For a given seed set $S$, let $I(S, G)$ be the expected number of users that are finally influenced by  the seed users in $S$ estimated under the IC model. The objective of IM is to find a set of $K$ seed users (i.e., $S^{opt}$) who can maximize $I(S,G)$ among all the sets of users with size $K$. That is,
\vspace{-1mm}
\begin{equation}\label{classical IM}
S^{opt}=\mathbf{\mathop{\arg\max}}_{S\subseteq V, |S|=K} \, I(S, G).
\vspace{-1mm}
\end{equation}

Based on the above formulation, Kempe. et al. prove the NP-hardness of the IM problem, and design the greedy algorithm that provably returns a $(1-1/e)$-approximate solution for seed selection. Since then, a large number of subsequent works have emerged to improve the efficiency and quality of IM designing. For some representative examples, \cite{IMM}, \cite{fIMM} and \cite{cikm1} focus on achieving reasonable complexity in seed selection over million or even billion-scale networks.
Besides, different costs for seeding different users are considered in \cite{mobihoc} and \cite{yang2016continuous} for the cost-aware IM problems, with the corresponding near optimal budget allocation methods proposed.

The objectives of the above works are all set to select the seed set with the maximum $I(S, G)$ estimated over the static network $G$. As a result, over evolving networks where new users continuously join in and influences evolve over time, it is difficult for classical IM techniques to return high quality seeds since $I(S, G)$ estimated by them fails to include the future users and their influences.

 \subsection{Dynamic Influence Maximization Problem}
 As a step ahead of classical IM problems, some recent attempts are made in dynamic networks. For example, considering the network with dynamically changing edges, \cite{cosn} takes multiple specific examples to show the effect of changing typologies on IM design, and highlights the importance of seeding time. Similarly,  the effect of dynamic user availability is studied in \cite{jankowski2013compensatory}, and the effect of seeding time are also experimentally shown.  To cope with the unknown influences among users, Quinn \emph{et al.} \cite{quinn2015directed} proposed to learn the influences from previous information propagation activities. Although serval effective algorithms are designed in \cite{quinn2015directed} for learning uncertain influences, they are merely applicable to the network with static users. Besides,  Michalski \emph{et al.} \cite{michalski2014seed} focus on maximizing the influence diffused to multiple given network snapshots. However, they assume that future network is known in advance, which violates the real practices. 

 Meanwhile, there emerges a class of online IM techniques that periodically seed one or more users in dynamic networks, in a similar manner to our settings that will be described later. To unfold, Tong \emph{et al.}                                
\cite{tong2017adaptive}  propose to successive select seed users with influence diffusion over dynamic networks, while just considers changing edges among fixed users. Besides, considering the Multi-Arm Bandits (MAB) is a widely used framework that learns dynamics and make reasonable decision as possible \cite{liu2013learning}, the bandit-based learning framework is adopted in \cite{semi-bandit} and \cite{retweet} to refine unknown influences from the feedbacks of previous influence diffusion, and periodically select a set of seed users under the refined influences.
Regardless of their progress, those online IM still considers the uncertain influences over static network topology,
where the estimated $I(S, G)$ also fails to include the influence diffused to the future users.
Thus it is still difficult for the seeds to be repeatedly selected at different time to meet requirement of high quality.

 As far as we know, the only work that shares the closest correlation with us
belongs to Li et. al. \cite{li2017time}, who simulate the network growth based on the Forest Fire Model and then run the existing static IM algorithms over the simulated network. However, under the unknown growing speed, it is difficult for the simulation to capture the real network evolution. Furthermore, influences among users are still preset as known constants.
The limitations of the state-of-art IM techniques motivates us to study evolving influence maximization, which will be formally defined in next section.

\section{Evolving Influence Maximization}\label{model}

  \vspace{-1mm}

\subsection{Problem Formulation}\label{EIMproblem}

{\bfseries  Evolving IM problem (EIM).} We assume that time is divided into different time stamps.  And an evolving network at time stamp $t$ is modeled as a graph $G_{t}=(V_{t}, E_{t})$, where $V_{t}$ and $E_{t}$ respectively denote users and their relationships in $G_{t}$. Given an IM campaign that takes $T$ time (which is called as survival time later), the network may evolve from $G_{t}$ to $G_{T+t}$ during influence diffusion with newly added nodes and edges. Thus, different from the classical IM problem defined in Eqn. (\ref{classical IM}), we redefine the evolving IM problem over $G_{t}=(V_{t}, E_{t})$  as follows.

\begin{definition}\label{problem statement}
({\bf EIM} problem.) Given an evolving network at timestamp $t$, i.e., $G_{t}=(V_{t}, E_{t})$ and the survival time $T$ of an IM campaign, the objective of {\bf EIM} is to find a set of users $S^{opt}$ with size $K$ to maximize the influence spread to both users in $V_{t}$ and those that will join during $t$ to $t+T$. That is, we aim at solving
\begin{equation}\label{EIM problem}
S^{opt}=\mathbf{\mathop{\arg\max}}_{S\subseteq V_{t}, |S|=K} \, I(S, G_{t+T}).
\end{equation}
\end{definition}
Note that in Definition \ref{problem statement}, the seeds are selected from the current network $G_{t}$ instead of the future instances $G_{t'} ( t<t'\leq T)$. The reason behind is that the existing network $G_{t}$ is known, while it is difficlt to know which users will be in the future network instances and how they will be connected to each other. Since assuming the future instances $G_{t'} ( t<t'\leq T)$ known at time $t$  is unrealistic, it is more reasonable to select the seed set from the current $G_{t}$, with the objective being maximizing the influence diffused over $G_{t+T}$. Similar to the classical IM problem, the {\bfseries EIM} defined above is also NP-hard. Lemma \ref{hardness} states the hardness of {\bfseries EIM} problem and the submodularity of its objective function $I(S, G_{t+T})$.
\vspace{-2mm}
\begin{lemma}\label{hardness}
The {\bfseries EIM} problem is {\emph NP-hard}. The computation of $I(S, G_{t+T})$ is {\emph \#P-hard}. And the objective function $I(S, G_{t+T})$ is monotone and submodular\footnote{A set function $I(\cdot)$ is  monotone if $I(A)\leq I(B)$ for all $A \subseteq B$, and $I(\cdot)$ is submodular if $I(A\cup x)-I(A)\geq I(B\cup x)-I(B)$ for all $A \subseteq B$.}.
\vspace{-2mm}
\end{lemma}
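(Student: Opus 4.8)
The plan is to dispatch the two hardness claims by reduction and then establish monotonicity and submodularity through the classical live-edge (possible-worlds) coupling, carried out conditionally on the network's evolution. For the \emph{NP-hardness} of {\bf EIM} and the \emph{\#P-hardness} of computing $I(S,G_{t+T})$, I would observe that both follow immediately from the fact that classical IM is the degenerate special case of {\bf EIM} in which the network does not grow on $[t,t+T]$. Concretely, given any instance of the static IM problem of Eqn.~(\ref{classical IM}) on a graph $G$, I construct an {\bf EIM} instance with $G_{t}=G$ and a growing speed of zero (no node or edge appears before $t+T$), so that $G_{t+T}=G_{t}=G$ and the objective in Eqn.~(\ref{EIM problem}) coincides exactly with $I(S,G)$. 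Any algorithm solving {\bf EIM} then solves static IM, so the NP-hardness established by Kempe et al.~\cite{IM1} transfers verbatim; likewise, since evaluating $I(S,G)$ under the IC model of Definition~\ref{IC} is already known to be \#P-hard, the same embedding shows that evaluating $I(S,G_{t+T})$ is \#P-hard.

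The substance of the lemma is the structural claim. First I would fix a realization $\omega_{\mathrm{ev}}$ of the network evolution on $[t,t+T]$, namely the precise set of nodes and edges that appear and the activation probability carried by each edge, which, crucially, is generated by the exogenous growth process (the PA rule and the growing-speed model of Section~\ref{Modeling Evolving}) and is therefore independent of which seeds in $V_{t}$ are chosen. Conditioned on $\omega_{\mathrm{ev}}$, the graph $G_{t+T}$ is a fixed directed graph with fixed edge probabilities, and I can invoke the standard live-edge coupling: declare each edge $(u,v)$ independently ``live'' with its probability $p_{uv}$, obtaining a random subgraph, and note that the set of nodes ultimately activated by seed set $S$ equals exactly the set of nodes reachable from $S$ in that subgraph. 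For any single live-edge realization $\omega_{\mathrm{le}}$, the reachable-set size $S\mapsto |R_{\omega_{\mathrm{le}}}(S)|$ is monotone (adding a seed can only enlarge the reachable set) and submodular (the marginal reachability gain of a new seed only shrinks as the existing seed set grows). Averaging first over $\omega_{\mathrm{le}}$ and then over $\omega_{\mathrm{ev}}$ writes $I(S,G_{t+T})$ as a non-negative combination $\sum_{\omega}\Pr[\omega]\,|R_{\omega}(S)|$ of monotone submodular functions, and since both properties are preserved under non-negative combination, $I(S,G_{t+T})$ is itself monotone and submodular.

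The step I expect to be the genuine obstacle is justifying the decoupling used above, namely that the evolved topology and its edge weights can be fixed \emph{independently} of $S$ before the live-edge argument is applied. Two points need care. The first is that network growth is driven by user-joining behavior rather than by the advertising cascade, so the realization $\omega_{\mathrm{ev}}$ really is seed-agnostic; this is exactly the modeling assumption I would make explicit. The second, more delicate, point is the \emph{evolving} influences: an edge's activation probability may change over time, so in principle the probability in force when an edge is tested could depend on the activation timing and hence on $S$. To keep the live-edge coupling valid I would fix, as part of $\omega_{\mathrm{ev}}$, the effective probability each edge carries (for instance the value in force once the edge has appeared), so that each edge contributes a single, $S$-independent Bernoulli trial consistent with the single-chance rule of the IC model in Definition~\ref{IC}. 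Once this is in place, the per-realization reachability argument and the subsequent averaging go through unchanged, and the two hardness reductions are unaffected.
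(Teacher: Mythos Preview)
Your proposal is correct, and the monotonicity/submodularity argument is essentially the same as the paper's: both fix a realization of the evolved graph, run the live-edge/coin-flip coupling on that fixed graph, argue reachability is monotone and submodular per realization, and then average. You are in fact more careful than the paper about the decoupling step---explicitly arguing that the evolution realization $\omega_{\mathrm{ev}}$ is seed-agnostic and that each edge carries a single $S$-independent Bernoulli trial---whereas the paper simply says ``consider an instance of $G^{r+1}$'' and proceeds.

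The one genuine difference is in the hardness reductions. You reduce from \emph{static IM} by embedding it as the zero-growth special case of {\bf EIM}, inheriting NP-hardness and \#P-hardness from Kempe et al.\ and the known \#P-hardness of influence computation. The paper instead gives direct reductions: it re-derives NP-hardness from \textsc{Set Cover} via the standard bipartite construction with unit-weight edges, and \#P-hardness from the $S$--$D$ connectivity counting problem with edge probabilities $1/2$. Both routes are valid; yours is shorter and cleaner since it leverages existing results rather than redoing the underlying reductions, while the paper's is self-contained. Neither approach addresses anything the other misses.
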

\vspace{-2mm}
\begin{proof}
The NP-hardness and \#P-hardness can be respectively proved by the reductions of NP-completed \emph{Set Cover} problem and \#P-completed \emph{S-D connectivities} counting problem. And the submodularity of $I(S, G_{t+T})$ can be proved  by modeling the additional influence brought by a new seed as the marginal gain from adding an element to the set $S$. We leave the detailed analysis in the Appendix \ref{hardnessproof1}.
\end{proof}
\vspace{-2mm}
{\bf Challenges of solving \textbf{EIM}.} The NP-hardness of {\bfseries EIM} implies the necessity to seek for approximate algorithms for seed selection. However, as noted in Section 1, solving {\bf EIM} is far more challenging due to the evolving nature of the network included. Under Definition 3.1, the three challenges can be reproduced as: (1) The unknown growing speed makes it difficult to predict how many new users in $V_{t+T}$ will connect to existing users in $V_{t}$; (2) The influences among users evolve over time, which, together with the unknown growing speed, renders it impossible to accurately estimate $I(S,G_{t+T})$. (3) The heterogeneous attractiveness infers that users in $V_{t}$ cannot be equally treated in seed selection.

\subsection{Overview of $\mathbb{ EIM}$}
\vspace{-1mm}
Regarding the three challenges above, we propose a new framework that can better incorporate the evolving nature in solving \textbf{EIM}. We note that what is built upon the
three challenges, as also indicated in Section 1, is that the survival
time of an IM campaign only varies from weeks to months in
reality, leading to users joining the network several months later
unable to be influenced by this early IM campaign. Consequently, only selecting the seed users in the beginning and 
triggering an IM campaign once under
the uncertain network knowledge will severely restrict the long
term profits obtained from viral marketing.

\subsubsection{Basic idea of solving $\mathbb{ EIM}$}
We thus try to maximize the influence diffusion size over such evolving network by solving {\bfseries EIM} in \emph{multiple periods}, with one period corresponding to the survival time $T$ of an IM campaign and a set of new users seeded at the beginning of each period. Given that the initial network is $G_{t}$ and $T$, the objective of {\bf EIM} in the first period is to select a set $S$ of seeds from $V_{t}$ to maximize $I(S, G_{t+T})$ defined in Definition \ref{problem statement}. And the objective in the second period is to select a set $S$ from $V_{T+t}$ to maximize $I(S, G_{t+2T})$. Similar manner holds in subsequent periods. Thus successive IM campaigns in multiple periods give chance to maximize the number of influenced users in a long term. Meanwhile, the periodical seed selection also enables us to cope with the three challenges. To elaborate, users join during pervious periods are the natural samples to learn the growing speed at a given period. And the evolving influences among users can be learnt from the activating results during previous influence diffusion. Therefore, to systematically resolve the above three challenges,
each period consists of the following three steps:
(1) Learning network growing speed from the feedbacks of observed newly added users.
(2) Learning evolving influences from previous influence diffusion feedbacks.
%
(3) Selecting seed set for triggering an IM campaign under the refined network knowledge in above two steps. Taking timestamp $t$ as an example, the objective of step (1) is to predict the network structure until time $(t+T)$, and  step (2) aims at obtaining real influences among users to accurately estimate $I(S, G_{t+T})$ for any seeds set $S$. Then step (3) focus on selecting the seeds set $S$ who can maximize $I(S, G_{t+T})$. 
 With the number of total periods being set as $R$, all the IM applications in diverse scenarios can be well characterized by simply adjusting the values of $R$ and $T$. The objective of our solution is equivalent to maximizing the sum of influenced size during the $R$ periods. 

While we unfold the details of the three steps in Sections \ref{Modeling Evolving}, \ref{learning} and \ref{evolving influence maximization} respectively,  we remark that the idea of periodical seed selection in EIM cannot be trivially extended from that in recent online IM studies. As pointed out in Section 2.2,  it is because the dynamic influences are restricted among fixed number of users in online IM, while seeds in {\bf EIM} are selected from continuously joining users and the objective is to maximize the influence diffused to both the existing and future users. With this regard, existing online IM can be reduced as a special case of {\bf EIM} by simply letting users in the network remain static over time.
\begin{figure}[t]
 \centering
 \vspace{-1mm}
\centering
  \includegraphics[width=0.48\textwidth]{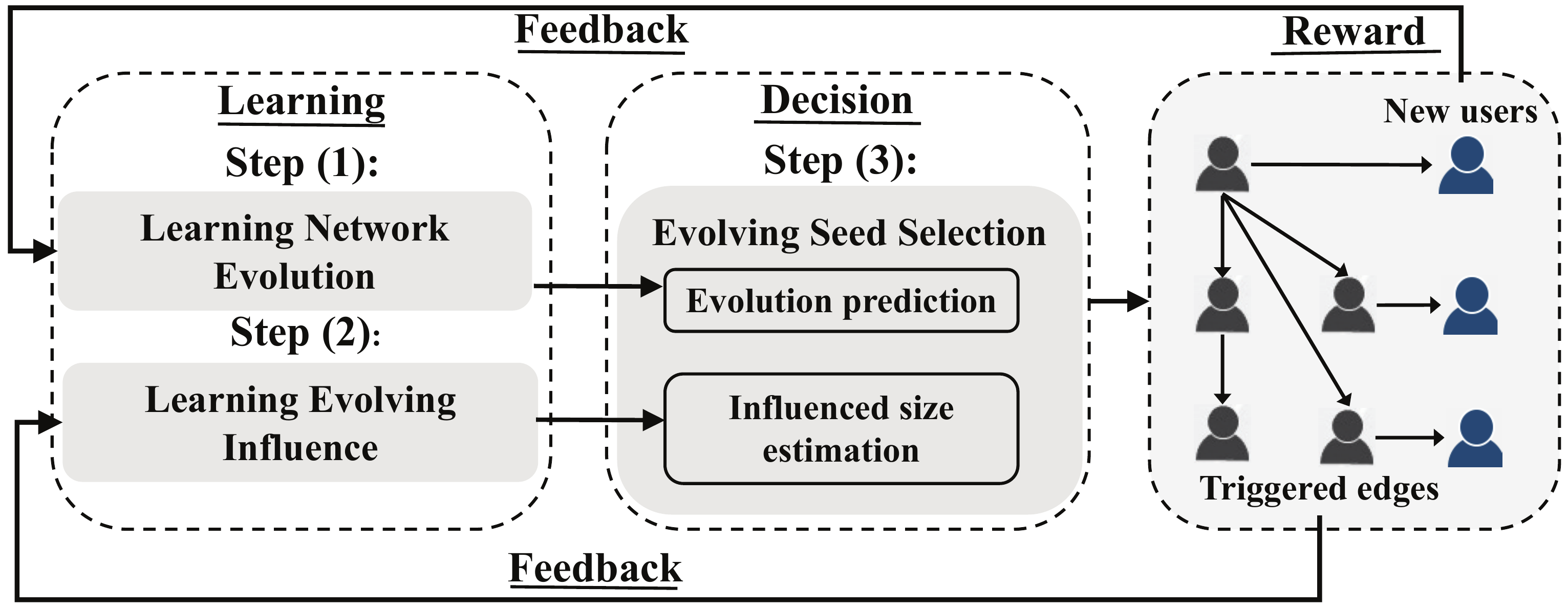}
  \vspace{-3mm}
  \caption{Overview of $\mathbb{EIM}$ in $r$-th trial}\label{EvoIMMoverview}
   \vspace{-6mm}
  \end{figure}

\subsubsection{Adaption to Combinatorial Multi Arm Bandits (CMAB)} Note that the above three steps in each period naturally forms a learning-decision process, where we first learn the growing speed and evolving influences from previous period and then decide which users to seed.  To this end, we design a novel framework $\mathbb{EIM}$ to coordinate the above three steps in multiple periods, as illustrated in Figure \ref{EvoIMMoverview}. $\mathbb{EIM}$ allows to convert the {\bf EIM} problem into a Combinatorial Multi Arm Bandits (CMAB) one reviewed below:

 In general CMAB,  there are $m$ arms with unknown reward distributions and, in each trial, it makes a decision that chooses a set of arms with maximum expected rewards to trigger \cite{liu2013learning}. Then the reward obtained from each arm is taken as the feedback to update its reward distribution, and in next trial, the decision is made under the updated rewards distribution. Given the total number of trials  $R$, the objective of CMAB is to design an arm selection strategy to maximize the long term rewards obtained from the trials.

 Regarding this, in $\mathbb{EIM}$ we model an IM campaign as one trial and totally $R$ trials will be performed. The decision in the $r$-th ($0<r\leq R$) trial is to select the seeds from the evolving network $G^{r}=(V^{r}, E^{r})$ at time \footnote{Throughout the rest of the paper, we have the following relations: The initial network is $G(t)$ at time $t$, where we set $T^1=t$.  Let $T^{r}$ represent the time when the $r$-th trial occurs. The evolved network at $T^{r}$ is denoted as $G^r=(V^r, E^r)$, with $V^r$ and $E^r$ being the corresponding evolved node and edge sets at $T^r$.} $T^{r}$ . The triggered arms correspond to the activated edges in the influence diffusion starting from the selected seed nodes under IC model during $T^{r}$ to $T^{r+1}$. By modeling the activation probabilities as the reward distributions, we consider the edge-level feedback in $\mathbb{EIM}$ where we can observe wether the activation via an edge is successful or not.
 
Table \ref{mapping} lists the
mapping of the various components of CMAB to $\mathbb{EIM}$ framework. Different from the general CMAB, here the number of arms in $\mathbb{EIM}$ grows with the continuously emerging new edges during network evolution.

\begin{table}[h]
\vspace{-3mm}
\centering
\caption{Mappings between CMAB and EIM}\label{mapping}
    \vspace{-3mm}
{\footnotesize
\begin{tabular}{c|c|l} \hline
{\bfseries CMAB} &{\bfseries Symbol} &{\bfseries  EIM}
\\ \hline
$r$-th trial& r &   IM campaign in $r$-th period \\ \hline
Arm &$e$&Influence via edge $e$  \\ \hline
Reward of arm $e$ &$z_{e}$&Activating result of edge $e$ \\ \hline
\multirowcell{2}{Reward of $r$-th trial}  &$I(S^{r}, G^{r+1})$& Influenced size during \\
   &&$T^{r}$ to $T^{r+1}$  \\ \hline
\multirowcell{3}{Bandits feedback}&$\Delta n(T^{r})$ &Observed new users during  \\
&&$T^{r}$ to $T^{r+1}$\\
                                                 &$z_{e}$ &Reward of edge $e$\\
\hline
\end{tabular}}
\vspace{-1mm}
\end{table}

{\bfseries Example.} We further give an example to facilitate the understanding of $\mathbb{EIM}$. Let the budget for a viral marketing be seeding $60$ users and the survival time for an IM campaign be one month. $\mathbb{EIM}$ divides the viral marketing into multiple trials by seeding $5$ users one month. Here, two consecutive trials are one month apart. Suppose that the initial network starts at 1st, May, and the objective of the first trial is to select $5$ seeds from current users to maximize the influence among those joining before 1st, May and during 1st, May to 31st, May. Then the second trial is on 1st June with the  corresponding objective being maximizing influences among users joining until 30th, June, etc. In the $r$-th trial, $\mathbb{EIM}$ first learns the network knowledge from influence diffusion feedbacks during previous $(r-1)$ months, and then selects $5$ users to maximize the influence during the $r$-th month based on the refined growing speed and evolving influences. The reward of $\mathbb{EIM}$ in this example is the influenced size during the $12$ months.

{\bfseries Remark.} In the present work, we focus on the case where the network exhibits fast growth while the promoted information remains effective in a far longer period. However, we do not need to rely on  any correlation between the speed of newly added users and that of influence propagation. As long as the network is evolving, $\mathbb{EIM}$ can adaptively capture its growing speed, and then selects seed users under the learnt growing speed in each trial. Even if the network is static, $\mathbb{EIM}$ is also applicable by setting $G_{t+T}=G_{t}$.

\section{Learning Network Evolution}\label{Modeling Evolving}
In this section, we dive into the first step, i.e, learning the future netowrk evolution during influence diffusion in the proposed $\mathbb{EIM}$ framework. To unfold, we need to address the following two questions: (1) How the newly added users connect with existing users; (2) How many new users will join in during influence diffusion.

\subsection{ Preferential Attachment (PA) Rule}

For the first question, we adopt the well-known Barab$\acute a$si-Albert (BA) model \cite{BA1, BA2} to characterize the evolution of social networks. BA model is capable of well capturing the typical features, i.e., power-law degree distribution, shrinking diameter and clustering structure that exist in most real social networks.
The evolution under BA model is interpreted as follows: a new node joins the network at each time slot $\Delta t$, and establishes $m$ new edges with the existing nodes ($m$ is a constant) \cite{sigmoids, BA2}.
Let $V_{t}$ denote the set of users at time $t$, and $d_{n}^{t}$ denote the current degree of node $v_{n}\in V_{t}$. For a newly added user at time $t$, it establishes a new edge with a chosen existing user $v_{s}$ in each time slot $\Delta t$ according to the rule of Preferential Attachment (PA), meaning that  the probability of choosing $v_{s}$ is proportional to its current degree. Then the remaining $(m-1)$ edges are respectively established in next $(m-1)$ time slots in the same manner. 

{\bf Remark.} Although $m$ is set as a constant in the BA model \cite{BA1, BA2}, it can still capture the evolution of most networks, with the statistical property of real social networks being that each newly added node expectedly establishes a same number of new edges \cite{sigmoids}. The BA model will also be empirically justified in Section \ref{datasets} under various real datasets, all of which exhibit the phenomenon of ``Richer gets richer".

Under the PA rule, the expected degree of node $v_{n}$ at time slot $t+\Delta t$ is equal to

\vspace{-3mm}
 \begin{equation}
\label{PR}\mathbb{E}(d_{n}^{t+\Delta t})=d_{n}^{t}\cdot \left(1+\frac{1}{\sum_{v_{j}\in V_{t}}d_{j}^{t}+1}\right).
\vspace{-1mm}
\end{equation}

Given the number of users in evolution at time $t$ is $n(t)$, the period $T$ of each trial in $\mathbb{EIM}$ is consisted of $m[n(t+T)-n(t)]$ evolving slots since there are $[n(t+T)-n(t)]$ newly added users and each user brings $m$ new edges during the time span $T$. Based on the PA rule, Lemma \ref{degree} gives the expected degree of a given node in evolution.

\begin{lemma}\label{degree}
Given the degree of node $v_{n}$ at time $t$ is $d_{n}^{t}$ and the period $T$ of each trial,  we have
\begin{equation}
\notag \mathbb{E}(d_{n}^{T+t})=d_{n}^{t}\cdot \prod_{s=1}^{m[n(t+T)-n(t)]} \left(1+\frac{1}{\sum_{v_{j}\in V_{t}}d_{j}^{t}+(2s-1)}\right).
\end{equation}
\end{lemma}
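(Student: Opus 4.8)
The plan is to obtain $\mathbb{E}(d_{n}^{T+t})$ by iterating the single-slot update in Eqn.~(\ref{PR}) over every evolving slot contained in one period and then telescoping the resulting product. Write $D_{t} := \sum_{v_{j}\in V_{t}} d_{j}^{t}$ for the total degree at time $t$, and $N := m[n(t+T)-n(t)]$ for the number of evolving slots in a period of length $T$, which is exactly the count established in the paragraph preceding the lemma (one slot per new edge, $m$ edges per new user, $n(t+T)-n(t)$ new users). Let $d_{n}^{(s)}$ denote the random degree of $v_{n}$ after $s$ of these slots, so that $d_{n}^{(0)} = d_{n}^{t}$ and $d_{n}^{(N)} = d_{n}^{T+t}$. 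I would then prove by induction on $s$ that $\mathbb{E}(d_{n}^{(s)}) = d_{n}^{t}\prod_{k=1}^{s}\bigl(1+\frac{1}{D_{t}+(2k-1)}\bigr)$, so that the lemma is recovered as the case $s=N$.

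The key observation that makes the induction go through is that the normalization driving the PA rule evolves \emph{deterministically}. Each evolving slot creates exactly one new edge, which raises the total degree of the network by exactly $2$ (one unit at each endpoint), regardless of which node the PA rule happens to select. Hence at the start of the $s$-th slot the total degree of all present nodes is the non-random quantity $D_{t}+2(s-1)$. Applying Eqn.~(\ref{PR}) to the $s$-th slot with this value in place of $D_{t}$ gives the conditional update $\mathbb{E}\bigl(d_{n}^{(s)}\mid d_{n}^{(s-1)}\bigr) = d_{n}^{(s-1)}\bigl(1+\frac{1}{(D_{t}+2(s-1))+1}\bigr) = d_{n}^{(s-1)}\bigl(1+\frac{1}{D_{t}+(2s-1)}\bigr)$. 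Because the multiplicative factor is a deterministic constant, taking expectations and invoking the tower property lets it pull outside, so $\mathbb{E}(d_{n}^{(s)}) = \mathbb{E}(d_{n}^{(s-1)})\bigl(1+\frac{1}{D_{t}+(2s-1)}\bigr)$; combined with the inductive hypothesis this advances the induction, and telescoping over $s=1,\dots,N$ yields the stated product.

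The only point that genuinely needs care is the bookkeeping that reconciles the constant ``$+1$'' in Eqn.~(\ref{PR}) with the factor $2s-1$ in the product: the ``$+1$'' is the same fixed smoothing term at every slot, whereas the growth is carried entirely by the total-degree term, which increases by $2$ per slot, giving $(D_{t}+2(s-1))+1 = D_{t}+(2s-1)$. I expect this, together with the verification that the total degree really does increase by exactly $2$ per slot, to be the main obstacle, since it is precisely what guarantees that the PA denominator is non-random and therefore that the per-step factor leaves the expectation cleanly. Once that is pinned down, linearity of expectation reduces the remainder of the argument to routine telescoping.
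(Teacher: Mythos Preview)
Your proposal is correct and follows essentially the same route as the paper: iterate Eqn.~(\ref{PR}) over the $m[n(t+T)-n(t)]$ slots, use that each new edge raises the total degree deterministically by~$2$ so that the $s$-th denominator is $D_{t}+(2s-1)$, and telescope. The only cosmetic difference is that the paper first treats the case $m=1$ and then extends to general $m$, whereas you handle the general case directly; your explicit use of the tower property is, if anything, cleaner than the paper's informal chaining of expectations.
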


The proof for Lemma \ref{degree} is shown in Appendix \ref{Lemma 4.1}.

 Under PA rule, Lemma \ref{degree} returns the expected degrees of the existing users determined by the given growing speed $n(t)$. However, how to determine the network growing speed function $n(t)$ for a given specific network? This is the second question to be answered in this section, and will be addressed in the following.
\subsection{Learning Networks' Growing Speed}\label{Learning nodes growing speed}
Now we proceed to illustrate the network growing speed learning method in $\mathbb{EIM}$, which answers the second question posted at the beginning of this section. Note that $\mathbb{EIM}$ is a bandit-based framework, and our method for growing speed learning utilizes the bandits feedback. Before we give the learning method for the growing speed $n(t)$, we need to understand how the real network grows. In reality, as noted in Section \ref{intro}, the network growing speed $n(t)$ is affected by multiple factors.
To elaborate, at time $t$, the $n(t)$ existing users prefer to attract new users to join the network, while the total population $N$ of who can join is limited \cite{sigmoids}. As a result, the growing speed is constrained by the term $[N-n(t)]$. On the other hand, users exhibit decaying interests  $\frac{\beta}{t^{\theta}}$ in attracting users to join \cite{sigmoids}, in a similar manner to the susceptible infected (SI) model in epidemiology \cite{SI}. The exponent $\theta$ reflects the growing dynamics such as power law, linear, sub-linear, etc. Jointly considering the above factors, we adopt the Nettide-node model \cite{sigmoids} to characterize the networks' growing speed, which is expressed as
\vspace{-1mm}
\begin{equation}\label{speed}
\frac{dn(t)}{dt}=\frac{\beta}{t^{\theta}}n(t)[N-n(t)].
\vspace{-1mm}
\end{equation}
The Nettide-node model has been previously empirically justified over real social network data (e.g., Facebook, Wechat, Google-plus and arXiv, etc) \cite{sigmoids} in terms of its effectiveness in capturing networks' growing speed, with an error of less than $3\%$. However, under the assumption of unknown future network topology, the parameters (i.e., $\beta, \theta$ and $N$) of a specific evolving network are unknown in advance. Thus determining the network growing speed becomes learning the three parameters $\beta, \theta$ and $N$ in Eqn. (\ref{speed}).  To this end, we propose a fully adaptive particle learning method to adaptively capture the nodes growing speed. In the particle learning method, we use each particle to represent a possible network growth speed and 
the definition of particles is given below.
\vspace{-1mm}
\begin{definition}\label{particle}
(Particle.) Each particle $\rho_{i}$ represents a growing speed function with given prior parameters ($\beta_{i}$, $\theta_{i}$ and $N_{i}$), i.e., $\frac{dn_{i}(t)}{dt}=\frac{\beta_{i}}{t^{\theta_{i}}}\cdot n_{i}(t)[N_{i}-n_{i}(t)]$. And they will be resampled based on their weights $w_{i}$ in each trial.
\vspace{-1mm}
\end{definition}

Given the definition, before we show the learning process, we briefly introduce the main idea of particle learning. We first take each possible growing speed function into Lemma \ref{degree} to predict the future degrees of existing nodes, which serve as the prior value of the corresponding particle. With the observed real degrees serving as the posterior value, the difference between a particle's prior value and the posterior value is used to determine its weight, which quantifies its reliability in reflecting real growing speed. Based the above main idea, we now move to the elaborate learning process that relies on the general resample-propagate process, which is considered as an optimal and fully adaptive framework in particle learning \cite{particle}.
In correspondence to the {\bf EIM} problem, the resampling and propagation phase respectively refer to the growing speed refining and evolution prediction described below.


\underline{Growing speed learning:} In the $1$-st trial, the particle learning is initialized by a set of particles $\mathcal{P}^{1}$ with randomly sampled prior parameters ($\beta$, $\theta$ and $N$) from their possible ranges, which will also be empirically presented in Section \ref{performance of particle}. With the progress of $\mathbb{EIM}$, the simulated evolving process under each particle are proceeded in parallel. Specifically, at the beginning the of the $r$-th trial (i.e., the timestamp at $T^{r}$), we first take the growing function $n_{i}(t)$ into Lemma \ref{degree} to compute the expected degrees of current nodes until the end of the $r$-th trial (i.e., the timestamp at $T^{r+1}$). We use $\mathbb{E}_{i}(d_{e}^{r+1})$ to denote the expected degree of node $v_{e}$ until time $T^{r+1}$ under the condition that the growing speed is $n_{i}(t)$, and $\mathbb{E}_{i}(d_{e}^{r+1})$ serves as the prior value of particle $\rho_{i}$. The detailed derivations for $\mathbb{E}_{i}(d_{e}^{r+1})$ is deferred to Appendix \ref{derivations for added degrees}. Upon the influence diffusion ended at time $T^{r+1}$, the real degrees of the influenced nodes are counted to compute the posterior value.  Consider the fact that social medias (e.g., Twitter and Weibo) can track the activities of their users such as one user retweeting a tweet forwarded by another user \cite{retweet}\cite{semi-bandit}, in influence diffusion, the neighbors of influenced users and only the neighbors of influenced users can be observed. Certainly, the influenced users as well as their current degrees are also observable. Let $O(T^{r})$ denote the set of nodes that are influenced in the $r$-th trial, and  $O(T^{r})\cap\big(\cup_{i=1}^{r-1}O(T^{i})\big)$ denote those that are influenced not only in the $r$-th trial but also in one or more of the previous $(r-1)$ trials. For each node $v_{e} \in O(T^{r})\cap\big(\cup_{i=1}^{r-1}O(T^{i})\big)$, given its last observed time being $T^{(e, 0)}$ and the corresponding degree being  $d_{e}^{(e, 0)}$, the prior value of particle $\rho_{i}$ is equal to 
\begin{equation}\label{added nodes}
\Delta n_{i}(T^{r+1})=\sum_{O(T^{r})\cap\big(\cup_{i=1}^{r-1}O(T^{i})\big)}\left (\mathbb{E}_{i}(d_{e}^{r+1})-d_{e}^{(e,0)}\right), 
\end{equation}which is the sum of the expected incremental degrees of nodes in $O(T^{r})\cap\big(\cup_{i=1}^{r-1}O(T^{i})\big)$. On the other hand, when $v_{e}$ is influenced in the $r$-th trial, its real degree at time $T^{r}$ is observed, and we denote it by $d_{e}^{r+1}$. Thus the real degrees of nodes in  $O(T^{r})\cap\big(\cup_{i=1}^{r-1}O(T^{i})\big)$ can be taken as the ground truth in particle learning, and the posterior value of the particles is  determined as 
\begin{equation}\label{added nodes1}
\Delta n(T^{r})=\sum_{O(T^{r})\cap\big(\cup_{i=1}^{r-1}O(T^{i})\big)}\left (d_{e}^{r+1}-d^{(e,0)} \right). 
\end{equation}Under the prior value  $\Delta n_{i}(T^{r})$ and the posterior value $\Delta n(T^{r})$  which are respectively determined in Eqn. (\ref{added nodes}) and  Eqn. (\ref{added nodes1}), the weight of particle $\rho_{i}$ is inversely proportional to the square error between $\Delta n(T^{r})$ and $\Delta n_{i}(T^{r})$. That is, 
\begin{equation}\label{particle weight}
w_{i}(T^{r})\propto 1/ (\Delta n(T^{r})-\Delta n_{i}(T^{r}))^{2}.
\end{equation}

Based on the weights of particles, a resampling process is conducted to resample particles set $\mathcal{P}^{r}$ from those in $\mathcal{P}^{r-1}$  with the number proportional to their weights, and the total number always satisfies  $|\mathcal{P}^{r}|=M\, (0\leq r\leq R)$. The objective of resampling phase is to resample the particles whose growing functions near the ground truth as more new particles, and simultaneously kill those with large deviations from the ground truth. Following the resampling phase, the propagation phase, which corresponds to the evolution prediction in {\bf EIM} problem, is conducted to predict the real network evolution with the resampled particles.

\underline{Evolution prediction:} Following the resampling phase, we compute the expected incremental degrees of nodes in $V^{r+1}$ until time $T^{r+2}$ under each resampled particle (i.e., $\mathbb{E}_{i}(\Delta d_{e}^{r+2}$), $\rho_{i}\in \mathcal{P}^{r}$). 
And the expected incremental degree of $v_{e}$ from $T^{(e,0)}$ to $T^{r+1}$ can be computed as
\begin{equation}\label{added degrees2}
\mathbb{E}_{i}(\Delta d_{e}^{r+2})=\mathbb{E}_{i}(d_{e}^{r+2})-d^{(e,0)}.
\end{equation}Then we set the incremental degree of node $v_{e}$ as the average of expectation under each particle, i.e., $\mathbb{E}(\Delta d_{e}^{r+2})=\sum_{i=1}^{M}\frac{1}{M}\mathbb{E}_{i}(\Delta d_{e}^{r+2}$)), which represents the expected number of new neighbors of current nodes until time $T^{r+2}$ and is also the quantization of their attractiveness during $T^{r+1}$ to $T^{r+2}$. 
\SetAlCapSkip{0.2em}
\begin{algorithm}[htbp]
	\begin{small}
		\begin{spacing}{0.8}
			\SetAlgoLined
			
			// \textit{Particle learning in the $r$-th trial}\\
\KwIn{Influenced nodes set $O(T^{r})$, particles set: $\mathcal{P}^{r-1}$\;}
\KwOut{Particles set: $\mathcal{P}^{r}$, incremental degree: $\mathbb{E}(\Delta d_{e}^{r+1})(v_{e}\in V^{r}) $\;}  
// \textit{Resampling phase}\\
Count the number of newly added nodes: $\Delta n(T^{r})$;\\
\For{each $\rho_{i} \in \mathcal{P}^{r-1}$}{
\For{each $v_{e} \in O(T^{r})\cap\big(\cup_{i=1}^{r-1}O(T^{i})\big)$}{

Compute expected degree: $\mathbb{E}_{i}(d_{e}^{r})$\;
}
Compute prior value: $\Delta n_{i}(T^{r})$ (Eqn. (\ref{added nodes}))\;
Compute its weight: $w_{i}(T^{r})$ (Eqn. (\ref{particle weight}))\;

}
Resample particles with weights: $\mathcal{P}^{r-1}\rightarrow \mathcal{P}^{r}$\;
// \textit{Propagation phase}\\
\For{each $v_{e} \in V^{r+1}$}{
Compute incremental degree:  $\mathbb{E}(\Delta d_{e}^{r+2})=\sum_{i=1}^{M}\frac{1}{M}\mathbb{E}_{i}(\Delta d_{e}^{r+2})$\;}

\Return 
$\mathbb{E}(\Delta d_{e}^{r+2}) (v_{e}\in V^{r+1})$ and $\mathcal{P}^{r}$.
			\end{spacing}
		\end{small}
		\caption{{ Learning network evolution ({\bf Evo-NE)}.}}\label{PLA}
	\end{algorithm}

The pseudo code of the above particle learning process, which is mainly composed of resampling-propagation phases, is further summarized in Algorithm \ref{PLA} called {\bf Evo-NE}. Algorithm \ref{PLA}  takes the influenced nodes during $T^{r-1}$ to $T^{r}$ and particles set $\mathcal{P}^{r-1}$ as the input, and the prior value of each particle is computed as Eqn. (\ref{added nodes}). Then the particles in $\mathcal{P}^{r-1}$ are resampled as new particles $\mathcal{P}^{r}$ based on their weights determined by Eqn. (\ref{particle weight}). Following the resampling phase, we compute the expected incremental degrees of nodes in $V^{r}$ with the resampled phase in $\mathcal{P}^{r}$ representing the predicted network evolution during $T^{r}$ to $T^{r+1}$. The complexity of Algorithm \ref{PLA} is shown as below. 
			
{\bfseries Complexity.} In resampling phase,  {\bf Evo-NE} needs to traverse all the nodes under each particle in $\mathcal{P}^{r-1}$ to compute the prior value and weight of each particle. Then in the propagation phase, the expected incremental degree of each node in $V^{r}$ under each resampled particle is computed to predict the network evolution. In the $r$-th trial, the number of particles is $M$ and the number of nodes under each particle is scaled as $O(|V^{r}|)$, thus the network evolution learning algorithm {\bf Evo-NE} in $r$-th trial costs $O(M|V^{r}|)$ time.

\section{Learning  Evolving Influences}\label{learning}

Section \ref{Modeling Evolving} has illustrated the first step in $\mathbb{EIM}$ for learning network evolution. Now, we move to the second step of $\mathbb{ EIM}$ framework illustrated in Figure \ref{EvoIMMoverview}. That is, we need to learn the unknown influences $I(S, G_{t+T})$ among users to facilitate the accurate influenced size estimation over the predicted network $G_{t+T}$. Our  methodology of influence learning is presented as below.

{\bfseries Evolving influences modeling.} In the $r$-th trial, the objective is to maximize the influenced size over the target network $G^{r+1}$.  For any node pair $u_{i}$ and  $u_{j}$ in $G^{r+1}$, let $e$ denote the edge between $u_{i}$ and  $u_{j}$, and let $w_{e,r}$ denote weight of edge from  $u_{i}$ to $u_j$ during time $[T^{r}, T^{r+1})$. Built upon the widely used IC model depicted by Definition \ref{IC}, $u_{i}$ can successfully activates $u_{j}$ with a probability equal to $w_{e,r}$ during time $[T^{r}, T^{r+1})$. 
However, the traditional IC model cannot be directly applied to determine the weights in {\bf EIM} problem since: (1) The weights of newly established edges remain unknown in advance; (2) The weights may exhibit random dynamics with network evolution. The reason behind is that real-world factors such as users' interests of propagated contextual information and the closeness of user relations may be dynamic in evolution \cite{retweet}. For example, new edges are established when users make new friends, and the weights of edges may strengthen over time until they become stable close friends. In contrast, a pair of partners may drift apart after their cooperation has ended. Thus the weights of edges in evolution may randomly become larger or lower over time with decaying fluctuations. To jointly consider such features and the periodical learning-deciding framework of $\mathbb{EIM}$, we discretize the variations of the edges' weights and characterize each weight as a Gaussian random walk presented below, where its fluctuation from $T^{r}$ to $T^{r+1}$ can be represented by a Gaussian noise added to $w_{e,r}$.

{\bfseries Evolving weights of edges.} Let $w_{e,r}$ denote the value of $w_{e}$ in the $r$-th trial. For a new edge $e$ that establishes during $(T^{r-1}, \,T^{r}]$,  under the Gaussian random walk mode, we let the initial value of the weight $w_{e}$ follow a Gaussian distribution with $w_{e, r-1}\sim \mathcal{N}( \overline{w}_{e, r-1}',\, \mathbf{\Sigma}_{e, r-1})=\mathcal{N}( \overline{w}_{0},\, \mathbf{\Sigma}_{0})$ and $r_{e, 0}=r-1$. Here, $\overline{w}_{e, r-1}'$ and $\mathbf{\Sigma}_{e, r-1}$ respectively denote  the mean and variance of $w_{e,r-1}$'s distribution. 
Then the variation of $w_{e}$ is defined with a Markov process as below
\vspace{-2mm}
\begin{equation}\label{weight update}
 w_{e, r}= w_{e, r-1}+\mathbf{v}_{e, r}, \quad  \mathbf{v}_{e, r}\sim \mathcal{N}(0,  \Delta\mathbf{\Sigma}_{e,r}),
 \vspace{-1mm}
\end{equation}where $\mathbf{v}_{e, r}$ denotes the Gaussian random noise to characterize the variation of $w_{e}$ in the $r$-th trial and $\Delta\mathbf{\Sigma}_{e, r}=\frac{ \mathbf{\Sigma}_{0}}{(r-r_{e, 0})^{k}} (k>0)$. 

For the above evolving influences, recall that in Section 3.2, under the bandit-based framework of $\mathbb{EIM}$, we treat them as the arms and leverage the edge-level feedbacks to update their esimated values. In detail, let $w_{e}$ denote the weight of edge between $u_{i}$ and $u_{j}$. For a user $u_{i}$ being influenced in the $r$-th trial, he will try to influence his neighbor $u_{j}$ successfully with probability $w_{e,r}$, thus edge $e$ is triggered. Since a successful influence can bring a more influenced user, we model the reward obtained from edge $e$ as a binary reward $z_{e,r}$ with success denoted by $1$ and failure denoted by $0$, which is leveraged as the feedback to refine the distribution of $w_{e,r}$. Since the weight $w_{e}$ changes over time with a Gaussian random walk, in the $r$-th trial, it follows a Gaussian distribution after accumulating pervious random walks, which is denoted by $w_{e,r}\sim \mathcal{N}(\overline{w}_{e, r}',  \mathbf{\Sigma}_{e,r})$.
Thus, to estimate the real value of the weights provided with the Gaussian statistical properties, we adopt the Kalman Filter as the refining method for the distributions of evolving influences, as described below.

{\bfseries Kalman Filter based refining method.}
Let a binary variable $z_{e,r}$ denote the reward obtained from the triggered edge $e$ in the $r$-th trial, 
referring Kalman Filter theory \cite{kalman}, the mean $\overline{w}_{e, r}'$ and variance $\mathbf{\Sigma}_{e, r}$ of the weight of edge $e$ in the $r$-th trial is refined with
\vspace{-1mm}
\begin{align}
\label{mean}&\overline{w}_{e, r}'=\overline{w}_{e, r-1}'+\mathbf{G}_{e,r}\cdot(z_{e,r}-\overline{w}_{e, r-1}');\\
\label{square}&\mathbf{\Sigma}_{e, r}=\mathbf{\Sigma}_{e, r-1}+\Delta\mathbf{\Sigma}_{e, r}-\mathbf{G}_{e,r}\mathbf{\Sigma}_{e,r-1}.
\vspace{-1mm}
\end{align}Here, $\mathbf{G}_{e,r}(z_{e,r}-\overline{w}_{e, r-1}')$ and $\mathbf{G}_{e,r}\mathbf{\Sigma}_{e,r-1}$ are the correction from Kalman filte. And $\mathbf{G}_{e,r}$ is the Kalman Gain in refinement to quantify the correction from the new observation $z_{e, r}$, which is determined as follows.
 \vspace{-2mm}
\SetAlCapSkip{0.2em}
\begin{algorithm}[htbp]
	\begin{small}
		\begin{spacing}{0.8}
			\SetAlgoLined
			
		
			// \textit{Edge weight refining in the $r$-th trial}

\KwIn{Observed edges from time $T^{r-1}$ to $T^{r}$\;}
\KwOut{Refined distribution of each edge' s weight $w_{e,r}'$\;}
{\bfseries Process:} \\
Set  $w_{e, r}\sim \mathcal{N}( \overline{w}_{0},\, \mathbf{\Sigma}_{0})$ for each first observed edge;\\
\For{each observed edge $e$ in $E^{r}$}{
Compute $\mathbf{Q}_{e,r}=\mathbf{\Sigma}_{e, r-1}+1$\;
Compute $\mathbf{G}_{e,r}=\mathbf{\Sigma}_{e, r-1}\cdot \mathbf{Q}_{e,r}^{-1}$\;
Update $\overline{w}_{e, r}'=\overline{w}_{e, r-1}'+\mathbf{G}_{e,r}\cdot(z_{e,r}-\overline{w}_{e, r-1}')$\;
Update $\mathbf{\Sigma}_{e, r}=\mathbf{\Sigma}_{e, r-1}+\Delta\mathbf{\Sigma}_{e, r}-\mathbf{G}_{e,r}\mathbf{\Sigma}_{e,r-1}$\;
}
\For{each edge unobserved edge $e$ in $E^{r}$}{
Update $\overline{w}_{e, r}'=\overline{w}_{e, r-1}'$\;
Update $\mathbf{\Sigma}_{e, r}=\mathbf{\Sigma}_{e, r-1}+\Delta\mathbf{\Sigma}_{e, r}$\;
}
\end{spacing}
\Return  $\overline{w}_{e, r}'$, $\mathbf{\Sigma}_{e, r}$ for each edge.
\caption{{ Evolving influence learning ({\bfseries Evo-IL}).}}\label{iib}
	
		\end{small}

\end{algorithm} 

\begin{lemma}\label{Kalman Gain}
 The Kalman Gain in the refinement of $w_{e}$ in the $r$-th trial is determined by
 \begin{displaymath}
\mathbf{G}_{e,r}=\mathbf{\Sigma}_{e, r-1}\cdot \mathbf{Q}_{e,r}^{-1},
 \end{displaymath}
where $\mathbf{Q}_{e,r}=\mathbf{\Sigma}_{e, r}+1$ denotes variance of the activating result via $e$.
\vspace{-1mm}
\end{lemma}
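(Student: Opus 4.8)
The plan is to derive the gain $\mathbf{G}_{e,r}$ as the coefficient of the affine update that minimizes the posterior error variance of the weight estimate, i.e., to run the textbook minimum-mean-square-error argument behind the Kalman filter, specialized to the scalar linear-Gaussian model implied by Eqns.~(\ref{weight update})--(\ref{square}). First I would fix the two ingredients of that model. The state transition is the Gaussian random walk of Eqn.~(\ref{weight update}), so the one-step prediction of $w_{e,r}$ before seeing the reward is simply the previous posterior mean $\overline{w}_{e,r-1}'$, and the associated prior error variance is $\mathbf{\Sigma}_{e,r-1}$. The observation is the binary reward $z_{e,r}$, which under the IC model (Definition~\ref{IC}) is an unbiased Bernoulli sample of the true weight, $\mathbb{E}[z_{e,r}\mid w_{e,r}]=w_{e,r}$; I would write this as $z_{e,r}=w_{e,r}+n_{e,r}$ with zero-mean measurement noise $n_{e,r}$ whose variance is normalized to $1$, and which is independent of the prior estimation error by the history-independence of activations in Definition~\ref{IC}.

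Next I would form the affine estimator of Eqn.~(\ref{mean}), $\hat{w}_{e,r}=\overline{w}_{e,r-1}'+\mathbf{G}_{e,r}(z_{e,r}-\overline{w}_{e,r-1}')$, and express its error $\varepsilon_{r}=w_{e,r}-\hat{w}_{e,r}$ in terms of the prior error $\varepsilon_{r}^{-}=w_{e,r}-\overline{w}_{e,r-1}'$ and the measurement noise, obtaining $\varepsilon_{r}=(1-\mathbf{G}_{e,r})\varepsilon_{r}^{-}-\mathbf{G}_{e,r}n_{e,r}$. Squaring and taking expectations, and using the independence of $\varepsilon_{r}^{-}$ and $n_{e,r}$ established above, the cross term vanishes and the posterior variance becomes $\mathbf{\Sigma}_{e,r}=(1-\mathbf{G}_{e,r})^{2}\mathbf{\Sigma}_{e,r-1}+\mathbf{G}_{e,r}^{2}$. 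I would then minimize this convex quadratic in $\mathbf{G}_{e,r}$ by setting its derivative to zero, $-2(1-\mathbf{G}_{e,r})\mathbf{\Sigma}_{e,r-1}+2\mathbf{G}_{e,r}=0$, which rearranges to $\mathbf{G}_{e,r}=\mathbf{\Sigma}_{e,r-1}/(\mathbf{\Sigma}_{e,r-1}+1)=\mathbf{\Sigma}_{e,r-1}\,\mathbf{Q}_{e,r}^{-1}$, exactly the claimed form with $\mathbf{Q}_{e,r}=\mathbf{\Sigma}_{e,r-1}+1$ being the innovation (residual) variance $\mathrm{Var}(z_{e,r}-\overline{w}_{e,r-1}')$. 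The positive second derivative $2(\mathbf{\Sigma}_{e,r-1}+1)>0$ confirms this is the minimizer, and back-substituting $\mathbf{G}_{e,r}$ into $\mathbf{\Sigma}_{e,r}$ recovers the covariance update Eqn.~(\ref{square}).

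The main obstacle I anticipate is justifying the innovation-variance identity $\mathbf{Q}_{e,r}=\mathbf{\Sigma}_{e,r-1}+1$ and, relatedly, pinning down exactly how the process noise $\Delta\mathbf{\Sigma}_{e,r}$ enters. The clean statement requires (i) that the binary reward be treated as an unbiased measurement with an effective unit noise variance, which is a modeling normalization rather than the literal Bernoulli variance $w_{e,r}(1-w_{e,r})$, and (ii) that the prior error and measurement noise be uncorrelated, which is where the independence-of-history assumption in Definition~\ref{IC} does the real work. I would also remark that the lemma statement's ``$\mathbf{Q}_{e,r}=\mathbf{\Sigma}_{e,r}+1$'' should read $\mathbf{\Sigma}_{e,r-1}+1$, consistent with Algorithm~\ref{iib}, since the gain is computed from the \emph{predicted} (prior) variance before the measurement update; keeping these indices consistent between the gain, the innovation variance, and the covariance recursion is the only place where care is genuinely needed.
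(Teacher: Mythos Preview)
Your proposal is correct and follows essentially the same approach as the paper: both derive $\mathbf{G}_{e,r}$ by minimizing the mean-square estimation error of the affine update $\overline{w}_{e,r-1}'+M(z_{e,r}-\overline{w}_{e,r-1}')$, arriving at $\mathbf{G}_{e,r}=\mathbf{\Sigma}_{e,r-1}(\mathbf{\Sigma}_{e,r-1}+\sigma^2)^{-1}$ with the observation-noise variance set to its upper bound $\sigma^2=1$. Your derivation is in fact more explicit than the paper's (which only states the optimization and its outcome), and your remark that the lemma's ``$\mathbf{\Sigma}_{e,r}+1$'' should be ``$\mathbf{\Sigma}_{e,r-1}+1$'' is well taken and matches both Algorithm~\ref{iib} and the paper's own appendix proof.
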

 
The proof for Lemma \ref{Kalman Gain} is deferred to Appendix \ref{proof for Kalman Gain}.

On the other hand, when user $u_{i}$ is not influenced, edge $e$ is not triggered. Then the distributions for the non-triggered edges in the $r$-th trial evolve as: 
\begin{equation}\label{influence evolving}
\overline{w}_{e, r}'=\overline{w}_{e, r-1}', \,\mathbf{\Sigma}_{e, r}=\mathbf{\Sigma}_{e, r-1}+\Delta\mathbf{\Sigma}_{e, r}.
\end{equation}
The Kalman filter is used to refine the weights distributions of triggered edges, while the distributions for the non-triggered edges in the $r$-th trial evolve as Eqn. (\ref{influence evolving}). Combining the two cases of both triggered and non-triggered edges, Algorithm \ref{iib} shows the the  pseudo code for the evolving influence learning in $r$-th trial. It takes the activating results via each triggered edge as the reward to refine weights distributions of them, and outputs the updated weights distributions of edges in $E^{r}$. 

{\bf Complexity.} In each trial, the evolving influence learning algorithm {\bf Evo-IL} needs traverse both the triggered and non-triggered edges to update their distributions. Thus it costs $O(|E^{r}|)$ in the $r$-th trial where $E^{r}$ is the set of edges in $G^{r}$. 

According to the updated distributions, we adopt the Upper Confidence Bound (UCB) method to derive the estimating value of $w_{e,r}$, which is the reward distribution of arm $e$ in the $r$-th trial. In the traditional UCB framework \cite{semi-bandit}, given the mean value of the reward $A$ of an arm, and its variance $\Sigma$, its estimating value is determined by $A^{t}=A+c\sqrt{\Sigma}$. Accordingly, the estimating value of $w_{e}$ in the $r$-th trial is expressed as Definition \ref{we}. 

\begin{definition}\label{we}
The estimating value of $w_{e}$ in $r$-$th$ trial is:
\vspace{-1mm}
\begin{equation}\label{we1}
w_{e,r}'= \overline{w}_{e, r}'+c\sqrt{\mathbf{\Sigma}_{e, r}},
\vspace{-1mm}
\end{equation}where $c$ is a constant algorithm parameter in Linear generalization of UCB (LinUCB) \cite{semi-bandit}.
\end{definition}

{\bfseries Remark.}  Since the weight of each edge follows the Gaussian random walk, in the case when an edge is not observed in any trial, its variance increases with the number of trials as shown in Eqn. (\ref{weight update}).  At the same time, once an edge is observed in a trial, the distribution of edge weight distribution is refined by {\bf Evo-IL} from the reward of triggering the edge and the variance of the distribution is refined with Eqn. (\ref{square}). Since $\mathbf{\Sigma}_{e, r-1}+\Delta\mathbf{\Sigma}_{e, r}-\mathbf{G}_{e,r}\mathbf{\Sigma}_{e,r-1}<\mathbf{\Sigma}_{e, r-1}+\Delta\mathbf{\Sigma}_{e, r}$, the variance after the refinement is smaller than that in the unobserved case. Thus, according to the designing principle of UCB algorithm, the item $c\sqrt{\mathbf{\Sigma}_{e, r}}$ in Eqn (\ref{we1}) decreases with the number of observations.

\section{Evolving Seeds Selection}\label{evolving influence maximization}
Together with our solutions of learning network evolution in Section \ref{Modeling Evolving} and learning evolving influences in Section \ref{learning}, we are now able to embark on the evolving seed selection, which corresponds to the third step in each trial depicted in Figure \ref{EvoIMMoverview}.

\vspace{-1mm}
\subsection{Seeds Selection: Problem Reformulation}
As stated earlier, the seed selection in classical IM and {\bf EIM} problems are both NP-hard. And the proposed framework $\mathbb{EIM}$ aims at coping with the limitations of classical IM arisen from network evolution during influence diffusion. Note that our design in evolving seeds selection is not to jettison the pervious efforts in classical IM, but instead leverage the benefits of them wherever possible.
Thus, following the assumption in classical IM, the objective of seeds selection in $r$-th trial is to maximize the influence $\mathbb{E}(I(S^{r}, G^{r}))$ diffused to existing users in $G^{r}$. Meanwhile, in {\bf EIM}, the objective becomes maximizing $\mathbb{E}(I(S^{r}, G^{r+1}))$. However, the network structure of $G^{r+1}$ remains unknown in advance, and the known users are those having been observed until time $T^{r}$. To tackle this dilemma, we first generate an intermediate graph $\mathbb{G}^{r}$ so that the network evolution from $T^{r}$ to $T^{r+1}$ can be captured with prediction. Here, $\mathbb{G}^{r}$ is called as Intermediate Evolving Graph whose set of users are  those having known until time $T^{r}$, and each of them is attached with a weight that quantifies his influences to future users. The influences among users in $\mathbb{G}^{r}$ are estimated as Definition \ref{we} shown in Section \ref{learning}. Thus the {\bf EIM} problem becomes to maximize the sum of the weights of influenced existing users. This enables us to leverage the benefits of the well-studied classical IM techniques.  The Intermediate Evolving Graph is elaborated as follows.

{\bf Intermediate Evolving Graph $\mathbb{G}^{r}=(\mathbb{V}^{r}, \mathbb{E}^{r})$.}  The objective of generating $\mathbb{G}^{r}$ is to capture the network evolution with the weights attached to the existing known users. Before we construct the intermediate graph, we give the description of the influence diffusion process along with network evolution. Note that, under the IC model, each user only has a single chance to influence his neighbors after being influenced. For a user $u_{i}$ in $\mathbb{G}^{r}$, as described in Section \ref{Learning nodes growing speed}, he will expectedly have $\mathbb{E}(\Delta d_{i}^{r+1})$ potential neighbors until time $T^{r+1}$, with all the neighbors possibly influenced in the $r$-th trial if $u_{i}$ is influenced in $r$-th trial.
Thus, there are two possible cases of the $\mathbb{E}(\Delta d_{i}^{r+1})$ new neighbors: (1) if $u_{j}$ connects with $u_{i}$ before $u_{i}$ is influenced, then $u_{j}$ will be influenced with a probability of $w_{e,r}$ in the $r$-th trial; (2) if $u_{j}$ connects with $u_{i}$ when $u_{i}$ is influenced during the survival time of an IM campaign, $u_{j}$  will also have the chance to be influenced since, in reality, a newly acquainted friend on Twitter may sometimes review the tweets recently made. Since the edges between $u_{i}$ and such $\mathbb{E}(\Delta d_{i}^{r+1})$ potential neighbors have never been triggered, the weights of them are all estimated as the initial value, i.e., $\left(\overline{w}_{0}+c\sqrt{\mathbf{\Sigma}_{0}}\right)$. By above analysis, if $u_{i}$ becomes influenced in the $r$-th trial, he can further bring an expected number of  $\left(\overline{w}_{0}+c\sqrt{\mathbf{\Sigma}_{0}}\right) \cdot\mathbb{E}(\Delta d_{i}^{r+1})$ influenced users into account. Therefore, we set the weight of $u_{i}$ as $C_{i,r}=\mathbb{E}(\Delta d_{i}^{r+1}) \cdot \left(\overline{w}_{0}+c\sqrt{\mathbf{\Sigma}_{0}}\right)+1$, where the $1$ represents himself. Upon attaching the weight of each user over the Intermediate Evolving Graph, we reformulate the {\bf EIM} problem in Definition \ref{problem statement} as:

{\bfseries Problem Statement.} Let $I(S,  v_{i}, \mathbb{G}^{r})$ denote the probability that seed set $S$ can influence $v_{i}$ under IC model over the known structure of $\mathbb{G}^{r}$, where $v_{i}$ represents user $u_{i}$. Since $v_{i}$ can influence both himself and his potential neighbors with an expected number being $C_{i,r}$, the influence of $S$ via $v_{i}$ is equal to $I(S,  v_{i}, \mathbb{G}^{r})C_{i,r}$. Thus the expected influence of $S$ on the whole network $\mathbb{G}^{r}$ can be computed as $I(S, \mathbb{G}^{r})=\sum_{v_{i}\in \mathbb{V}^{r}}I(S,  v_{i}, \mathbb{G}^{r})C_{i, r}$. And the objective of evolving seed selection over $\mathbb{G}^{r}$ becomes
\vspace{-1mm}
\begin{equation}\label{problem}
S^{r}_{opt}=\mathop{\arg\max}_{S\subseteq \mathbb{V}^{r}}I(S, \mathbb{G}^{r}), \, |S|=K.
\vspace{-1mm}
\end{equation}
Before introducing the solution to Eqn. (\ref{problem}), we first demonstrate its key properties as stated in Lemma \ref{submodular}, which enables us to resolve it with performance guarantee. The corresponding proof of Lemma \ref{submodular} is available in Appendix \ref{hardnessproof2}. 
\vspace{-2mm}
\begin{lemma}\label{submodular}
The influence function in Eqn. (\ref{problem}) is monotonous and submodular.
\end{lemma}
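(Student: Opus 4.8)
The plan is to reduce the claim to the classical submodularity result of Kempe et al.\ for the IC model, exploiting the fact that the weights $C_{i,r}$ are constants fixed in advance by the learned attractiveness and therefore independent of the seed set $S$. First I would invoke the standard \emph{live-edge} (triggering-set) equivalence for the IC model on the fixed, known topology $\mathbb{G}^{r}$: independently declare each edge $e$ ``live'' with probability equal to its estimated weight $w_{e,r}$ and ``blocked'' otherwise. This produces a random subgraph $g$, and a node $v_{i}$ is influenced by $S$ exactly when it is reachable from some seed in $S$ along the live edges of $g$. Writing $R_{g}(S)$ for the set of nodes reachable from $S$ in realization $g$, we have $I(S, v_{i}, \mathbb{G}^{r})=\Pr_{g}[v_{i}\in R_{g}(S)]$, so that
\begin{equation}
\notag I(S, \mathbb{G}^{r})=\sum_{v_{i}\in\mathbb{V}^{r}} C_{i,r}\,\Pr_{g}[v_{i}\in R_{g}(S)]=\mathbb{E}_{g}\Big[\sum_{v_{i}\in\mathbb{V}^{r}} C_{i,r}\,\mathbf{1}[v_{i}\in R_{g}(S)]\Big].
\end{equation}

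Next I would fix a single realization $g$ and analyze the inner set function $\sigma_{g}(S)=\sum_{v_{i}\in\mathbb{V}^{r}} C_{i,r}\,\mathbf{1}[v_{i}\in R_{g}(S)]$. Because reachability in a fixed graph satisfies $R_{g}(S)=\bigcup_{s\in S}R_{g}(\{s\})$, $\sigma_{g}$ is a \emph{weighted coverage function} with non-negative weights $C_{i,r}\geq 1$. Monotonicity is immediate: for $A\subseteq B$ we have $R_{g}(A)\subseteq R_{g}(B)$, and since each $C_{i,r}\geq 0$ this gives $\sigma_{g}(A)\leq\sigma_{g}(B)$. For submodularity I would compute the marginal gains directly, using $R_{g}(A\cup\{x\})\setminus R_{g}(A)=R_{g}(\{x\})\setminus R_{g}(A)$: since $A\subseteq B$ forces $R_{g}(A)\subseteq R_{g}(B)$, the newly covered set $R_{g}(\{x\})\setminus R_{g}(B)$ is contained in $R_{g}(\{x\})\setminus R_{g}(A)$, so weighting by the non-negative $C_{i,r}$ yields $\sigma_{g}(A\cup\{x\})-\sigma_{g}(A)\geq\sigma_{g}(B\cup\{x\})-\sigma_{g}(B)$.

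Finally I would lift these two properties through the expectation. Since $I(S,\mathbb{G}^{r})=\mathbb{E}_{g}[\sigma_{g}(S)]$ is a probability-weighted sum (a non-negative convex combination) over realizations $g$ of the monotone submodular functions $\sigma_{g}$, and both monotonicity and submodularity are preserved under non-negative linear combinations, the aggregate $I(S,\mathbb{G}^{r})$ is itself monotone and submodular, which is exactly the claim.

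I expect the only genuinely delicate step to be the submodularity of the per-realization weighted coverage function $\sigma_{g}$; once the live-edge representation is in place, everything else is bookkeeping through a non-negative average. The essential point distinguishing this from the classical IM argument is that the node weights $C_{i,r}$ encode the predicted attractiveness to future users but do not depend on $S$, so they act as fixed coverage weights and cannot disrupt either property.
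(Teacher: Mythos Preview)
Your proposal is correct and follows essentially the same route as the paper: both fix a live-edge realization of $\mathbb{G}^{r}$, argue monotonicity and submodularity of the weighted reachable set on that realization via the containment $R_{g}(A)\subseteq R_{g}(B)$ for $A\subseteq B$, and then pass to the expectation. If anything, you are slightly more explicit than the paper in spelling out that the node weights $C_{i,r}$ are fixed non-negative constants and that the final averaging step preserves both properties.
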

%

\subsection{Seed Selection: Algorithm Design}

Over the intermediate evolving graph $\mathbb{G}^{r}$, we leverage the Influence Maximization via Martingale (IMM) framework to solve the {\bf EIM} problem transformed in Eqn. (\ref{problem}), which focuses on estimating the influence diffusion size of a given seed set $S$ over a general graph $G$ (i.e., $\mathbb{E}(I(S, G))$ via the  Reverse-Reachable Sets (RR-sets). The RR-sets \cite{IMM} is currently the most efficient way to resolve the classical IM problems and has been adopted by many IM techniques. Under the IMM framework of interests, RR-sets are utilized to largely improve the efficiency in estimating the influence diffusion size while still achieving the near optimal solution to classical IM problem.

 {\bf RR-sets.} Let $v$ be a given node in a general graph $G$, the RR-set for $v$ is the set of nodes that can reach it through active paths over $G$, which is generated as follows.  A deterministic copy $g$ of $G$ is firstly sampled, in which each edge $e$ is active with probability $w_{e}$ and inactive with probability $1-w_{e}$. Then the RR-set $R_{v}$ for node $v$ is generated by including into $R_{v}$ all the nodes that can reach $v$ via a backward Breadth-First Search (BFS) from $v$, and $v$ is treated as the root node of $R_{v}$. The key property of RR-set is that the probability that a seed set $S$ can influence a node $v$ over $G$ equals to the probability that $S$ overlaps $R_{v}$ \cite{IMM}. Thus given a randomly chosen node $v$, the expected influence of $S$ on $v$ is $\mathbb{E}[\mathbb{I}(S\cap R_{v} \neq \emptyset)]$, where $\mathbb{I}(\cdot)$ is the indicator function. Next, we will briefly review the general IMM framework \cite{IMM} before illustrating our solution to the seed set selection in {\bf EIM} problem.


\SetAlCapSkip{0.2em}
\begin{algorithm}[h]
	\begin{small}
			\SetAlgoLined
			
			// \textit{Influence maximization in evolving social networks}

\KwIn{Generated graph $\mathbb{G}^{r}$, number of selected seeds $K$\;}
\KwOut{A seeds set $S^{r}$\;}  

$l'=l\cdot(1+\log 2/ \log n)$\;
$\mathcal{R}$\,=\,Sampling $(\mathbb{G}^{r}, K, \varepsilon, l')$\;
$S^{r}$\,=\, NodeSelection $(\mathcal{R}, K)$\;
\Return $S^{r}$.
		\end{small}
		\caption{Evolving influence maximizaton ({\bfseries Evo-IMM}). }\label{EvoIMM}		
	\end{algorithm}
\begin{algorithm}[htbp]
	\begin{small}
		\begin{spacing}{0.8}
			\SetAlgoLined
		
			\KwIn{ Nodes in  intermediate evolving graph: $\mathbb{V}^{r}$ , ERR-sets: $\mathcal{R}$\;}
\KwOut{Sampled node $v$\;}
Initialize $n'=0$ and $\lambda_{0}=0$\;
\For{ each $v_{e}\in \mathbb{V}^{r}\backslash \mathcal{R}_{root}$}{
$\lambda_{e}=C_{e,r}$,\, $n'=n'+C_{e,r}$\;}

Divide interval $\big[0,\, n'\big]$ into $\big[0, \,\lambda_{1}\big], \big[ \lambda_{1},\, \lambda_{1}+\lambda_{2}\big],...,\big[
\sum_{i=1}^{|\mathbb{V}^{r}\backslash \mathcal{R}_{root}|-1}\lambda_{i}, \,n'\big] $\;
Randomly sample a constant $\alpha$ from interval $[0, \, 1]$\;
$n_{\alpha}=n'\cdot \alpha$\;
\If{$\sum_{j=0}^{e-1}\lambda_{j}\leq n_{\alpha}\leq\sum_{j=0}^{e}\lambda_{j}$}{$v=v_{e}$\;}
\end{spacing}
\Return node $v$.
					\end{small}
		\caption{Priority-based sampling ($\mathbb{V}^{r}, \mathcal{R}$)}\label{Priority}
		
	\end{algorithm}
	
\SetAlCapSkip{0.2em}
\begin{algorithm}[h]
	\begin{small}
		\begin{spacing}{0.8}
			\SetAlgoLined
\KwIn{Sampled ERR-sets $\mathcal{R}$,  number of selected seeds $K$\;}
\KwOut{A seed set $S^{r}$\;}

Initialize a seed set $S^{r}=\emptyset$\;
\For{k=1:K}{
Identify the node $v_{e}$ that maximizes $F_{\mathcal{R}}(S^{r}\cup v_{e})-F_{\mathcal{R}}(S^r)$\;
$S^r=S^r\cup \{v_{e}\}$\;
}
\end{spacing}
\Return $S^{r}$.

		\end{small}
		\caption{NodeSelection ($\mathcal{R}, K$)}\label{NodeSelection}	
	\end{algorithm}


{\bfseries General IMM.} The general IMM framework consists of two phases, i.e., sampling and node selection. The former phase iteratively generates a sufficiently large number of random RR-sets  to ensure the accuracy of influence estimation. And the latter one greedily selects a seed set of size $K$ to maximize the number of covered RR-sets. 
Let $\mathcal{R}=\{R_{1}, R_{2},..., R_{\theta}\}$ denote the generated RR-sets with the corresponding root nodes set being $\mathcal{R}_{root}$, and  $x_{1}, x_{2},..., x_{\theta}$ be the binary random variables denoting whether or not the corresponding RR-set is covered by the selected seeds set $S$. Then the influence of $S$ can be estimated by 
\vspace{-1mm}
{\small \begin{equation}\label{IMMequation}
\mathbb{E}[I(S)]=\frac{n}{\theta}\cdot \mathbb{E} \left(\sum_{i=1}^{\theta}x_{i}\right), 
\vspace{-1mm}
\end{equation}}where $n$ is the number of nodes in the networks. By Chernoff Bound, $\mathbb{E}[I(S)]$ can accurately estimate the influence of $S$ if $\theta$ is sufficiently large. 

 Borrowing the idea of IMM, our evolving seed selection algorithm {\bf Evo-IMM} is presented as follows. 

{\bf Evolving IMM algorithm ({\bfseries Evo-IMM}).} The key idea of {\bf Evo-IMM} is to apply the general IMM framework over the Generated Evolving Graph $\mathbb{G}^{r}$ for selecting the seed users in the $r$-th trail. Since each node in $\mathbb{G}^{r}$ is attached with a weight to quantify its influence to potential users, the RR-sets sampled from  $\mathbb{G}^{r}$ is correspondingly attached with a weight that equals to the weight of its root node. We call such set as ERR-set.
Let $n'=\sum_{v_{e}\in\mathbb{V}^{r}}C_{e,r}$ be the weighted sum of nodes in $\mathbb{V}^{r}$, and let $\theta' =\sum_{v_{e}\in\mathcal{R}_{root}}C_{e,r}$ denote the weighted sum of root node in ERR-sets. By Eqn. (\ref{IMMequation}) and the linearity of the expectation, the influence of a seed set $S$ over the generated graph $\mathbb{G}^{r}$ can be estimated as 
\vspace{-1mm}
{\small
\begin{equation}\label{EIMMequation}
\mathbb{E}[I(S, \mathbb{G}^{r})]=\frac{n'}{\theta'}\cdot \mathbb{E} \left(\sum_{i=1}^{\theta}x_{i}C_{i,r}\right). 
\vspace{-1mm}
\end{equation}}

Algorithm \ref{EvoIMM} shows the basic steps of {\bfseries Evo-IMM} in the $r$-th trial, in a same manner to the  general IMM. However, considering the influence of existing users in $\mathbb{G}^{r}$ on potential users, there are two major differences between {\bfseries Evo-IMM} and the general IMM. \underline{Firstly}, in sampling phase, a priority-based sampling method (Algorithm \ref{Priority}) is proposed to preferentially samples the ERR-sets whose root nodes have higher weights. Given the selected RR sets $\mathcal{R}$ and their root nodes set $\mathcal{R}_{root}$, the nodes in $\mathbb{V}^{r}\backslash \mathcal{R}_{root}$ with higher weights have higher probabilities to be sampled as next root nodes.
\underline{Secondly}, the Nodeselection phase (Algorithm \ref{NodeSelection}) in {\bfseries Evo-IMM} focuses on selecting the seed set with the maximum sum of weights of covered ERR-sets. Let $F_{\mathcal{R}}(\cdot)$ denote the weighted sum of covered ERR-sets. Nodeselection iteratively selects $K$ nodes with the maximum marginal gain to maximize $F_{\mathcal{R}}(S^{r})$. 

Based on the analysis in the general IMM framework, we refine the detailed settings of {\bf Evo-IMM}  in the way that it can meet both the high effectiveness and efficiency in seeds selection.
Algorithm \ref{Sampling} presents the Sampling phase in {\bfseries Evo-IMM}  which focuses on sampling enough number of ERR-sets to guarantee the accuracy for estimating $\mathbb{E}[I(S, \mathbb{G}^{r})]$. The parameters, i.e., $\varepsilon'$ and $\theta_{i}$ in Algorithm \ref{Sampling} are set based on the following lemma derived from IMM \cite{IMM} with the aim of ensuring the accuracy of influence estimation.
\SetAlCapSkip{0.2em}
\begin{algorithm}[h]
	\begin{small}
			\SetAlgoLined
			
			// \textit{ERR-sets sampling in line 2 of {\bfseries Evo-IMM} }

\KwIn{Intermediate evolving graph $\mathbb{G}^{r}$ , number of selected seeds $K$, error quantization parameters  $\varepsilon, l'$\;}
\KwOut{ERR-sets $\mathcal{R}$\;}  

Initialize a set $\mathcal{R}=\emptyset$ and a parameter $LB=1$, $LR=0$\;
$\varepsilon'=\sqrt{2}\cdot \varepsilon$\;
\For{$i=1: (\log_{2}n -1)$}{
$x=n'/ 2^{i}$, $\theta_{i}=\frac{(2+\frac{2}{3}\varepsilon')(\log \binom{n}{K}+l\cdot\log n+\log\log_{2}n)}{\varepsilon'^{2}\cdot x}$, $\theta'=0$\;
\While{$LR\leq \theta_{i}$}{
$v_{e}$=Priority-based sampling ($\mathbb{V}^{r}, \mathcal{R}$)\;
Generate the ERR-set for $v_{a}$ and insert it into $\mathcal{R}$\;
$LR=LR+1$, $\theta'=\theta'+C_{e,r}$\;
}
$S_{i}$\,=\,NodeSelection($\mathcal{R}, K$)\;
\If{$\frac{n'}{\theta'}F_{\mathcal{R}}(S_{i})\geq (1+\varepsilon')\cdot x$}{$LB=\frac{n'}{\theta'}F_{\mathcal{R}}(S_{i})/(1+\varepsilon')$\;
{\bfseries Break} \;}
}
$\theta= \frac{2n'((1-1/e)\cdot\alpha+\beta)^{2}}{LB\cdot\varepsilon^{2}}$, $\theta'=0$\;
\While{$\theta'\leq \theta$}{
$v_{e}$=Priority-based sampling ($\mathbb{V}^{r}, \mathcal{R}$)\;
Generate the ERR-set for $v_{a}$ and insert it into $\mathcal{R}$\;
$\theta'=\theta'+C_{e,r}$\;}

\Return $\mathcal{R}$.
		\end{small}
		 \caption{{Sampling $(\mathbb{G}^{r}, K, \varepsilon, l')$ in {\bf Evo-IMM}. }}\label{Sampling}
		
	\end{algorithm}

\begin{lemma}\label{IMM12}
In Algorithm \ref{Sampling}, define $\varepsilon_{1}=\varepsilon\frac{\alpha}{(1-1/e)\cdot \alpha+\beta}$ where
 {\small \begin{align} \alpha&=\sqrt{l'\log n+ \log 2} ,
 \\  \textnormal{and } \beta&=\sqrt{(1-1/e)\cdot\left(\log  \binom{n}{K}+l'\log n+\log 2\right) }.
\end{align}}Then with at least $(1-\frac{1}{n^{l'}})$ probability, the number of generated ERR-sets in sampling phase satisfies 
\begin{equation}\label{R}
|\mathcal{R}|\geq \frac{(2-2/e)\cdot n'\cdot \log(\binom{n}{K}\cdot 2n^{l'})}{(\varepsilon-(1-1/e)\cdot\varepsilon_{1})^{2}\cdot OPT} \quad(Theorem \, 2 \, in\,\cite{IMM}).
\end{equation}Suppose Inequality (\ref{R}) holds. By the properties of greedy algorithms, with at least  $(1-\frac{1}{2n^{l'}})$ probability, the returned set $S^{r}$ satisfies  
\begin{equation}\label{R1}
\frac{n'}{\theta'} F_{\mathcal{R}}(S^{r})\geq (1-1/e)(1-\varepsilon_{1})\cdot OPT \quad (Lemma\, 3 \, in\,\cite{IMM}),
\end{equation}where $OPT$ denotes the weighted sum of expected influenced nodes by the optimal seeds set with size $K$. 
\end{lemma}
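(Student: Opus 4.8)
The plan is to adapt the two-part martingale analysis of the original IMM framework \cite{IMM}, whose Theorem 2 and Lemma 3 are exactly the results that inequalities (\ref{R}) and (\ref{R1}) specialize to in our weighted setting. The only conceptual change is that every sampled set is an ERR-set carrying the weight $C_{i,r}$ of its root rather than an unweighted RR-set. A convenient device for importing the IMM arguments is the \emph{weight-as-multiplicity} reduction: replace each node $v_e\in\mathbb{V}^r$ by $C_{e,r}$ identical virtual copies, so that the total virtual population is $n'=\sum_{v_e\in\mathbb{V}^r}C_{e,r}$ and a single backward BFS rooted at $v_e$ certifies reachability for all of its copies at once. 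The first thing I would establish is that, under this reduction, the priority-based sampler of Algorithm \ref{Priority} together with the weighted objective $F_{\mathcal{R}}$ and normalizer $\theta'$ yield an unbiased estimate $\mathbb{E}\big[\tfrac{n'}{\theta'}F_{\mathcal{R}}(S)\big]=I(S,\mathbb{G}^r)$ for every fixed $S$; this identification is the hinge on which both inequalities turn.

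For inequality (\ref{R}), I would fix a seed set $S$ and analyze the stopping behavior of the doubling loop in Algorithm \ref{Sampling}. Writing $y_i$ for the contribution of the $i$-th ERR-set to the coverage objective $F_{\mathcal{R}}(S)$, the reduction shows that the normalized partial sums estimate $I(S,\mathbb{G}^r)/n'$ in expectation, so the centered sums form a martingale with increments bounded by $C_{\max}=\max_e C_{e,r}$. The martingale Chernoff bound used in \cite{IMM} then applies with $n$ replaced by $n'$. The loop stops at the first scale $i$ at which the empirical influence $\frac{n'}{\theta'}F_{\mathcal{R}}(S_i)$ exceeds $(1+\varepsilon')x$; combining this stopping condition with the concentration bound and the definition of $OPT$ yields, with probability at least $1-n^{-l'}$, the stated lower bound on $|\mathcal{R}|$, faithfully transcribing the proof of Theorem 2 in \cite{IMM}.

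For inequality (\ref{R1}), I would invoke the monotonicity and submodularity of the weighted coverage objective established in Lemma \ref{submodular}, so that the greedy NodeSelection of Algorithm \ref{NodeSelection} returns a set $S^r$ whose empirical weighted coverage is within a factor $(1-1/e)$ of the best size-$K$ set over $\mathcal{R}$. Conditioning on (\ref{R}) to guarantee that $|\mathcal{R}|$ is large enough, a second application of the martingale concentration bound---taken over both $S^r$ and the true optimum and unionized across the $\binom{n}{K}$ candidate seed sets---lifts this empirical guarantee to the true influence, producing the multiplicative factor $(1-1/e)(1-\varepsilon_1)$. The parameters $\alpha$ and $\beta$ enter precisely as the square roots of the failure-probability and union-bound terms, reproducing the derivation of Lemma 3 in \cite{IMM}.

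The step I expect to be the main obstacle is the reweighting inside the concentration argument. Whereas IMM tracks unit-increment sums, here each increment is scaled by a root weight $C_{e,r}$ and the normalizer is the weighted count $\theta'=\sum_{v_e\in\mathcal{R}_{root}}C_{e,r}$ rather than the raw number of sets; I would need to check that the martingale differences stay bounded by $C_{\max}$ (which the multiplicity reduction renders harmless, since a weight-$C_{e,r}$ node splits into unit-weight virtual copies) and that $\theta'$ concentrates about its mean so the substitution $n\to n'$ is legitimate throughout. Verifying that the priority sampler and the weighted objective $F_{\mathcal{R}}$ are mutually consistent---equivalently, that the multiplicity reduction is exact and the resulting estimator is unbiased---is the crux; once this is secured, the remaining algebra is a literal copy of the IMM constants.
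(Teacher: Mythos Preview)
Your proposal is consistent with the paper's treatment, but it is worth noting that the paper does not actually supply its own proof of this lemma: the two inequalities are stated with inline citations to Theorem~2 and Lemma~3 of \cite{IMM}, and the surrounding text simply says that the lemma ``lays the foundation'' for the subsequent approximation-ratio analysis (Lemma~\ref{approximation ratio}), which \emph{is} proved in the appendix. In other words, the paper treats the weighted extension as a direct import from IMM and leaves the reader to accept that replacing $n$ by $n'$ and RR-sets by ERR-sets goes through unchanged.

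Your plan therefore goes beyond what the paper does: you actually sketch \emph{why} the import is legitimate, via the weight-as-multiplicity reduction that turns each node $v_e$ into $C_{e,r}$ virtual copies so that priority-based sampling becomes uniform sampling over an $n'$-sized virtual population and $F_{\mathcal{R}}$ becomes an ordinary unweighted coverage count. This is the natural device, and once it is in place the IMM martingale argument transfers with $n\to n'$ exactly as you describe. The one caveat you flag---that the weights $C_{e,r}$ need not be integers---is real but harmless: the estimator $\tfrac{n'}{\theta'}F_{\mathcal{R}}(S)$ is still unbiased for $I(S,\mathbb{G}^r)$ because the priority sampler draws root $v_e$ with probability $C_{e,r}/n'$ and the coverage indicator has mean $I(S,v_e,\mathbb{G}^r)$, so linearity of expectation gives the identity without any integrality assumption; the martingale-difference bound then uses $C_{\max}$ rather than $1$, which is absorbed into the constants. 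Your identification of this reweighting step as the crux is accurate, and it is precisely the step the paper elides by citing \cite{IMM} directly.
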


Lemma \ref{IMM12} indicates that {\bf Evo-IMM} samples a sufficient number of ERR-sets and returns a seed set which covers a large number of ERR-sets. The weighted sum of covered ERR-sets (i.e., $\frac{n'}{\theta'} F_{\mathcal{R}}(S^{r})$ in Eqn. (\ref{R1})) serves as the indicator of expected influence  (i.e., $\mathbb{E}[I(S^{r}, \mathbb{G}^{r})]$), which guarantees the effectiveness of seed selection in $\mathbb{EIM}$.
Next, we will present the theoretical performance guarantee of {\bf Evo-IMM} in seed selection from the perspective of both effectiveness and efficiency.

\subsection{Performance Analysis of Seed Selection.}Lemma \ref{IMM12} lays the foundation for analyzing the effectivenss of {\bfseries Evo-IMM}, and the detailed analysis is shown in Lemma \ref{approximation ratio}, with the proof shown in Appendix \ref{hardnessproof2}. 

\begin{lemma}\label{approximation ratio}
If the Inequalities (\ref{R}) and  (\ref{R1}) hold, with at least $(1-1/2n^{l'})$ probability, we have $\mathbb{E}[I(S^{r}, \mathbb{G}^{r})]\geq (1-1/e-\varepsilon)\cdot OPT$. 
\vspace{-1MM}
\end{lemma}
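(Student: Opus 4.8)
The plan is to convert the \emph{empirical} coverage guarantee of Inequality (\ref{R1}) into a bound on the \emph{true} expected influence $\mathbb{E}[I(S^{r},\mathbb{G}^{r})]$ through a single concentration argument, mirroring the final combination step of the IMM analysis but adapted to the weighted ERR-sets of \textbf{Evo-IMM}. Throughout I write $\hat{I}(S)=\frac{n'}{\theta'}F_{\mathcal{R}}(S)$ for the empirical estimator and let $S^{*}$ be an optimal size-$K$ seed set, so $OPT=\mathbb{E}[I(S^{*},\mathbb{G}^{r})]$. The starting observation is the weighted ERR-set property already used in Eqn. (\ref{EIMMequation}): because a root is drawn with probability proportional to its weight $C_{e,r}$ and each covered set is reweighted by $C_{i,r}$, a freshly sampled ERR-set is covered by a fixed $S$ in a way that makes $\hat{I}(S)$ an (essentially) unbiased estimator of $\mathbb{E}[I(S,\mathbb{G}^{r})]$.

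Taking the two hypotheses of the lemma as given, Inequality (\ref{R1}) supplies the coverage bound $\hat{I}(S^{r})\ge(1-1/e)(1-\varepsilon_{1})\,OPT$. It therefore only remains to show that $\hat{I}(S^{r})$ does not \emph{overestimate} the truth by more than $\varepsilon_{2}\,OPT$, where I set $\varepsilon_{2}:=\varepsilon-(1-1/e)\varepsilon_{1}$. Indeed, once this upper-deviation control is in hand, the two facts combine as
\begin{equation*}
\begin{aligned}
\mathbb{E}[I(S^{r},\mathbb{G}^{r})]
&\ge \hat{I}(S^{r})-\varepsilon_{2}\,OPT\\
&\ge \big[(1-1/e)(1-\varepsilon_{1})-\varepsilon_{2}\big]OPT
= (1-1/e-\varepsilon)\,OPT,
\end{aligned}
\end{equation*}
where the last equality is exactly the algebraic identity $(1-1/e)\varepsilon_{1}+\varepsilon_{2}=\varepsilon$ forced by the definition $\varepsilon_{1}=\varepsilon\,\alpha/((1-1/e)\alpha+\beta)$ in Lemma \ref{IMM12}. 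Note that $\alpha$ enters only through the black-boxed (\ref{R1}) (it governs the lower-tail concentration for the optimal set), so the component I must produce is the one controlled by $\beta$.

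The core step is thus the upper-tail estimate $\Pr[\hat{I}(S^{r})-\mathbb{E}[I(S^{r},\mathbb{G}^{r})]\ge\varepsilon_{2}\,OPT]\le\frac{1}{2n^{l'}}$, which is precisely the failure probability quoted in the lemma. Since $S^{r}$ is data-dependent, I would first fix an arbitrary size-$K$ set $S$ and apply a martingale Chernoff/Bernstein inequality to the partial sums $\sum_{i}x_{i}C_{i,r}$ (these form a martingale because the ERR-sets are produced sequentially under priority-based sampling), obtaining a per-set tail of order $\exp(-\varepsilon_{2}^{2}\,\theta'\,OPT/(2n'))$, and then take a union bound over all $\binom{n}{K}$ candidate seed sets. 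This is exactly where the $\log\binom{n}{K}$ inside $\beta$ and the factor $(\varepsilon-(1-1/e)\varepsilon_{1})^{2}=\varepsilon_{2}^{2}$ in the denominator of (\ref{R}) come from, so the sample size certified by (\ref{R}) is calibrated to push the union-bounded tail down to $\frac{1}{2n^{l'}}$.

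I expect the concentration step, rather than the arithmetic combination, to be the main obstacle. Unlike vanilla IMM the summands $x_{i}C_{i,r}$ are \emph{weighted} (not $0/1$) and are generated by \emph{adaptive, without-replacement} priority sampling, so the plain binary Chernoff bound does not apply directly. The care needed is (i) to set up the martingale with respect to the filtration generated by the successive ERR-sets, and (ii) to control the range and variance of the weighted increments $C_{i,r}$ so the exponent still scales like $\varepsilon_{2}^{2}\theta'OPT/n'$ and matches the $|\mathcal{R}|$ certified by (\ref{R}); bounding $C_{i,r}$ by its maximal value and invoking a Bernstein-type bound should keep the variance term from inflating the required sample size. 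The secondary point to verify carefully is that the priority-sampling weights together with the $C_{i,r}$-reweighting in $F_{\mathcal{R}}$ jointly preserve the unbiasedness of $\hat{I}(\cdot)$ asserted in the first paragraph, since that identity underpins the whole reduction to the IMM combination.
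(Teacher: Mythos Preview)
Your proposal is correct and matches the paper's approach: both fix a size-$K$ set, apply a Chernoff-type bound whose exponent scales like $\varepsilon_2^{2}\,\theta'\,OPT/n'$ with $\varepsilon_2=\varepsilon-(1-1/e)\varepsilon_1$, union-bound over the $\binom{n}{K}$ candidates, and combine with (\ref{R1}) via the identity $(1-1/e)\varepsilon_1+\varepsilon_2=\varepsilon$. The only cosmetic difference is that the paper phrases the argument as bounding the probability that NodeSelection returns a \emph{bad} set (one with $\mathbb{E}[I(S,\mathbb{G}^r)]<(1-1/e-\varepsilon)OPT$) rather than directly controlling the overestimation of $\hat I(S^r)$, and it simply invokes the standard Chernoff bound without the additional care you flag for the weighted, adaptively sampled ERR-sets.
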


Lemme 6.2 shows the accuracy of influence estimating and Lemme 6.3 shows the approximating ratio for the Nodeselection phase. Combing Lemmas 6.2 and 6.3, we can demonstrate the effectiveness of {\bf Evo-IMM}, as stated by Corollary \ref{probability}.

\begin{corollary}\label{probability}
Based on the union bound, {\bfseries Evo-IMM} returns a $(1-1/e-\varepsilon)$ approximate seeds set $S^{r}$ to the evolving IM problem with a probability of at least $1-1/2n^{l'}-1/2n^{l'}-1/n^{l'}=1-2/n^{l'}=1-1/n^{l}$.
\end{corollary}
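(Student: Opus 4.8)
The plan is to obtain the corollary purely by a union bound that collects the three sources of failure already quantified in Lemmas \ref{IMM12} and \ref{approximation ratio}, followed by an algebraic simplification that invokes the definition $l'=l\cdot(1+\log 2/\log n)$ fixed in line~1 of Algorithm \ref{EvoIMM}. No new probabilistic machinery is required; all the work lies in the careful bookkeeping of the conditional events and in a single logarithmic identity at the end.

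First I would isolate the three ``bad'' events whose avoidance is exactly what the preceding lemmas guarantee. By Lemma \ref{IMM12}, Inequality (\ref{R}) fails with probability at most $1/n^{l'}$; conditioned on (\ref{R}) holding, Inequality (\ref{R1}) fails with probability at most $1/2n^{l'}$. By Lemma \ref{approximation ratio}, conditioned on both (\ref{R}) and (\ref{R1}) holding, the target bound $\mathbb{E}[I(S^{r}, \mathbb{G}^{r})]\geq (1-1/e-\varepsilon)\cdot OPT$ fails with probability at most $1/2n^{l'}$. Writing the overall success probability as the product of the three conditional success probabilities and applying the elementary estimate $\prod_i (1-p_i)\geq 1-\sum_i p_i$, I would conclude
\[
\Pr\big[\mathbb{E}[I(S^{r}, \mathbb{G}^{r})]\geq (1-1/e-\varepsilon)\cdot OPT\big]\geq 1-\frac{1}{n^{l'}}-\frac{1}{2n^{l'}}-\frac{1}{2n^{l'}}=1-\frac{2}{n^{l'}}.
\]

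It then remains to rewrite $1-2/n^{l'}$ as $1-1/n^{l}$, i.e.\ to verify that the extra factor $n^{l'-l}$ collapses to exactly $2$ so that $n^{l'}=2n^{l}$. This is precisely where the definition of $l'$ from Algorithm \ref{EvoIMM} enters, together with the identity $n^{\log 2/\log n}=2$ (which holds since, after a change of logarithm base, $n^{\log 2/\log n}=e^{\ln 2}=2$). The only delicate point I anticipate is bookkeeping rather than mathematical depth: one must preserve the conditional ordering of the three events so that the lemmas chain together legitimately under the product inequality, and one must use the exact form of $l'$ that makes the residual factor equal $2$, so that the constant cancels to yield $1/n^{l}$ rather than an off-by-a-constant bound. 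Once these two points are pinned down, the corollary follows immediately.
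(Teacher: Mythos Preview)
Your approach is exactly the paper's: the corollary is stated without a separate proof, and its own wording already contains the argument you describe---a union bound over the three failure events from Lemma \ref{IMM12} (failure of (\ref{R}) with probability at most $1/n^{l'}$, and conditional failure of (\ref{R1}) with probability at most $1/2n^{l'}$) and Lemma \ref{approximation ratio} (conditional failure of the approximation guarantee with probability at most $1/2n^{l'}$), summing to $2/n^{l'}$.

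One algebraic point deserves care in your final simplification. With $l'=l\bigl(1+\log 2/\log n\bigr)$ from Algorithm \ref{EvoIMM} you get $l'-l=l\cdot\log 2/\log n$, hence $n^{l'-l}=\bigl(n^{\log 2/\log n}\bigr)^{l}=2^{l}$, not $2$. Thus $2/n^{l'}=2^{1-l}/n^{l}$, which equals $1/n^{l}$ only when $l=1$ and is \emph{at most} $1/n^{l}$ for $l\geq 1$ (the regime stipulated in Theorem \ref{ratio}). So the last ``$=$'' in the corollary should really be read as ``$\leq$'', and your claim that $n^{l'-l}$ collapses to exactly $2$ holds only in the $l=1$ case; for general $l\geq 1$ you obtain the bound $1-1/n^{l}$ as an inequality, which is all that is needed.
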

%

Corollary 6.4 manifests that {\bfseries Evo-IMM} can find a nearly optimal solution for the evolving IM problem with a high probability. Lemma \ref{complexity} further provides the polynomial time complexity that Evo-IMM enjoys. 

\begin{lemma}\label{complexity}
The time complexity of {\bfseries Evo-IMM} is $O\big((K+l) \big((n+m)+\frac{n}{OPT}\big)\log n/\varepsilon^{2}\big)$, where $n=|\mathbb{V}^{r}|$ and $m=|\mathbb{E}^{r}|$. 
\end{lemma}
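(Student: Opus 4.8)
The plan is to decompose the total running time of \textbf{Evo-IMM} (Algorithm~\ref{EvoIMM}) into three contributions — the number of generated ERR-sets, the per-set cost of generating one (priority-based root sampling plus backward BFS), and the greedy NodeSelection — and to bound each by transplanting the IMM machinery to the weighted Intermediate Evolving Graph.

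\emph{Step 1 (number of ERR-sets).} First I would invoke Lemma~\ref{IMM12}, whose final threshold is $\theta=\frac{2n'((1-1/e)\alpha+\beta)^{2}}{LB\,\varepsilon^{2}}$. Substituting the definitions of $\alpha$ and $\beta$ and using $\log\binom{n}{K}\le K\log n$ together with $l'=l(1+\log 2/\log n)=O(l)$, both $\alpha^{2}$ and $\beta^{2}$ are $O((K+l)\log n)$; combined with $LB=\Theta(OPT)$ (guaranteed by the doubling estimation loop of Algorithm~\ref{Sampling}) this yields a weighted threshold $\theta=O\big(\frac{(K+l)\,n'\log n}{OPT\,\varepsilon^{2}}\big)$. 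Since every node weight $C_{e,r}=\mathbb{E}(\Delta d_{e}^{r+1})(\overline{w}_{0}+c\sqrt{\mathbf{\Sigma}_{0}})+1=\Theta(1)$, we have $n'=\sum_{v_{e}}C_{e,r}=\Theta(n)$, and the actual count of generated sets is $R_{num}=\Theta(\theta)=O\big(\frac{(K+l)\,n\log n}{OPT\,\varepsilon^{2}}\big)$.

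\emph{Step 2 (generation and sampling cost).} For each ERR-set I would separate a fixed per-set overhead from the BFS cost. The priority-based sampler (Algorithm~\ref{Priority}) can be realized by building a cumulative-weight array once in $O(n)$ and drawing each root with a binary search, so its amortized per-set cost is $O(1)$, the $O(n)$ rebuild occurring only $O(\log n)$ times (once per outer iteration). Hence the pure per-set bookkeeping over all sets totals $O(R_{num})=O\big((K+l)\frac{n}{OPT}\log n/\varepsilon^{2}\big)$, which is exactly the $\frac{n}{OPT}$ term. The dominant remaining work is the backward BFS; I would bound $\mathbb{E}\big[\sum_{R}\mathrm{width}(R)\big]$, the total number of examined edges, by transplanting the width identity of IMM — relating the expected width of a random ERR-set to $\frac{m}{n}$ times the weighted single-node spread — to the weighted, priority-sampled setting, giving $O\big(\frac{(K+l)(n+m)\log n}{\varepsilon^{2}}\big)$. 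NodeSelection (Algorithm~\ref{NodeSelection}) performs weighted greedy coverage, touching each node-occurrence a constant number of times, so its cost is $O(\sum_{R}|R|)\le O(\sum_{R}\mathrm{width}(R))$ and is absorbed into the generation bound; its $O(\log n)$ repetitions inside the estimation loop form a geometric series dominated by the final call. Summing the three contributions gives $O\big((K+l)\big((n+m)+\frac{n}{OPT}\big)\log n/\varepsilon^{2}\big)$.

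\emph{Main obstacle.} The hard part will be the width analysis of Step~2: I must verify that replacing uniform root selection by the weight-proportional \emph{priority} sampling of Algorithm~\ref{Priority}, and attaching the weights $C_{e,r}$ to the ERR-sets, preserves the expectation identity that links the total BFS width to the weighted optimum $OPT$, so that the inherited IMM bound $O\big(\frac{(K+l)(n+m)\log n}{\varepsilon^{2}}\big)$ still holds. A secondary point to justify carefully is the boundedness of the weights $C_{e,r}$, which is precisely what lets the final bound be stated in terms of $n$ rather than $n'$.
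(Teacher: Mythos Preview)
Your decomposition---bound the number of ERR-sets, bound the expected BFS width via the IMM identity $n\cdot EPT\le m\cdot OPT$, absorb NodeSelection into the total width, and charge the priority-sampling overhead as a constant per generated set---is exactly the paper's argument, which splits the cost into ``Sampling \& NodeSelection'' and ``Priority-based sampling'' and then combines $|\mathcal{R}|=O\big((K+l)n\log n/(OPT\,\varepsilon^{2})\big)$ with $O(|\mathcal{R}|\cdot EPT)$.

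There is, however, one genuine gap in your Step~1: the assertion $C_{e,r}=\Theta(1)$ is not correct in general. By definition $C_{e,r}=\mathbb{E}(\Delta d_{e}^{r+1})(\overline{w}_{0}+c\sqrt{\mathbf{\Sigma}_{0}})+1$, and under the preferential-attachment rule the paper uses throughout (Lemma~\ref{degree}), the expected incremental degree $\mathbb{E}(\Delta d_{e}^{r+1})$ is proportional to the current degree $d_{e}$; hub nodes therefore carry weights that grow with the network, and $n'=\sum_{e}C_{e,r}$ is not $\Theta(n)$ uniformly. What you \emph{can} use without that assumption is $C_{e,r}\ge 1$, which gives the actual count of generated sets $|\mathcal{R}|\le\theta'\le\theta$ (the while-loop in Algorithm~\ref{Sampling} terminates once the accumulated weight $\theta'$ exceeds $\theta$), but that still leaves $n'$, not $n$, in the numerator of the bound. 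The paper's own proof does not justify this translation either: it simply writes $|\mathcal{R}|=O\big((K+l)n\log n/(OPT\,\varepsilon^{2})\big)$ with $n$ in place of $n'$, citing the general IMM analysis directly. So your identification of the priority-weighted width identity as the ``main obstacle'' is reasonable, but the step you flag as ``secondary'' (boundedness of $C_{e,r}$) is actually the one that does not go through as stated; the lemma should be read with the understanding that the $n$ versus $n'$ distinction is being elided in both your argument and the paper's.
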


The proof for Lemma \ref{complexity} is shown in Appendix \ref{proof for efficiency}. 
 
 Based on Corollary 6.4 and Lemma \ref{complexity}, we draw the conclusion that {\bf Evo-IMM} can efficiently solve the seeds selection in {\bf EIM} and simultaneously enjoys comparable approximation ratio and time costs to the general IMM framework in static networks, as stated in Theorem \ref{ratio}. 

\begin{theorem}\label{ratio}
 {\bfseries Evo-IMM} returns a $(1-1/e-\varepsilon)$-approximate seed set to {\bf EIM} problem with a probability of at least $(1-1/n^{l})(l\geq 1)$, and it runs in $O\left((K+l)((|\mathbb{V}^{r}|+|\mathbb{E}^{r}|)+\frac{|\mathbb{V}^{r}|}{OPT})\log |\mathbb{V}^{r}|/\varepsilon^{2}\right)$,  where $OPT$ refers to the expected influenced size of the optimal seed set.
 \end{theorem}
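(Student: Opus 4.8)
The plan is to assemble Theorem \ref{ratio} directly from the two pillars already in place: the effectiveness statement of Corollary \ref{probability} and the efficiency statement of Lemma \ref{complexity}. Since the theorem makes two independent assertions — an approximation-ratio-with-high-probability claim and a running-time claim — I would establish each separately and then state their conjunction.

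For the approximation guarantee, I would invoke Corollary \ref{probability}, which already asserts that {\bf Evo-IMM} returns a $(1-1/e-\varepsilon)$-approximate seed set with probability at least $1-1/n^{l}$. To keep the chain self-contained, I would recall that this rests on three ingredients supplied earlier: (i) the monotonicity and submodularity of the weighted objective established in Lemma \ref{submodular}, which is what permits the greedy NodeSelection (Algorithm \ref{NodeSelection}) to attain the $(1-1/e)$ factor on the weighted ERR-set coverage $F_{\mathcal{R}}(\cdot)$; (ii) the sample-complexity bounds of Lemma \ref{IMM12} (Inequalities (\ref{R}) and (\ref{R1})), guaranteeing that the weighted estimator of Eqn. (\ref{EIMMequation}) concentrates around the true influence; and (iii) the accuracy statement of Lemma \ref{approximation ratio}. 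A union bound over the three failure events of probabilities $1/2n^{l'}$, $1/2n^{l'}$, and $1/n^{l'}$ yields the net success probability $1-2/n^{l'}=1-1/n^{l}$ after substituting $l'=l(1+\log 2/\log n)$.

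For the running-time claim, I would invoke Lemma \ref{complexity}, which gives complexity $O\big((K+l)((n+m)+\frac{n}{OPT})\log n/\varepsilon^{2}\big)$ under the identifications $n=|\mathbb{V}^{r}|$ and $m=|\mathbb{E}^{r}|$; substituting these reproduces verbatim the bound stated in the theorem. I would additionally observe that the priority-based sampling (Algorithm \ref{Priority}) contributes only an $O(|\mathbb{V}^{r}|)$ cumulative-weight interval construction per root draw, which does not dominate, so the asymptotic order matches that of the static IMM framework.

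The genuinely delicate point — which I would flag but which is discharged by the preceding lemmas — is confirming that the two structural departures from static IMM, namely weighting each node by $C_{i,r}=\mathbb{E}(\Delta d_{i}^{r+1})(\overline{w}_{0}+c\sqrt{\mathbf{\Sigma}_{0}})+1$ and drawing ERR-set roots with probability proportional to these weights, leave both the unbiasedness of the estimator in Eqn. (\ref{EIMMequation}) and the submodularity of $F_{\mathcal{R}}$ intact. Once those facts are secured (from Lemma \ref{submodular} together with the normalization $\theta'=\sum_{v_{e}\in\mathcal{R}_{root}}C_{e,r}$), the theorem reduces to a transcription of Corollary \ref{probability} and Lemma \ref{complexity}, requiring no new estimate.
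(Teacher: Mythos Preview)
Your proposal is correct and follows essentially the same route as the paper: the paper derives Theorem \ref{ratio} simply by combining Corollary \ref{probability} (for the approximation guarantee) with Lemma \ref{complexity} (for the running time), exactly as you do. Your additional unpacking of what underlies Corollary \ref{probability} --- Lemmas \ref{submodular}, \ref{IMM12}, \ref{approximation ratio} and the union bound with the substitution $l'=l(1+\log 2/\log n)$ --- is faithful to the paper's development and not strictly needed, since the theorem is just the conjunction of those two prior results.
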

While Theorem \ref{ratio} summarizes the performance guarantee of  {\bfseries Evo-IMM}, Corollary \ref{overall complexity} derives  the complexity of the $r$-th trial in $\mathbb{EIM}$.
\begin{corollary}\label{overall complexity}
Together with the three steps as illustrated in Sections 4, 5 and 6, the complexity of $r$-th trial in $\mathbb{EIM}$ is $O\big(M|\mathbb{V}^{r}|+|\mathbb{E}^{r}|+(K+l)((|\mathbb{V}^{r}|+|\mathbb{E}^{r}|)+\frac{|\mathbb{V}^{r}|}{OPT})\log |\mathbb{V}^{r}|/\varepsilon^{2}\big)$. Here, $M$ is the number of particles.
 \vspace{-2mm}
\end{corollary}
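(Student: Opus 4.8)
The plan is to observe that Corollary \ref{overall complexity} follows directly by summing the per-step running times already established in Sections \ref{Modeling Evolving}, \ref{learning} and \ref{evolving influence maximization}, because the three steps of the $r$-th trial are executed sequentially rather than nested. Concretely, I would first recall the three component costs. The network-evolution learning algorithm {\bf Evo-NE} costs $O(M|V^{r}|)$ (the complexity analysis at the end of Section \ref{Modeling Evolving}), since in both the resampling and propagation phases it traverses the $O(|V^{r}|)$ nodes under each of the $M$ particles. The evolving-influence learning algorithm {\bf Evo-IL} costs $O(|E^{r}|)$ (the complexity analysis at the end of Section \ref{learning}), since it performs a single constant-time Kalman-filter update for each triggered and non-triggered edge. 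Finally, the seed-selection algorithm {\bf Evo-IMM} costs $O\!\left((K+l)((|\mathbb{V}^{r}|+|\mathbb{E}^{r}|)+\frac{|\mathbb{V}^{r}|}{OPT})\log |\mathbb{V}^{r}|/\varepsilon^{2}\right)$ by Theorem \ref{ratio}.

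The one point requiring care is the identification of the graph on which each step operates. Here I would note that the Intermediate Evolving Graph $\mathbb{G}^{r}=(\mathbb{V}^{r}, \mathbb{E}^{r})$ is, by its construction in Section \ref{evolving influence maximization}, built on exactly the users and edges that have been observed up to time $T^{r}$; hence $\mathbb{V}^{r}=V^{r}$ and $\mathbb{E}^{r}=E^{r}$, so that $|V^{r}|=|\mathbb{V}^{r}|$ and $|E^{r}|=|\mathbb{E}^{r}|$. This lets me rewrite the first two component costs as $O(M|\mathbb{V}^{r}|)$ and $O(|\mathbb{E}^{r}|)$, placing all three bounds in terms of the common quantities $|\mathbb{V}^{r}|$ and $|\mathbb{E}^{r}|$ so that no step is secretly accounting for a larger graph than the others.

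Since within a single trial the learning, refining and selection phases run one after another, their costs compose additively, giving a total per-trial complexity of
\[
O(M|\mathbb{V}^{r}|)+O(|\mathbb{E}^{r}|)+O\!\left((K+l)\Big((|\mathbb{V}^{r}|+|\mathbb{E}^{r}|)+\tfrac{|\mathbb{V}^{r}|}{OPT}\Big)\tfrac{\log |\mathbb{V}^{r}|}{\varepsilon^{2}}\right),
\]
which collapses to the claimed bound. I do not expect any genuine obstacle: the argument is entirely bookkeeping, and the only thing to get right is the $\mathbb{V}^{r}/V^{r}$ and $\mathbb{E}^{r}/E^{r}$ identifications, after which additivity of sequential phases makes the sum immediate.
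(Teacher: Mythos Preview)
Your proposal is correct and matches exactly what the paper does: the corollary is stated immediately after Theorem \ref{ratio} without a separate proof, because it is simply the additive combination of the three per-step complexities already derived in Sections \ref{Modeling Evolving}, \ref{learning} and \ref{evolving influence maximization}. Your care with the $\mathbb{V}^{r}/V^{r}$ and $\mathbb{E}^{r}/E^{r}$ identification is the only non-mechanical point, and it is the same identification the paper tacitly relies on when writing the bound in terms of $|\mathbb{V}^{r}|$ and $|\mathbb{E}^{r}|$.
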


 Notably, in each trial, the seeds returned by {\bf Evo-IMM} are selected under the influences represented by the estimated values in Definition \ref{we} since the real value of $w_{e,r}$ remains unknown in advance. Here, how is the quality of the selected seeds? And what is the gap between its quality and that selected under fully known influences? In the sequel, we answer those questions via performance analysis of $\mathbb{EIM}$.

\section {Performance Analysis of $\mathbb{EIM}$}\label{Performance Analysis}
Recall again that in Section 3.2, $\mathbb{EIM}$ is a bandit-based framework where the arms represent evolving influences refined in a manner depicted by Definition \ref{we}. Thus to demonstrate its theoretical performance guarantee, we provide the analysis for its {\emph Regret}, which is quantified by the loss of influenced size incurred by the bandits. Let $I(S_{opt}^{r}, G^{r+1})$ denote the expected influenced size of the seeds selected under the ideal condition that influences among users are all known, and let $I(S^{r}, G^{r+1})$ denote that of the seeds selected under the estimating values. Intuitively, the regret of the bandits in the $r$-th trial is equal to $I(S_{opt}^{r}, G^{r+1})-I(S^{r}, G^{r+1})$.  However, recall Lemma \ref{hardness}, the {\bf EIM} problem is NP-hard and its objective function is submodular. As a result, even under the ideal condition, the selected seed set can only be a suboptimal one and achieve an expected influenced size of $\beta I(S_{opt}^{r}, G^{r+1})$. Here $\beta$ is the approximating ratio in seeds selection. 
Thus the regret incurred by the bandits is defined as the scaled cumulative regret \cite{IMbandit} as follows:
 \vspace{-2mm}
\begin{definition}\label{regret bound 1}
(Scaled regret.) Given the approximating ratio of IM algorithm in step (3) is $\beta$, the regret $B$ over $R$ trials is equal to
\vspace{-1mm}
\begin{equation}
\mathbb{E}(B)=\sum_{r=1}^{R}\left( I(S_{opt}^{r}, G^{r+1})- \frac{1}{\beta}I(S^{r}, G^{r+1})\right),
 \vspace{-1mm}
\end{equation}where $S_{opt}^{r}$ is the optimal seeds set in the $r$-th trial and $S^{r}$ is the seeds set returned by $
\mathbb{ EIM}$.
\end{definition}
Based on Theorem \ref{ratio}, {\bf Evo-IMM} can return a $(1-1/e-\varepsilon)$-approximate solution with a probability of more than $(1-1/n^{l})$, thus $\beta=(1-1/e-\varepsilon)\cdot(1-1/n^{l})$ in $\mathbb{EIM}$. Under Definition \ref{regret bound 1}, the regret bound of $\mathbb{EIM}$ is the upper bound of the gap between $\sum_{r=1}^{R}I(S_{opt}^{r}, G^{r+1})$ and $\sum_{r=1}^{R}\frac{1}{\beta}I(S^{r}, G^{r+1})$ as we will disclose in Theorem \ref{friendship bound} shortly. 
$\mathbb{EIM}$ focuses on selecting seed users with the predicted network evolution and refined influences through {\bf Evo-IMM}. The reward of selected seeds, which reflect their qualities, is the expected number of influenced users over the target network. Under the same seed selection algorithm and target network, the quality of the selected seeds is dominated by the accuracy of influences estimating. Thus the regret of $\mathbb{EIM}$ in the $r$-th trial is dominated by the estimating error of influences, whose distributions are refined by {\bf Evo-IL} (Algorithm \ref{iib}), over the target network $G^{r+1}$.  
\begin{theorem}\label{friendship bound}
The regret bound of  $\mathbb{EIM}$  can be scaled as
\vspace{-1mm}
\begin{equation}
\mathbb{E}(B)\leq O\left(\sqrt{|E_{R+1}|\ln(R+1) R}\right).
\vspace{-1mm}
\end{equation}Here, $|E_{R+1}|$ denotes the number of edges until time $T^{R+1}$.
\end{theorem}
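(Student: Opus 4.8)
The plan is to follow the standard optimism-in-the-face-of-uncertainty analysis for combinatorial semi-bandits, but adapted to the two features that make $\mathbb{EIM}$ nonstandard: an arm set that grows as new edges emerge, and reward targets that drift according to a Gaussian random walk. First I would reduce the scaled regret of Definition \ref{regret bound 1} to an accumulated edge-weight estimation error. Since $I(S,G^{r+1})$ is monotone and submodular (Lemma \ref{hardness}) and is Lipschitz in the edge weights $\{w_{e,r}\}$ under the IC model, the per-trial gap $I(S_{opt}^{r},G^{r+1})-\frac{1}{\beta}I(S^{r},G^{r+1})$ can be controlled by the sum of estimation errors $|w_{e,r}'-w_{e,r}|$ over the set of edges triggered in the $r$-th diffusion, which I denote $T_{r}$. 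This bounded-smoothness step is what lets me pass from influence-size regret to per-arm confidence widths.

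Second, I would establish a clean event on which all confidence intervals hold simultaneously. Because each $w_{e,r}$ is Gaussian (the random walk of Eqn. (\ref{weight update}), refined by the Kalman filter, keeps $w_{e,r}\sim\mathcal{N}(\overline{w}_{e,r}',\mathbf{\Sigma}_{e,r})$), a Gaussian tail bound yields $|w_{e,r}-\overline{w}_{e,r}'|\leq c\sqrt{\mathbf{\Sigma}_{e,r}}$ with high probability for the constant $c$ of Definition \ref{we}. A union bound over the at most $|E_{R+1}|$ edges and the $R$ trials forces $c$ to scale like $\sqrt{\ln(|E_{R+1}|R)}$, which is the origin of the $\ln(R+1)$ factor in the final bound. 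On this event the UCB value $w_{e,r}'=\overline{w}_{e,r}'+c\sqrt{\mathbf{\Sigma}_{e,r}}$ genuinely upper bounds $w_{e,r}$, so by the optimism supplied by the $\beta$-approximate selection of Evo-IMM (Theorem \ref{ratio}), the per-trial regret is at most twice the sum of the confidence widths $c\sqrt{\mathbf{\Sigma}_{e,r}}$ over $T_{r}$.

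Third, I would show the widths contract with observations and then sum them. The key quantitative lemma is that the Kalman variance obeys $\mathbf{\Sigma}_{e,r}=O(\ln(R)/N_{e,r})$, where $N_{e,r}$ is the number of trials in which $e$ has been observed so far; this requires that the injected process noise $\Delta\mathbf{\Sigma}_{e,r}=\mathbf{\Sigma}_{0}/(r-r_{e,0})^{k}$ decay fast enough that the accumulated random-walk uncertainty stays bounded while repeated observations keep shrinking the variance. Granting this, the elementary inequality $\sum_{j=1}^{N}1/\sqrt{j}\leq 2\sqrt{N}$ together with Cauchy--Schwarz across the $\leq|E_{R+1}|$ edges and the $R$ trials collapses $\sum_{r=1}^{R}\sum_{e\in T_{r}}c\sqrt{\mathbf{\Sigma}_{e,r}}$ into $O(\sqrt{|E_{R+1}|\ln(R+1)R})$, which is the claimed bound once the Lipschitz constant and the factor $1/\beta$ are absorbed into the $O(\cdot)$.

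The hard part will lie in steps two and three precisely because of the growing, non-stationary arm set. Newly added edges enter with few observations and inflated variance, so I must argue that both the union bound and the Cauchy--Schwarz accounting still go through when edges are introduced at heterogeneous times $r_{e,0}$, and that edges observed only a handful of times contribute a controllable tail to the sum. More delicately, I must verify that the decaying $\Delta\mathbf{\Sigma}_{e,r}$ indeed prevents the Kalman tracking error from accumulating, so that the stationary-case width decay $\mathbf{\Sigma}_{e,r}=O(1/N_{e,r})$ survives up to logarithmic factors despite the drifting target. Controlling this drift-versus-observation trade-off is the crux of the argument; once it is secured, the remaining summation is routine.
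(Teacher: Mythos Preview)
Your high-level plan---optimism under a clean event, reduce per-trial regret to triggered-edge confidence widths, then Cauchy--Schwarz and a variance-contraction argument---matches the paper's proof architecture. Two technical choices differ, and they are worth noting.

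First, for the bounded-smoothness step you invoke a general Lipschitz property of $I(S,\cdot)$ under the IC model. The paper instead passes to an \emph{evolving forest} (MIA-style) approximation so that $I(S^r,\vec{w}_r)$ becomes an explicit polynomial in the edge weights; this lets them write $I(S^r,\vec{w}_r')-I(S^r,\vec{w}_r)\le\sum_{v}\sum_{e\in T_{r,v}}\mathbb{I}(o_e^r)(w_{e,r}'-w_{e,r})$ with an explicit path-multiplicity factor $N_{r,e}$ that they later absorb into a network constant $C$. Your route is cleaner conceptually but hides the constant; theirs is more restrictive structurally but makes the dependence explicit.

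Second, and more substantively, your ``key quantitative lemma'' $\mathbf{\Sigma}_{e,r}=O(\ln R/N_{e,r})$ is \emph{not} what the paper proves, and in a drifting setting a pointwise bound in terms of the observation count $N_{e,r}$ alone is delicate: the variance at trial $r$ depends on \emph{when} the observations occurred, not just how many, because the accumulated drift over an observation gap can dominate. The paper sidesteps this by working at the level of sums: it shows directly from the Kalman recursion that $\mathbf{\Sigma}_r\le 3/r^{k/2}$ when the edge is observed in every trial, and then, via a careful case analysis over missing observation patterns, that $\sum_{r\,\text{observed}}\mathbf{\Sigma}'_{e,r}\le\sum_{r=1}^R 3/r^{k/2}$ regardless of which trials are skipped. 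For $k\ge 2$ this sum is $O(\ln R)$, and the stated bound $O(\sqrt{|E_{R+1}|\ln(R+1)R})$ is precisely the $k\ge 2$ regime; for $0<k<2$ the paper only obtains $O(R^{1-k/4})$. So your intuition that ``the decaying $\Delta\mathbf{\Sigma}_{e,r}$ must be fast enough'' is exactly right, but the workable object is the observed-trial sum, not a pointwise $1/N_{e,r}$ decay, and the harmonic-sum trick $\sum_j 1/\sqrt{j}\le 2\sqrt{N}$ is not how the paper closes the argument.
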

\begin{proof}
We divide the whole proof into 4 steps.

{\bf 1. Overall regret bound over the $R$ trials.}

In the $r$-th trial, we denote vectors $\vec{w}_{r}$ and $\vec{w}_{r}'$ as the real and estimating weights of edges in $E^{r+1}$ respectively, where $|\vec{w}|=|\vec{w}_{r}'|=|E^{r+1}|$ since network evolves from $G_{r}$ to $G_{r+1}$ during the $r$-th trial. Correspondingly, $I(S, \vec{w}_{r})$ and $I(S, \vec{w}_{r}')$ represent the expected influence of seeds set $S$ under $\vec{w}_{r}$ and $\vec{w}_{r}'$ respectively. Since the distribution of $w_{e,r}$ is estimated as Definition \ref{we}, we define an event $\mathcal{F}_{r}$ corresponds to $w_{e,r}'$ as below
 \vspace{-1mm}
\begin{displaymath}
\mathcal{F}_r \triangleq \{|\overline{w}_{e,r}' - w_{e,r}|\leq c\sqrt{\mathbf{\Sigma}_{e,r}}, \,\forall e\in E^{r+1} \}.
\vspace{-1mm}
\end{displaymath}Under event $\mathcal{F}_r$, we have $0\leq w_{e,r}'-w_{e,r} \leq 2c\sqrt{\mathbf{\Sigma}_{e,r}}$, and
\begin{displaymath}
 I(S_{opt}^{r},\vec{w_{r}}) \leq I(S_{opt}^{r},\vec{w}_{r}') \leq \frac{1}{\beta} \mathbb{E}[f(S^{r},\vec{w}_{r}')].
 \end{displaymath}
Based on Definition \ref{regret bound 1},  the regret bound of $\mathbb{EIM}$ over $R$ trials can be formulated as 
\begin{align}
\notag &\mathbb{E}[B]=\sum_{r=1}^{R}I(S_{opt}^{r},\vec{w}_{r)} - \frac{1}{\beta} \mathbb{E}[I(S^r,\vec{w}_{r})]\\
\notag &\leq\underbrace{\sum_{r=1}^{R} \frac{1}{\beta} \mathbb{E}[I(S^r, \vec{w}_r')-I(S^r,\vec{w}_r)|\mathcal{F}_{r}]}_{L1} +\underbrace{\sum_{r=1}^{R}P(\overline{\mathcal{F}}_{r})|V^{r+1}|}_{L2}
\end{align}Here, $P(\overline{\mathcal{F}}_{r})|V^{r+1}|$ means the regret is no more than $|V^{r+1}|$ even under the worst case. Now, we continue to derive the respective upper bound of $L1$ and $L2$.

{\bf 2. Upper bound of $L1=\sum_{r=1}^{R}\frac{1}{\beta} \mathbb{E}[I(S^r, \vec{w}_r')-I(S^r,\vec{w}_r)|\mathcal{F}_{r}]$.}

To facilitate the computation of $I(S^r, \vec{w}_r')$ and $I(S^r,\vec{w}_r)$, we model the evolving network as evolving forest where we only take account of one path between a pair of users into the regret analysis.  The reason for adopting forest model is two foldes:  (1) under IC model, the influences diffused from seed nodes to other nodes in the network are expectedly along the path with maximum edge weights \cite{chen2010scalable}; (2) the influence diffusion is progressive in every IM campaign where once a node being influenced by seeds through the path with maximum edge weights, it will remain influenced permanently. Thus the forest which only takes account of one path between any pair of nodes is widely utilized in existing works for computing the expected influences of seeds with the most representative one being maximum influence arborescence (MIA) model \cite{chen2010scalable}\cite{ICDE16}.

   \begin{figure}[h]
  \vspace{-4mm}
 \centering
\centering
  \includegraphics[width=0.3\textwidth]{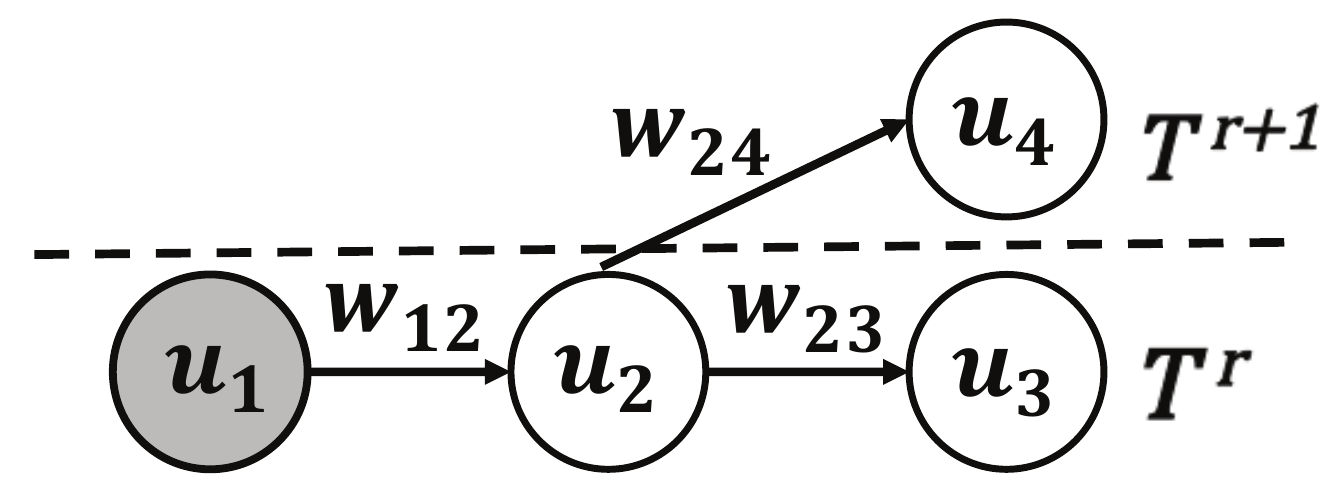}
  \vspace{-4mm}
  \caption{An illustrative example of evolving forest}\label{forest}
   \vspace{-3mm}
  \end{figure}
  
An illustrative example of evolving forest is shown in Figure \ref{forest}. At time $T^{r}$, there are three nodes $u_{1}$, $u_{2}$ and  $u_{3}$ in the network, which can be modeled as the traditional forest in static networks since there is only one path between any pair of nodes. For a new node $u_{4}$ joining during $T^{r}$ to $T^{r+1}$, there is still only one path between it and anyone existing users. Thus the network in evolution is still modeled as the forest, which is called as evolving forest. Under the forest structure, we can exactly compute the expected influence of a given seed node over the network. In case that node $u_{1}$ is the seed, its influence over the network at time $T^{r+1}$ is equal to $I(S, G^{r+1})=w_{12}+w_{12}w_{23} +w_{12}w_{24}$.

By modeling evolving network as the evolving forest, we only take account of one path from any seed in $S^r$ to the nodes belonging to $V^{r+1}\backslash S^r$.  Let $T_{r, v}$ denote set of paths from seeds in $S^r$ to node $v\in V^{r+1}\backslash S^r$. We can exactly compute $\mathbb{E}[I(S^r, \vec{w}_{r})]$ as a polynomial of the real weights of edges in $T_{r, v}$ and  compute $\mathbb{E}[I(S^r, \vec{w}_{r}')]$ as the polynomial of corresponding estimating values.  Thus, under the evolving forest, $\mathbb{E}\big[I(S^r, \vec{w}_{r}')- I(S^r,\vec{w}_r)\big]$ can be represented as a polynomial of errors in edge weights estimation (i.e., $(w_{e, r}' - w_{e, r}) (e\in E^{r+1}))$, which enables us to quantify the regret from the UCB based estimating values in Definition \ref{we}. Then $L1$   becomes 
\begin{align}
\notag L1&=\sum_{r=1}^{R}\frac{1}{\beta} \mathbb{E}[I(S^r, \vec{w}_r')-I(S^r,\vec{w}_r)|\mathcal{F}_{r}]\\
\label{L1} &\leq \frac{1}{\beta} \sum_{v\in V^{r+1}\setminus S^r}\sum_{e\in T_{r,v}} \mathbb{E}[\mathbb{I}(o_e^r)(w_{e,r}'-w_{e,r})]. 
\end{align}In Inequality (\ref{L1}), $\mathbb{I}(o_e^r)$ denotes the event that edge $e$ is triggered in the $r$-th trial, i.e, at lest one endpoint of edge $e$ is influenced in the $r$-th trial. We use $P(o_{e}^{r})$ to denote the probability of event $\mathbb{I}(o_e^r)$ with $P(o_{e}^{r})=\mathbb{I}(o_e^r)$.  The proof of Inequality (\ref{L1}) is presented in Appendix \ref{AC}. Since $0\leq w_{e,r}'-w_{e,r} \leq 2c\sqrt{\mathbf{\Sigma}_{e,r}}$, Inequality (\ref{L1}) further becomes 
\begin{equation}
\notag L1 \leq \sum_{r=1}^{R}\frac{2c}{\beta} \sum_{e\in E^{r+1}}\mathbb{E}\left[\mathbb{I}(o_e^r)N_{r,e}\sqrt{\mathbf{\Sigma}_{e,r}} \right]. 
\end{equation}Here, $N_{r,e}$ is the number of paths in set $T_{r,v}$ that contains edge $e$. Then according to   Cauchy-Schwarz Inequality, we have
{\small \begin{align}
\notag&\mathbb{E}\left[\sum_{r=1}^{R}\sum_{e\in E^{r+1}}\mathbb{I}(o_e^r)N_{r,e}\sqrt{\mathbf{\Sigma}_{e,r}}\right] \\
\label{dec} \leq& \sqrt{\mathbb{E}\left[\sum_{r=1}^{R}\sum_{e\in E^{r+1}}\mathbb{I}(o_e^r)N_{r,e}^2\right]\mathbb{E}\left[\sum_{r=1}^{R}\sum_{e\in E^{r+1}}\mathbb{I}(o_e^r)\mathbf{\Sigma}_{e,r} \right]}
\end{align}}Since the number of edges grows with network evolution, we define a network parameter $C$ in 
Inequality (\ref{dec}) as below to bound the effect of network size on the regret, which enables to explore the correlations between regret of $\mathbb{EIM}$ and the number of trials.  {\small \begin{equation}
\notag C \triangleq \max_{S^r:|S^r|=K, 1\leq r \leq R}\sqrt{\sum_{e \in E^{r+1}} N_{r,e}^{2} \cdot P(o_{e}^{r})}.
\end{equation}}
Then Inequality (\ref{dec}) becomes
{\small \begin{equation}
\label{a15}  \mathbb{E}\left[\sum_{r=1}^{R}\sum_{e\in E^{r+1}}\mathbb{I}(o_e^r)N_{r,e}\sqrt{\mathbf{\Sigma}_{e,r}}\right] 
 \leq C\sqrt{R}\sqrt{\sum_{r=1}^{R}\sum_{e\in E^{r+1}}\mathbb{I}(o_e^r)\mathbf{\Sigma}_{e,r}}.
\end{equation}}To give the upper bound of Eqn. (\ref{a15}), next we first provide the analysis of the term $\sum_{r=1}^{R}\mathbb{I}(o_e^r)\mathbf{\Sigma}_{e,r}$. 

{\bf Upper bound of $\sum_{r=1}^{R}\mathbb{I}(o_e^r)\mathbf{\Sigma}_{e,r}$. in Eqn. (\ref{a15}).}

(1) We first consider a special case where an edge is observed in all $R$ trials. Without loss of generality, we take edge $e$ as an example and use $\mathbf{\Sigma}_{r}$ to denote its variance in the $r$-th trial in the analysis of the upper bound of  $\sum_{r=1}^{R}\mathbb{I}(o_e^r)\mathbf{\Sigma}_{e,r}$. By Eqn. (\ref{square}), we have 
\begin{equation}\label{square1}
\mathbf{\Sigma}_{r+1}= \frac{\mathbf{\Sigma}_0^2}{(r+1)^k} + \frac{\mathbf{\Sigma}_{r}}{\mathbf{\Sigma}_{r}+1}.
\end{equation}Let $\mathbf{\Sigma}_0^2 \leq 1 \leq \frac{3}{r^{\frac{k}{2}}}|_{r=1}$ since $0\leq w_{r}\leq 1$. Referring to Lemma \ref{b1} (in Appendix H), if $\mathbf{\Sigma}_{r-1} \leq \frac{3}{(r-1)^{\frac{k}{2}}}$ and $k\leq 2$, we have 
\begin{displaymath}
\mathbf{\Sigma}_r\leq \Delta\mathbf{\Sigma}_r + \frac{\frac{3}{(r-1)^{\frac{k}{2}}}}{\frac{3}{(r-1)^{\frac{k}{2}}} + 1} \leq \frac{1}{r^k} + \frac{3}{(r-1)^{\frac{k}{2}}+3} \leq \frac{3}{r^{\frac{k}{2}}}. 
\end{displaymath} Hence, by induction, we can draw the conclusion that $\mathbf{\Sigma}_r\leq \frac{3}{r^{\frac{k}{2}}}$. And by Lemma \ref{b2} (in Appendix H), we have 
\begin{equation}\label{a21}
\sum_{r=1}^{R} \mathbf{\Sigma}_r \leq \sum_{r=1}^{R} \frac{3}{r^{\frac{k}{2}}}\leq  \frac{6}{2-k}R^{1-\frac{k}{2}}\, (0< k< 2). 
\end{equation}For $k \geq 2$, we have $\mathbf{\Sigma}_r\leq \frac{3}{r}$, and  $\sum_{r=1}^{R}\mathbf{\Sigma}_{r}$ becomes  
\begin{equation}\label{a16}
\sum_{r=1}^{R} \mathbf{\Sigma}_r  \leq \sum_{r=1}^{R} \frac{3}{r} \leq 3\ln T + 3. 
\end{equation}

(2) Now we consider the general case when the edge $e$ is not observed in at least one trial. Notably, if $e$ is not observed in the $r$-th trial, it is not counted into the regret computation according to Eqn. (\ref{a15}) since $\mathbb{I}(o_{e}^{r})=0$. 

We start with the case when edge $e$ is not observed in a single trial, i.e., the $\tau$-th trial. Let $\mathbf{\Sigma}_r'$ denote the variance of $w_{e,r}$ in this case, thus $\mathbf{\Sigma}_r=\mathbf{\Sigma}_r', \forall r\leq \tau$. By Eqn. (\ref{square}), and $\mathbf{\Sigma}_\tau \leq 1, \frac{\mathbf{\Sigma}_{\tau}}{\mathbf{\Sigma}_{\tau}+1} \geq \frac{\mathbf{\Sigma}_{\tau}}{2}$, we have
\begin{align}
\notag \mathbf{\Sigma}_{\tau+1} &= \Delta\mathbf{\Sigma}_{\tau+1} + \frac{\mathbf{\Sigma}_{\tau}}{\mathbf{\Sigma}_{\tau}+1},\\
\notag \mathbf{\Sigma}'_{\tau+1} &=  \Delta\mathbf{\Sigma}_{\tau+1}+ \mathbf{\Sigma}'_{\tau},\\
\label{a20} \mathbf{\Sigma}_\tau + \mathbf{\Sigma}_{\tau+1} - \mathbf{\Sigma}'_{\tau+1} &=  \frac{\mathbf{\Sigma}_{\tau}}{\mathbf{\Sigma}_{\tau}+1},\\
\label{a18} \mathbf{\Sigma}'_{\tau+1} - \mathbf{\Sigma}_{\tau+1} &= \mathbf{\Sigma}_{\tau} - \frac{\mathbf{\Sigma}_{\tau}}{\mathbf{\Sigma}_{\tau}+1}\leq \frac{\mathbf{\Sigma}_{\tau}}{\mathbf{\Sigma}_{\tau}+1}.
\end{align}According to Lemma \ref{b3} (in Appendix H), and assume $\mathbf{\Sigma}_\tau \leq \frac{3}{\tau^{\frac{k}{2}}} + \epsilon_\tau$, then
\begin{equation}
\begin{split}
\mathbf{\Sigma}_{\tau+1} & \leq  \frac{1}{(\tau+1)^k} + \frac{\frac{3}{\tau^{\frac{k}{2}}} + \epsilon_\tau}{\frac{3}{\tau^{\frac{k}{2}}} + \epsilon_\tau + 1 }\\
& = \frac{1}{(\tau+1)^k} + \frac{ \epsilon_\tau(r-1)^k+3}{( \epsilon_\tau+1)(r-1)^k + 3} \\
& \leq \frac{3}{(\tau+1)^{\frac{k}{2}}} + \frac{\epsilon_\tau}{4}
\end{split}	
\end{equation}By induction, we have $\mathbf{\Sigma}_{\tau+n} \leq \frac{3}{(\tau+n)^{\frac{k}{2}}} + \frac{\epsilon_\tau}{4^n}$. And according to Eqn. (\ref{a18}), $\mathbf{\Sigma}'_{\tau+1}$ satisfies 
\begin{displaymath}
\mathbf{\Sigma}'_{\tau+1} \leq \frac{3}{(\tau+1)^{\frac{k}{2}}} + \frac{\epsilon_{\tau}}{4} + \frac{\mathbf{\Sigma}_{\tau}}{\mathbf{\Sigma}_{\tau}+1}.
\end{displaymath}
Thus, based on the induction above, we have 
\begin{equation}\label{a20}
\mathbf{\Sigma}'_{\tau+n} \leq \frac{3}{(\tau+n)^{\frac{k}{2}}} + \frac{1}{4^n}\left(\epsilon_{\tau}+ \frac{4\mathbf{\Sigma}_{\tau}}{V_{\tau}+1}\right) 
\end{equation}
\begin{displaymath}
\sum_{n=2}^{R-\tau}\mathbf{\Sigma}'_{\tau+n} \leq \sum_{n=2}^{R-\tau} \left(\frac{3}{(\tau+n)^{\frac{k}{2}}} +\frac{\epsilon_{\tau}}{4^n}\right) + \frac{\mathbf{\Sigma}_{\tau}}{3(\mathbf{\Sigma}_{\tau}+1)}. 
\end{displaymath}Then, by Eqn. (\ref{a20}), for the variance of $w_{e}$ from the $(\tau+1)$-th trial to the $R$-th trial, we have 
\begin{equation}\label{a19}
\sum_{r=\tau+1}^{R}\mathbf{\Sigma}'_r \leq \sum_{r=\tau}^{R} \left(\frac{3}{r^{\frac{k}{2}}} +\frac{\epsilon_{\tau}}{4^{r-\tau}}\right). 
\end{equation}According to Inequality (\ref{a21}), $\Sigma_{r}\leq\frac{3}{r^{\frac{k}{2}}}$ holds, thus $\epsilon_{\tau}=0$. Then we have $\sum_{r=\tau+1}^{R}\mathbf{\Sigma}'_r \leq \sum_{r=\tau}^{R} \frac{3}{r^{\frac{k}{2}}}$.

Next, we consider the case when edge $e$ is not observed in $\tau_{1}, \tau_{2},..., \tau_{i}$-th trials. Let $\mathbf{\Sigma}'_{1,r}, \mathbf{\Sigma}'_{2,r},..., \mathbf{\Sigma}'_{i,r}$ denote the variance of $w_{e,r}$ when edge $e$ is not observed in $\{\tau_{1}\}, \{\tau_{1},  \tau_{2}\},..., \{\tau_{1}, \tau_{2},..., \tau_{i}\}$-th trials respectively. Specially, $\mathbf{\Sigma}'_{0,r}$ denotes the variance in case that edge $e$ is observed in all trials and $\mathbf{\Sigma}'_{0,r}=\mathbf{\Sigma}_{r}$. And based on the analysis in last subsection, we have 
\begin{equation}\label{a22}
\sum_{r=[R],r\neq \tau_1} \mathbf{\Sigma}'_{1,r} \leq \sum_{r=1}^{R} \mathbf{\Sigma}'_{0,r}.
\end{equation}
For $\mathbf{\Sigma}'_{1,\tau_{2}}$, we have the following inequality from Eqn. (\ref{a20}): 
\begin{displaymath}
 \mathbf{\Sigma}_{1,\tau_2}' \leq \frac{3}{\tau_2^{\frac{k}{2}}} +  \frac{ \mathbf{\Sigma}_{0,\tau}'}{4^{\tau_2-\tau_1-1}( \mathbf{\Sigma}_{0,\tau}'+1)}
\end{displaymath}
And similar to Eqn. (\ref{a19}), we also have
{\small
\begin{displaymath}
\sum_{r=\tau_2+1}^{R}\mathbf{\Sigma}'_{2,r} \leq \sum_{r=\tau_2}^{R} \left(\frac{3}{r^{\frac{k}{2}}} +\frac{\mathbf{\Sigma}'_{0,\tau}}{4^{r-\tau_1-1}(\mathbf{\Sigma}'_{0,\tau}+1)}\right)
\end{displaymath}
\begin{displaymath}
\begin{split}
\sum_{r=[\tau_1+1,\cdots,R],r\neq \tau_2}\mathbf{\Sigma}'_{2,t}& \leq \sum_{r=\tau_1+1}^{R} \left(\frac{3}{r^{\frac{k}{2}}} +\frac{\mathbf{\Sigma}'_{0,\tau}}{4^{r-\tau_1-1}(\mathbf{\Sigma}'_{0,\tau}+1)}\right)\\
& \leq \sum_{r=\tau_1}^{R} \frac{3}{r^{\frac{k}{2}}}
\end{split}
\end{displaymath}}Hence, 
\begin{equation}\label{a23}
\sum_{r\in [R]\setminus\{\tau_1,\tau_2\}} \mathbf{\Sigma}'_{2,r} \leq \sum_{r=1}^{R} \frac{3}{r^{\frac{k}{2}}}. 
\end{equation}
Therefore, by corresponding Eqn. (\ref{a22}) to $\mathbf{\Sigma}'_{1,r}$ and  Eqn. (\ref{a23}) to $\mathbf{\Sigma}'_{2,r}$, we can inductively draw the following conclusion:
\begin{equation}\label{a38}
\sum_{r\in [R]\setminus\{\tau_1,\cdots,\tau_i\}}  \mathbf{\Sigma}'_{i,r} \leq \sum_{r=1}^{R} \frac{3}{r^{\frac{k}{2}}}.
\end{equation}Taking $\sum_{r=1}^{R} \mathbb{I}(o_{e}^{r})\mathbf{\Sigma}_{e,r}\leq\sum_{r=1}^{R}\frac{3}{r^{\frac{k}{2}}}$ into Inequality (\ref{a15}), we can obtain the upper bound of $L1$ with $L1 \leq  O\left(\sqrt{R|E_{R+1}|\sum_{r=1}^{R} \frac{3}{r^{\frac{k}{2}}}} \right).$

{\bf 3. The upper bound of $L2=\sum_{r=1}^{R} P(\overline{\mathcal{F}}_{r})|V_{R+1}|$.}

We first review the definition of event $\mathcal{F}_{r}$, i.e., $\mathcal{F}_r \triangleq \{|w_{e,r}' - w_{e,r}|\leq c\sqrt{\mathbf{\Sigma}_{e,r}}, \,\forall e\in E^{r} \}$. In the $r$-th trial, the observing value of $w_{e,r}$ can be formulated as 
\begin{displaymath}
z_{e,r} = w_{e,r}' + \sigma_{e,r},
\end{displaymath}where $\sigma_{e,r}$ denotes the observing error with zero mean and $\sigma_{e,r}\in (-1,1),\forall e,r$. Then according to the Lemma \ref{subgaussian}, 
$\sigma_{e,r}$ follows the sub-gaussian distribution with a variance upper bounded by $1$. 
\begin{lemma}\label{subgaussian}
(\cite{sub}.) If $X$ is a random variable with $\mathbb{E}(X) = 0$ and $|X|\leq b$ a.s.
for some $b>0$, then $X$ is b-subgaussian.
\end{lemma}
Hence, we have 
\begin{equation}
\notag P(|w_{e,r}-w_{e,r}'| > c\sqrt{\mathbf{\Sigma}_{e,r}}) \leq e^{-\frac{c^2}{2}}.
\end{equation}Let $\bar{|E|}$ denote the mean number of edges in $R$ trials, then 
\begin{displaymath}
\sum_{r=1}^{R} P(\overline{\mathcal{F}}_{r})|V_{R+1}| \leq 2e^{-\frac{c^{2}}{2}}\cdot \bar{|E|}|V_{R+1}|R. 
\end{displaymath}In case that $c\geq 2\sqrt{\ln 2\bar{|E|}|V_{R+1}|R}$, we have  $\sum_{r=1}^{R} P(\overline{\mathcal{F}}_{r})|V_{R+1}|\leq 1$. 

{\bf 4. Conclusion.}

Together with upper bound of both $L1$ and $L2$, we can derive the regret bound of $\mathbb{EIM}$ over the $R$ trials. Let $c= 2\sqrt{\ln 2\bar{|E|}|V_{R+1}|R}$, according to  Eqn. (\ref{a15}) and (\ref{a38})  , for $0\leq k\leq 2$, we have 
\begin{equation}
\label{a24} \mathbb{E}[B] \leq   \frac{2cC\sqrt{R}}{\beta} \sqrt{|E_{R+1}|\sum_{r=1}^{R} \frac{3}{r^{\frac{k}{2}}}}  +  1. 
\end{equation}
By Eqn. (\ref{a21}) and Eqn. (\ref{a16}), Eqn. (\ref{a24}) becomes 
\begin{equation}
\mathbb{E}[B] \leq   \frac{2cC}{\beta} \sqrt{ \frac{6}{2-k}|E_{R+1}|}R^{1-\frac{k}{4}} +  1.
\end{equation}And for $k>2$, we have 
\begin{align}
\mathbb{E}[B] &\leq   \frac{2cC}{\beta} \sqrt{ 3|E_{R+1}|(\ln R+1) R}+  1\\
&=O(\sqrt{|E_{R+1}|(\ln R+1) R}). 
\end{align}Thus we complete the proof for Theorem \ref{friendship bound}. \end{proof}
Theorem \ref{friendship bound} implies that the regret bound of $\mathbb{EIM}$ is still sublinear to the number of trials under the growing network size. And the sub-linearity of the regret bound justifies that $\mathbb{EIM}$ can effectively capture the evolving network states with the bandits-based framework and achieve the long-run performance that converges to the optimal strategy. In Section \ref{experiments}, we will further experimentally demonstrate the performance of $\mathbb{EIM}$.

\section{Experiments}\label{experiments}
In this section, we experimentally evaluate the performance of $\mathbb{EIM}$ on both real world and synthetic evolving networks to investigate the following key issues. (1) Can the seeds set selected by $\mathbb{EIM}$ consistently outperform state-of-art methods in the {\bf EIM} problem? (2) Does the particle learning method capture network growing speed well? (3) Is the running time of $\mathbb{EIM}$ scale well in large scale networks? (4) What are the effects of seeds set size $K$ and the time on the performance of $\mathbb{EIM}$? To answer the four questions, we will first introduce the evolving network datasets constructed in our experiments and then provide the detailed settings and results. For space limitations, we only present partial representative results here, with more exhaustive results shown in Appendix \ref{Sexp}.  

  \begin{figure*}[t]
\subfigure[Growing nodes]
  {\begin{minipage}[b]{0.185\textwidth}
  \includegraphics[width=1\linewidth]{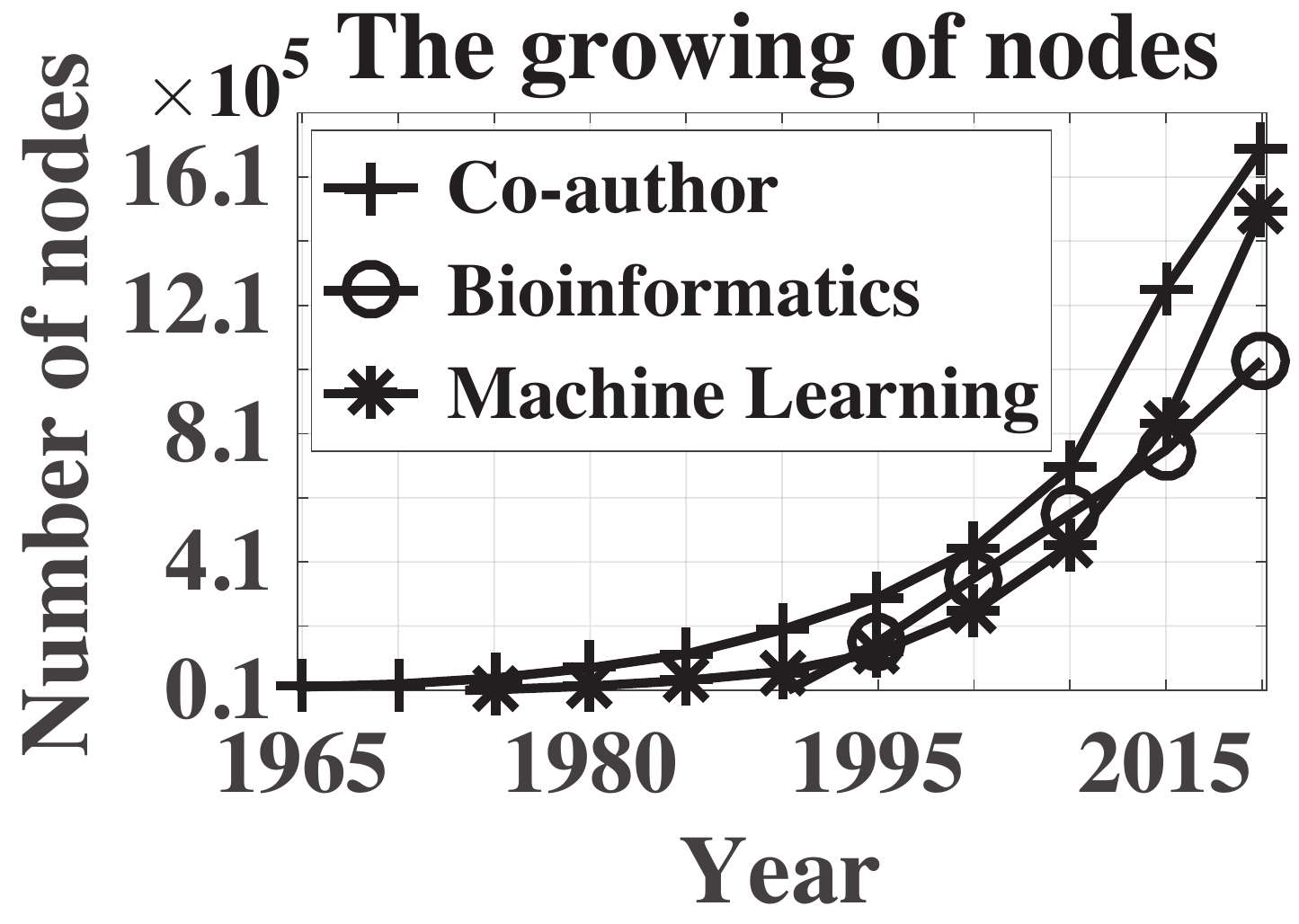}\label{G1}
  \end{minipage}}
  \hspace{-1mm}
  \subfigure[Growing degrees in Co-author]
  {\begin{minipage}[b]{0.19\textwidth}
  \includegraphics[width=1\linewidth]{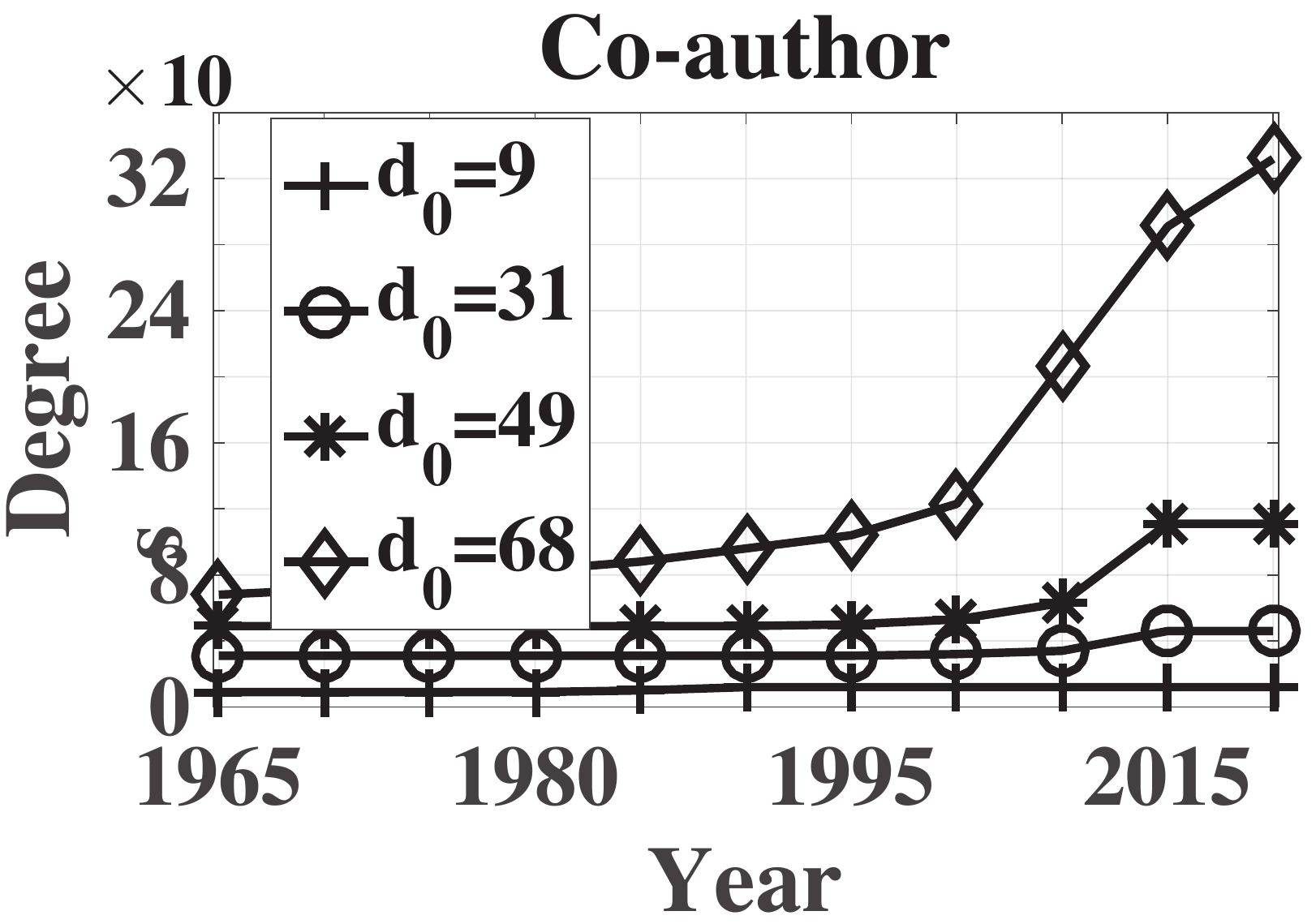}\label{G2}
  \end{minipage}}
    \hspace{-1mm}
  \subfigure[Growing degrees in Topic]
  {\begin{minipage}[b]{0.19\textwidth}
  \includegraphics[width=1\linewidth]{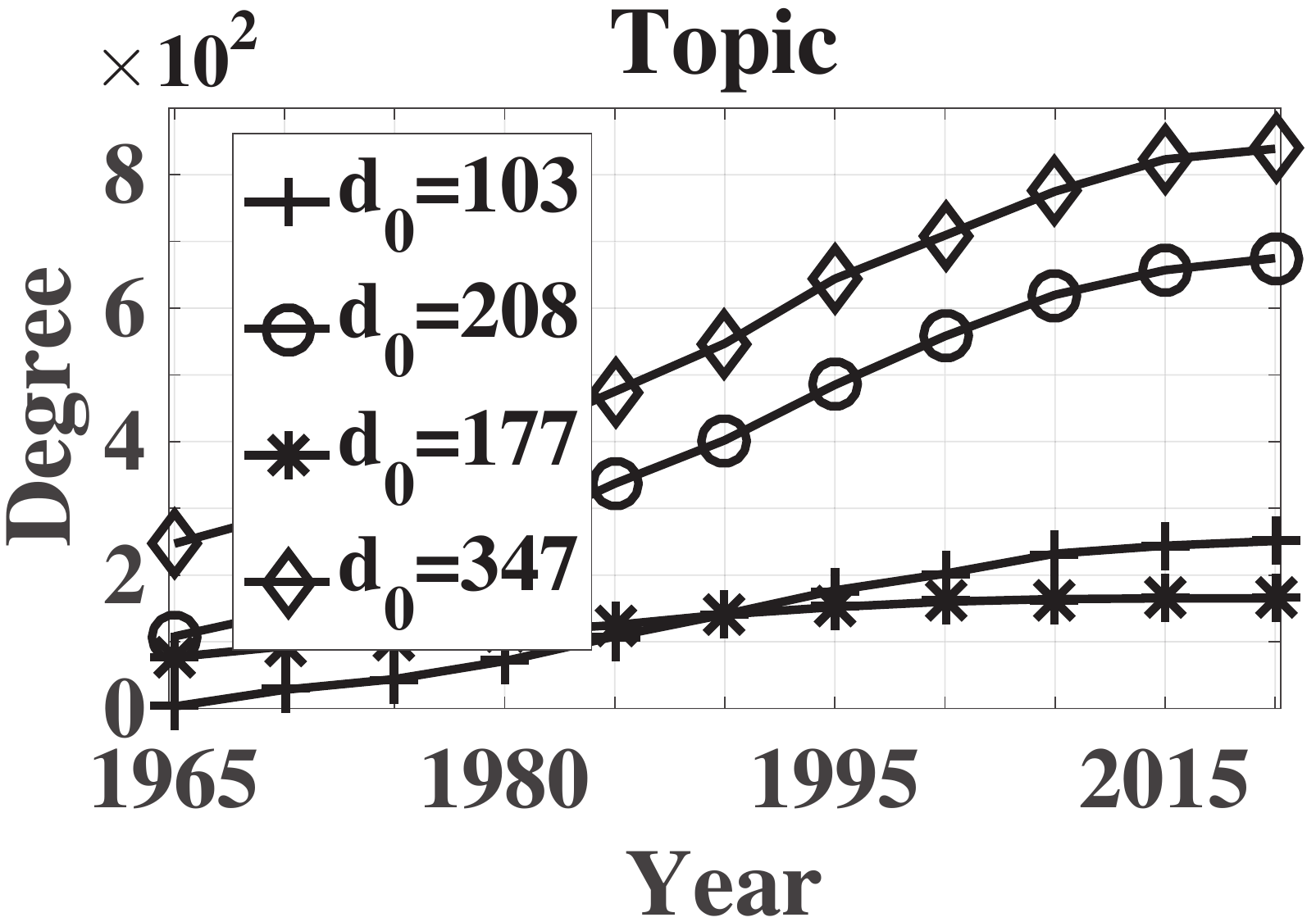}\label{G3}
  \end{minipage}}
    \hspace{-1mm}
   \subfigure[Growing degrees in Boinformatics]
  {\begin{minipage}[b]{0.19\textwidth}
  \includegraphics[width=1\linewidth]{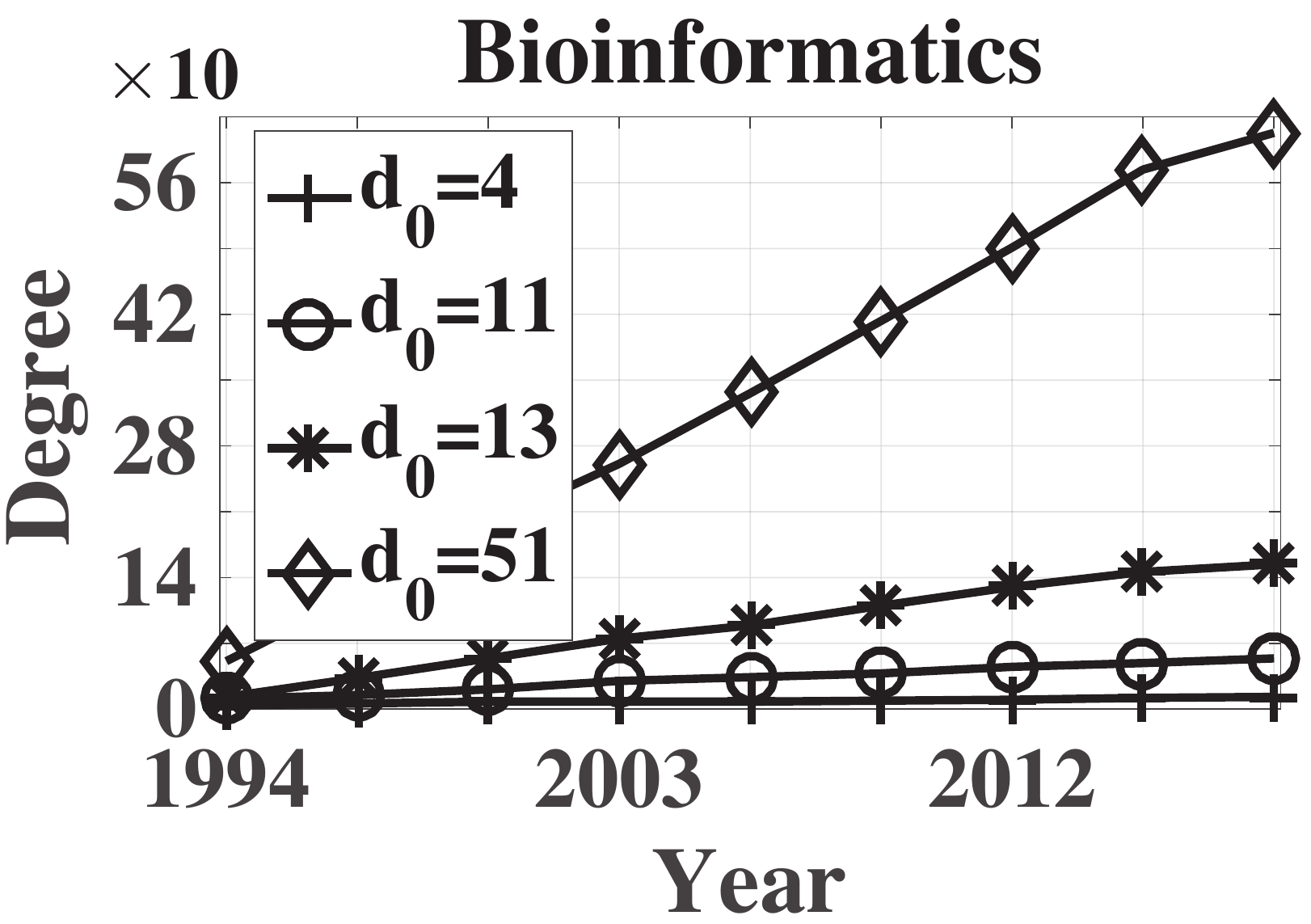}\label{G4}
  \end{minipage}}
    \hspace{-1mm}
    \subfigure[Growing degrees in ML]
  {\begin{minipage}[b]{0.195\textwidth}
  \includegraphics[width=1\linewidth]{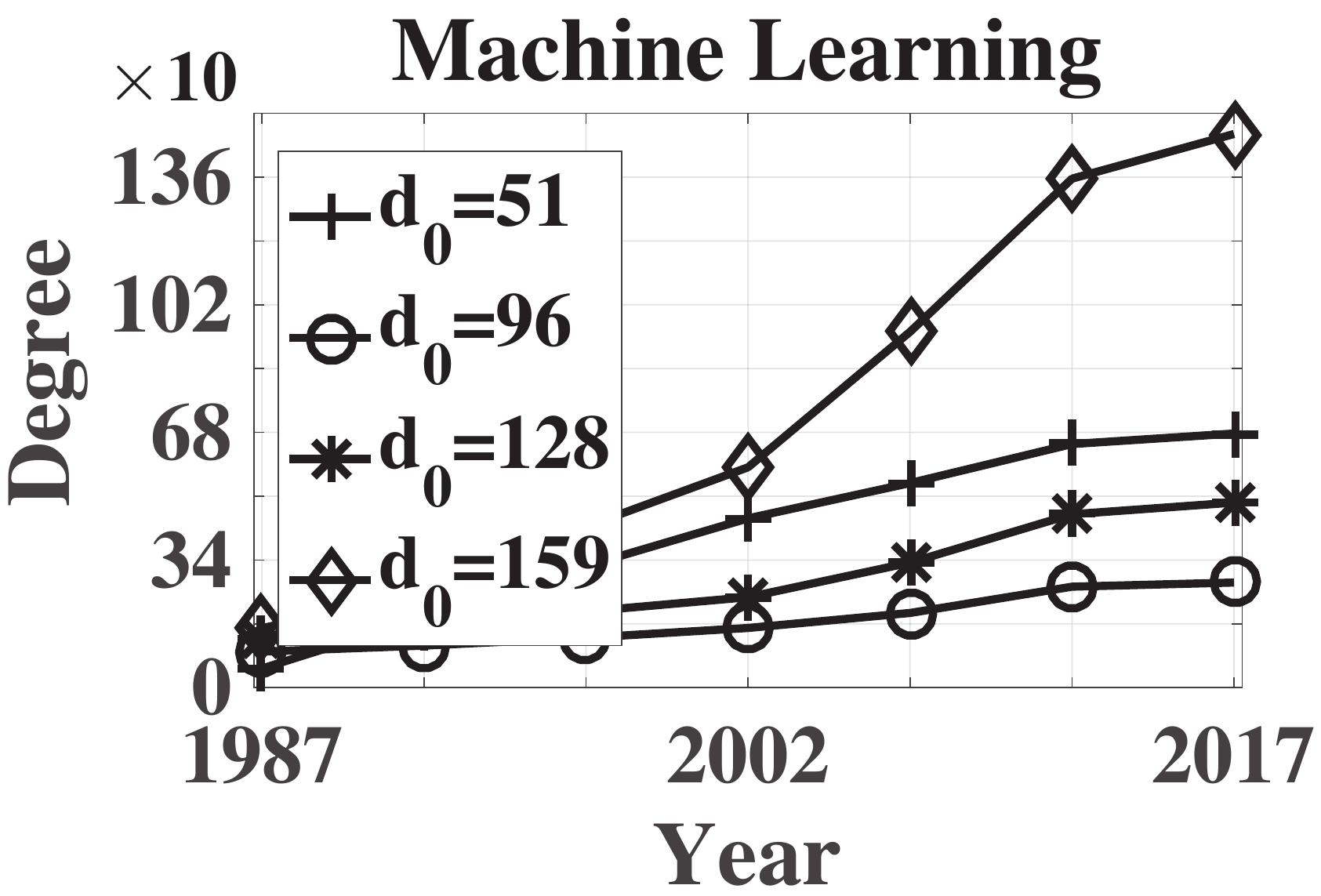}\label{G5}
  \end{minipage}}
    \hspace{-1mm}
  \vspace{-2mm}
 \caption{Evolution of networks. ($d_{0}$ means the initial degree)}\label{Evolutiongraph}
 \vspace{-2mm}
\end{figure*}
\vspace{-2mm}
\subsection{Evolving Network Datasets}\label{datasets}
Since existing widely used social network datasets lack complete information of the joining time of each node, we extract four real evolving networks from the Microsoft Academic Graph (MAG) \cite{co-author}. Besides, we also generate a synthetic network following the Barabasi-Albert evolving model \cite{barabasi1999emergence}. The statistical details of the five datasets are summarized in Table \ref{data}.

  \begin{table}[h]

    \vspace{-3mm}
\centering
\caption{Statistics of Evolving Datasets}\label{data}
\label{dataset}
\begin{tabular}{c|c|c|c} \hline
{\bfseries Datasets} &{\bfseries \# of Nodes} &{\bfseries \# of Edges}  \tnote{1} &{\bfseries Time Interval}
\\ \hline
Co-author&1.7M&12.6M&A.D. 1801-2015\\
Topic &34K&727K &A.D. 1800-2016 \\
ML &1.51M&6.9M &A.D. 1872-2017 \\
Bio &1.04M&1.82M & A.D. 1992-2017 \\
SN&420 K &3.86M& 25 periods \\\hline
\end{tabular}

\vspace{-4mm}
\end{table} 

{\bfseries (1) Co-author:} From the author list of each paper, we extract a co-authorship evolving network which contains $1.7$ million nodes and $12.6$ million edges. The edge between a pair of authors means there are at least one paper co-authored by them. The joining time of each user is set to the publishing time of his first paper. When an author joins the network in evolution as time goes on, we connect him with his co-authors who have already joined in.

{\bfseries (2) Topic:} There are $127$ million papers in the MAG dataset, and we classify them into $34 K$ topics with reliable ground-truth communities. The joining time of each topic is set to the publishing time of its earliest paper, and edges  are based on the citations among topics. We say that topic $1$ cites topic $2$ if a paper belonging to topic $1$ cites another paper belonging to topic $2$.  Since the cross-domain citations are widely existed in academia, Topic is the densest one in the five networks. Table \ref{rtopic} lists the statistics of several representative topics. 

  \begin{table}[h]
    \vspace{-3mm}
\centering
\caption{Statistics of Evolving Datasets}
    \vspace{-2mm}
\label{rtopic}
{\footnotesize
\begin{tabular}{c|c|c|c} \hline
\multirowcell{2}{{\bfseries Topic}} &\multirowcell{2}{{\bfseries \# of Papers}} &{\bfseries Joining} &\multirowcell{2}{{\bfseries  First Paper }}\\
&&{\bfseries Time}&
\\ \hline
Computer network &380 K& 1850& No place like home\\ \hline
Computer  vision &1.2 M &1879 & Survival of the Fittest \\\hline
\multirowcell{2}{World wide web} &\multirowcell{2}{349 K}&\multirowcell{2}{1848}&\multirowcell{2}{The past, the present, \\and the future}  \\
&&&\\\hline
\end{tabular}}
\vspace{-3mm}
\end{table}

{\bfseries (3) Machine Learning (ML):} The evolving ML network is composed of the papers belonging to the Machine Learning topic, which contains $1.51$ million nodes and $6.9$ million edges. The joining time of each node is set as the publishing time of its corresponding paper, and the edges are established based on citations among papers.

{\bfseries (4) Bioinformatics (Bio):} The evolving Bio network includes $1.04$ million papers about the Bioinformatics topic, which contains $1.04$ million nodes and $1.82$ million edges. Its construction method is similar to that of ML network. 

{\bfseries (5) Synthetic Network (SN):} We also generate a synthetic network that includes $420 K$ nodes and $3.86M
$ edges based on the Barabasi-Albert (BA) evolving model \cite{barabasi1999emergence}. In the generation, a new node is attached to the previous graph by a single edge in each evolving time slot. With probability $\frac{1}{2}$, the anchor node is chosen uniformly at random from nodes in previous graph. Otherwise, the possibility of an anchor node being selected is proportional to its current degree. 

Figure \ref{Evolutiongraph} plots the growth of nodes and degrees in the five evolving networks. From Figure \ref{G1}, we can find that Co-author, ML and Bio all follow the power-law growth. And from Figures \ref{G2}-\ref{G5}, we can see that the gap between the degrees of users grows with network evolution.  Especially, the node with highest initial degree exhibit significant advantage in later years. The phenomena well justify the BA evolving model where new users will preferentially connect to those with higher degrees in evolution.
 
  \begin{table*}[t]
\centering
\caption{Influenced size over $K$ in $2015$ }\label{eff}
\resizebox{\textwidth}{!}{ 
\begin{tabular}{c|cccc|cccc|cccc|cccc|cccc} \hline
&\multicolumn{4}{|c|}{Co-author} &\multicolumn{4}{|c|}{ML} &\multicolumn{4}{|c|}{SN} &\multicolumn{4}{|c|}{Topic}&\multicolumn{4}{|c}{Bio}
\\ \hline
Algorithm&K=5&10&20&50&5&10&20&50&5&10&20&50&5&10&20&50&5&10&20&50
\\ \hline
IMM&39k&40k&45k&47k&14k&26k&58k&106k&203	&927&1.9k&3.8k&3k	&4.5k	&7.7k&	10k&602&	1.2k&1.6k&2.6k
\\
$\mathbb {EIM}$&41k&46k&	55k&	72k&32k&	46k&	68k&	117k&816&1.3k	&2.3k&4.6k&2.9k&6.3k&7.7k&	14k&981	&1.5k&	2k&	3.3k
\\
HD&36k&	37k&	46k&	46k&17k&	34k&	63k&	92k&168&	1k&2.9k&	4k&2.7k&2.7k&	7.2k&13k&473&1k&	1.5k&	2.7k
\\ 
Earliest&5	&10&8.9k&46k&20&	57&	88	&689&170	&888	&2.4k&2.9k&2.7&6.1k&7.7k&	13k&5&10	&23&67
\\ 
SKIM&37k&37k&44k&46k&25k&32k&	54k&	103k&249	&940&1.7k&2.8k&1.9k&	5.2k&7.3k&	12k&486&979	&1.6k&	2.4k
\\ \hline
\end{tabular}}
\end{table*} 
     \begin{figure*}[t]
 \centering
 \vspace{-1mm}
\centering
  \includegraphics[width=1\textwidth]{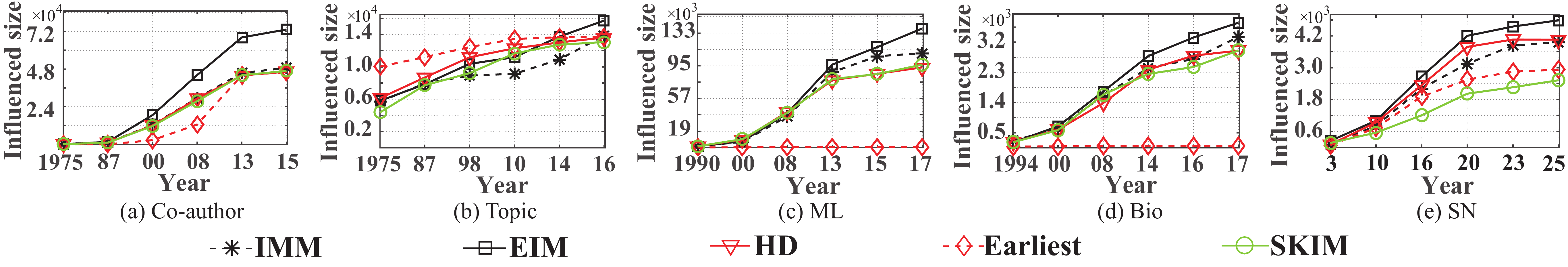}
  \vspace{-9mm}
    \caption{Influenced size over years under $K=50$. }\label{k50year}
    \vspace{-4mm}
    \end{figure*}
     \vspace{-2mm}
\subsection{Experimental Settings}

 \vspace{-1mm}
{\bfseries Baselines}. We compare the performance of $\mathbb{EIM}$ with the following four baseline algorithms:

(1) IMM \cite{IMM}: The general IMM framework focuses on efficiently selecting seed users for IM problem over large scale static networks. Its main idea lies on determining the number of RR-sets to ensure the approximation ratio, and then iteratively selecting seed users who can cover the most RR-sets.

(2) SKIM \cite{cikm1}: It adopts a construction of reachability sketches of static networks under IC model. Then a reverse reachability search over such sketches is performed to iteratively select seed users  who can firstly reach $k$ nodes through active edges.

(3) Highest Degree (HD): A heuristic algorithm that selects $K$ seed users with highest degrees in each trial.  

(4) Earliest: A heuristic algorithm that selects $K$ seed users who join the evolving network earliest in each trial. 

Note that the two static baselines, i.e., IMM \cite{IMM} and SKIM \cite{cikm1} cannot be applied to evolving networks directly, thus their settings in the experiments are not exactly the same with their originalities. For IMM and SKIM, seeds are still selected from nodes that have ever been observed in each trial instead of the entire network due to partial observing.
%

{\bfseries Parameter settings.} We set one year as the period for each trial, and the time for the first trial in the five datasets are set as  (1) Co-author: $1965$, (2) Topic:  $1961$, (3) ML:  $1988$, (4) Bio: $1993$ and (5) SN: the first trial is conducted at the timestamp when network size is $2500$, and the size of new users in the $n$-th period is set as $2500\cdot 2^{n}$. The initial weights of edges are sampled from $\mathcal{N}(0.05, 0.008)$.
Regarding the unidirected citation patterns in Topic, ML and Bio, we set the weight $w_{AB}$ of edge where $A$ cites $B$ as a Gaussian random walk and let $w_{BA}$ always be $0$. In bidirected networks Coauthor and SN, the weights of two directions are both set as the Gaussian random walk, while the two weights are independent. The default number $M$ of particles is set to $500$, and effect of  $M$ will be shown later in Section \ref{performance of particle}. The default value of $\varepsilon$ in Sampling phase (Algorithm \ref{Sampling}) is set as $\varepsilon=0.1$, whose effect further graphically reported in Section \ref{EA}

 {\bfseries Environment.} All the experiments are implemented in Python 2.7 and conducted on a computer running Ubuntu 16.04 LTS with 40 cores 2.30 GHz (Intel Xeon E5-2650) and 126 GB memory.
 \vspace{-4mm}
 \subsection{Effectiveness of $\mathbb{EIM}$}
We quantify the effectiveness by the number of influenced users and report the comparison of the effectiveness between $\mathbb{EIM}$ and four baselines in Table \ref{eff} and Figure \ref{k50year}.
 
 {\bf Effects of Time}. From Figure \ref{k50year}, we can observe that over the Co-author, ML, Bio and SN, $\mathbb{EIM}$ always outperform the four baselines. And the superiority of $\mathbb{EIM}$ becomes more significant as time increases. Especially in the case that $K=50$, $Year= 2015$ over Co-author, the influenced size of $\mathbb{EIM}$ is almost $50\%$ larger than that of baselines. The superiority of $\mathbb{EIM}$ owes to the continuous learning of network knowledge, so that with more accurate network knowledge,  $\mathbb{EIM}$ can return better seeds set. This phenomena justifies that the IM designing and network knowledge learning can mutually enhance each other.

Over Topic,  it can be seen that the influenced size of $\mathbb{EIM}$ is smaller than that of HD in early years, since the uncertainties of network knowledge degrade the performance of $\mathbb{EIM}$ as well as IMM and SKIM. Specifically, Topic is the densest network where each new node (topic) averagely cites more than 20 existing nodes, so that there is a higher probability that it cites the $50$ highest degree nodes under the PA rule.
However, even over the special case, $\mathbb{EIM}$ still enjoys better performance than the four baselines in later years.

{\bf Effects of $K$.} From Table \ref{eff},  we can find that the influenced sizes of IMM and SKIM grow smoothly with the increase of $K$, while those for Earliest presents much more fluctuations. The reason behind is that IMM and SKIM are efficient IM algorithms over static networks with rigorous performance guarantee, their disadvantages to $\mathbb{EIM}$ is brought by the inapplicability in evolving framework. Meanwhile, Earliest is a  heuristic with no performance guarantee, and the instability of its performance implies the heterogeneity of user attractiveness, especially among those join in early stage. In contrast, another heuristic HD achieves medium influenced size among the five algorithms, since HD benefits from the PA rule.

   \vspace{-2mm}
  \subsection{Efficiency of $\mathbb{EIM}$}
   \vspace{-1mm}
   Now, we report the running time of $\mathbb{EIM}$ in Figure \ref{efficiency}.  As shown in Theorem \ref{ratio}, the evolving IM algorithm {\bfseries Evo-IMM} costs $O\big((K+l)((|\mathbb{V}^{r}|+|\mathbb{E}^{r}|)+\frac{|\mathbb{V}^{r}|}{OPT})\log |\mathbb{V}^{r}|/\varepsilon^{2}\big)$. Another phase of {\bfseries Evo-IMM} is the network knowledge learning, whose computational complexity can be scaled as $O(M|\mathbb{V}^{r}|+|\mathbb{E}^{r}|)$. Thus the running time of $\mathbb{EIM}$ is proportional to the network size as shown in Figure \ref{efficiency}. Due to the high efficiency of linear UCB and IMM frameworks, the time costs of $\mathbb{EIM}$ scales well even over networks of million scale. The running time of several classical IM algorithms over million-scale networks: TIM $(10^4 s) \cite{fIMM} $ , TIM+ $(10^3 s) \cite{fIMM}$, IMM $(10^2-10^3 s)$ \cite{IMM}, respectively. Another two classical algorithms (i.e., RIS and CELF++), according to the experimental results in \cite{fIMM}, cost $10^4$ seconds over the network with $76K$ nodes. What we can also find from Figure \ref{efficiency} is that the increase of $K$ from $10$ to $20$ only incurs slightly larger  time costs, ensuring the scalability of $\mathbb{EIM}$ in the cases where a large number of seeds need to be selected.
   \begin{figure}[h]
  \vspace{-3mm}
 \centering
\centering
  \includegraphics[width=0.48\textwidth]{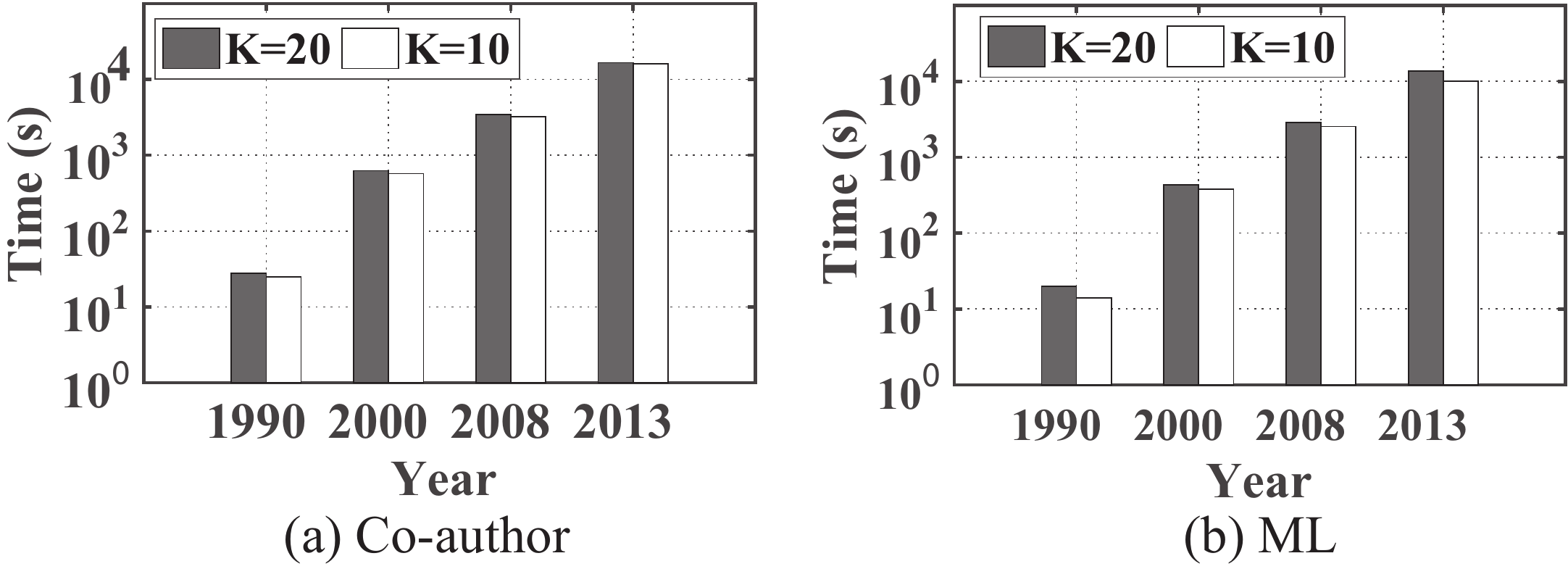}
  \vspace{-4mm}
  \caption{Running time vs. Year over Co-author and ML}\label{efficiency}
   \vspace{-6mm}
  \end{figure}
   \subsection{Performance of Particle Learning }\label{performance of particle}
     \vspace{-1mm}


Let $M$ denote the number of initial sampled particles. Referring \cite{sigmoids}, the initial prior parameters are sampled from their possible ranges as:  $\beta \in [10^{-8}, 1]$, $\theta \in [10^{-4}, 10]$, $N \in [10^{5}, 10^{8}]$. We define the metric, i.e., the relative error $|(\sum_{p_{i}\in \mathcal{P}^{r}}n_{i}(T^{r}))/M-n(T^{r})|/n(T^{r})$ to measure the accuracy of learnt network growing speed. Figure \ref{particleM} plots the relative errors over Topic and SN with initial size being $M=500$ and $1000$. It can be seen that  the accuracy of particle learning can be improved by the size of initial particles as more particles bring higher resolution of initial parameters. Also, the accuracy increases over time, since particles with accurate parameters are gradually filtered out in the resampling phases of each trial. \begin{figure}[h]
  \vspace{-2mm}
 \centering
\centering
  \includegraphics[width=0.48\textwidth]{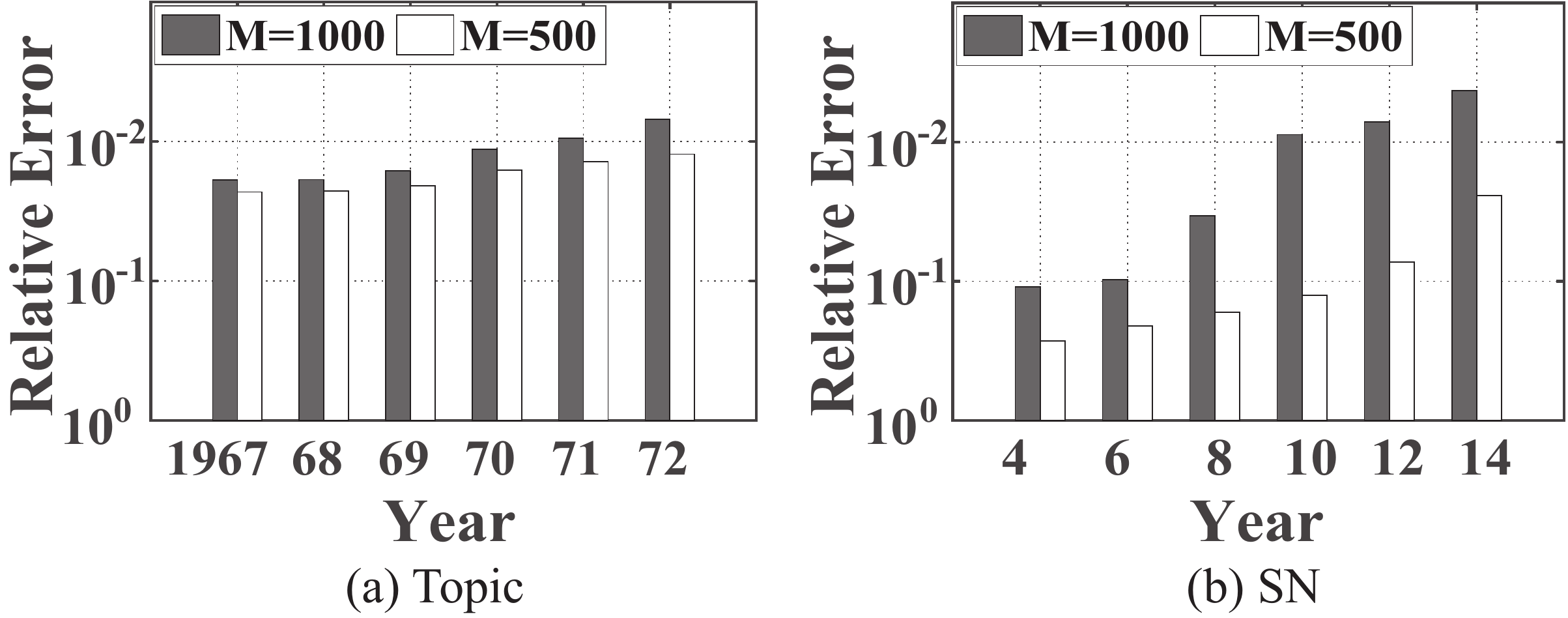}
  \vspace{-4mm}
  \caption{Relative error vs. M over Topic and SN}\label{particleM}
   \vspace{-6mm}
  \end{figure}   
  
  \subsection{Effects of Parameter $\varepsilon$ }\label{EA}
 
 Recall that in Theorem \ref{ratio}, the approximation ratio and computational complexity are both the functions of parameter $\varepsilon$. Figure \ref{EB} shows the effect of $\varepsilon$ over ML with $K=20$. To intuitively illustrate the effect of  $\varepsilon$ on time costs of seeds selection, we present the running time of {\bfseries Evo-IMM} with $\varepsilon=0.1$ and $\varepsilon=0.5$ in Figure \ref{EB}. Since lager $\varepsilon$ means smaller number of RRsets needed in {\bfseries Evo-IMM}, the running time of cases where  $\varepsilon=0.5$ is much smaller than that of the cases where $\varepsilon=0.1$. Although the increase of $\varepsilon$ causes the decrease  theoretical performance guarantee,  in the experiment {\bfseries Evo-IMM} achieves comparable expected influenced size when $\varepsilon=0.5$. 
 
     \begin{figure}[h]
 \centering
\centering
  \includegraphics[width=0.48\textwidth]{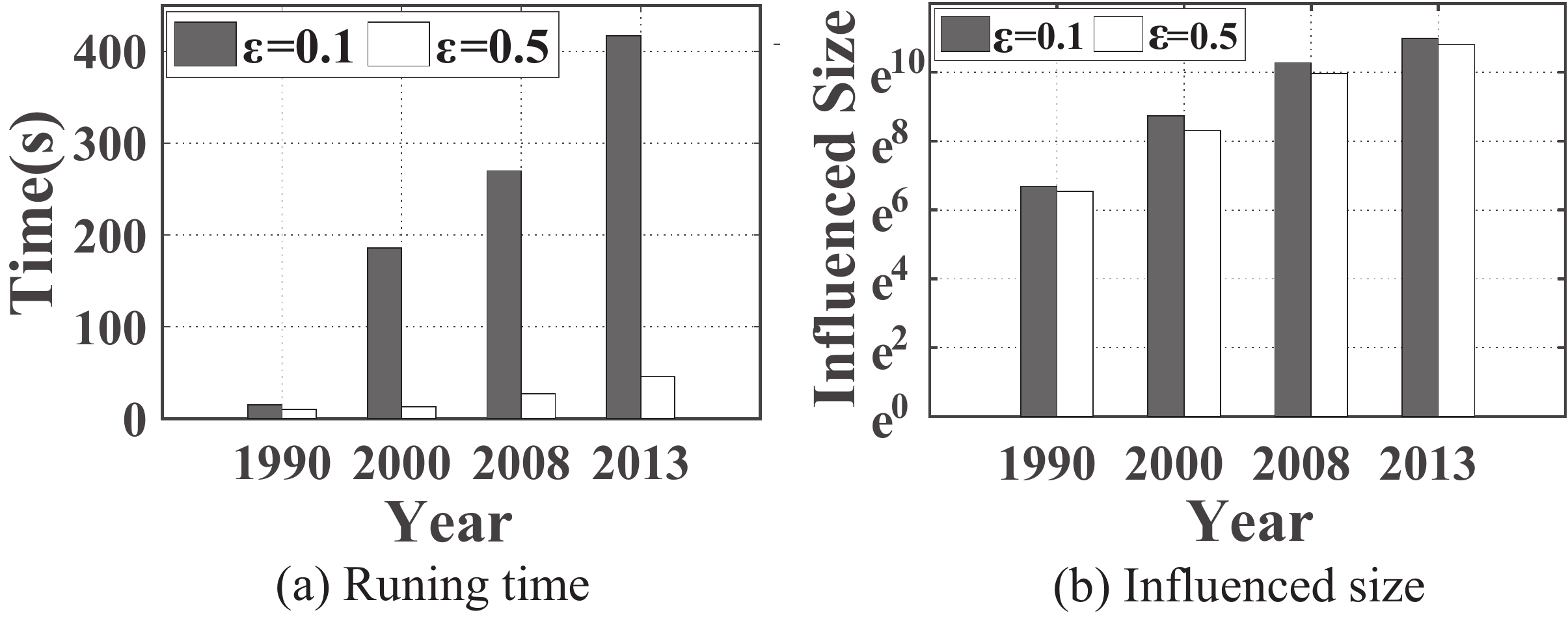}
  \vspace{-4mm}
  \caption{Effect of $\varepsilon$ over ML}\label{EB}
  \end{figure}

%

\section{Conclusion}
This paper investigates the influence maximization in evolving networks where new users continuously join with influence diffusion. A bandits based framework $\mathbb{EIM}$ is proposed to simultaneously design IM and learn network knowledges. In each trial, a particle learning method is first adopted to learn the network growing speed based on the preferential attachment rule. And an UCB based framework is designed to learn evolving influences among users. Under the refined growing speed and influences, we propose an evolving IM algorithm {\bfseries Evo-IMM} to efficiently select the seed users for evolving IM. We show that the regret bound of $\mathbb{EIM}$ is sublinear to the number of trials. At last, the experiments on both real and synthetic evolving network datasets demonstrate that {\bfseries EIM} outperforms four baselines in solving {\bf EIM} problem.

\bibliographystyle{unsrt}

 \appendices
 \section{Proof for Lemma 1}\label{hardnessproof1}
 
Lemma \ref{hardness}. The {\bfseries EIM} problem is {\emph NP-hard}. The computation of $I(S, G_{t+T})$ is {\emph \#P-hard}. And the objective function $I(S, G_{t+T})$ is monotone and submodular.

\begin{proof}
{\bf NP-hardness.} To prove the NP-hardness, we reduce the problem proposed in Eqn. (\ref{problem}) to the `set cover' problem described as follows. Give a universe $\mathcal{U}=\{x_{1}, x_{2}, ..., x_{n}\}$ and a collection $\mathcal{C}$ of subsets $\mathcal{C}=\{C_{1}, C_{2},..., C_{n'}\}$, the goal of `set cover' is to find a cover $\mathcal{A}\subseteq \mathcal{C}$ with size $K$ whose union equals to the universe $\mathcal{U}$. Then the `set cover' is reduced to the evolving IM problem as follows. We construct a corresponding bipartite graph $G$ that consists of the subset partition and the element partition. In subset partition, there are $n'$ nodes representing the subsets in collection $\mathcal{C}$. And the element partition consists of $n$ nodes representing the elements in $\mathcal{U}$. If element $x_{i}\in C_{j}$, there is an edge with the weight being $1$ from node $C_{j}$ to node $x_{i}$ in $G$. Then the `set cover' problem is equivalent to deciding whether there is a set of $K$ nodes in $G$ with the influence being $K+n$. Since the `set cover' problem is NP-hard, the evolving IM problem is NP-hard. 

{\bf \#P-hardness.} To prove the \#P-hardness, we reduce computing $I(S, G_{t+T})$ from the {\emph S-D connectivity} counting problem described as follows. Given a graph $G=(V,E)$ and a pair of Source (S) and Destination (D) nodes, the {\emph S-D connectivity} problem is to compute the probability that S and D are connected given each edge in $G$ has an independent probability of $0.5$ to be connected. We reduce the  {\emph S-D connectivity} to computing $I(S, G_{t+T})$ as follows. 
 Assuming that the edges in $G_{t+T}$ has an independent probability of $p=0.5$ to be connected, computing $I(S, G_{t+T})$ is equivalent to counting the expected number of nodes that are connected to the nodes belonging to $S$. Since the {\emph S-D connectivity} counting problem is \#P-hard, the computation of $I(S, G_{t+T})$ is \#P-hard.

{\bf Monotonicity.} We consider an instance of $G^{r+1}$, i.e., $\overline{G}^{r+1}$ where the state of each edge $e$ is determined by flipping a coin of bias $w_{e,r+1}$. If the coin representing edge from node $u$ to node $v$ flips, user $u$ can successfully influence user $v$ after he has been influenced. Let $S_{1}$ and $S_{2}$ denote two seed sets with $S_{1}\subseteq S_{2}$, and $I(S_{1})$ and $I(S_{2})$ respectively denote the nodes influenced by $S_{1}$ and $S_{2}$ over $\overline{G}^{r+1}$. For any node $a$ in $I(S_{1})$, since $S_{1}\subseteq S_{2}$, there must be an active path from a node in $S_{2}$ to $a$ . Thus we have  $I(S_{1})\subseteq I(S_{2})$, which demonstrates the monotonicity of influence function in Eqn. (\ref{EIM problem}). 

 {\bf Submodularity.} Furthermore, let $S_{3}=S_{1} \cup x$, $S_{4}=S_{2} \cup x$, $I(S_{3}\backslash S_{1})=I(S_{3})\backslash I(S_{1})$ and $I(S_{4}\backslash S_{2})=I(S_{4})\backslash I(S_{2})$. For a node $a\in I(S_{4}\backslash S_{2})$, there is an active path from $x$ to $a$ while no active path from $S_{1}$ to $a$. Since $S_{1}\subseteq S_{2}$, $S_{1}$ cannot influence $a$ over $\overline{G}^{r+1}$. Thus $a \in I(S_{3}\backslash S_{1})$ and $I(S_{4}\backslash S_{2}) \subseteq I(S_{3}\backslash S_{1})$, which demonstrates the submodularity of influence function in Eqn. (\ref{EIM problem}). \end{proof}

 \section{Proof for Lemma 2}\label{Lemma 4.1}

Lemma 2. Given the degree of node $v_{n}$ at time $t$ is $d_{n}^{t}$ and the period $T$ of each trial,  we have
\begin{equation}
\notag \mathbb{E}(d_{n}^{T+t})=d_{n}^{t}\cdot \prod_{s=1}^{m[n(t+T)-n(t)]} \left(1+\frac{1}{\sum_{v_{j}\in V_{t}}d_{j}^{t}+(2s-1)}\right).
\end{equation}
\begin{proof}
We first consider a special case when $m=1$. According to Eqn. (\ref{PR}), at each evolving time slot, we have
\begin{equation}
\label{degree1}\mathbb{E}(d_{n}^{t+\Delta t})=d_{n}^{t}\cdot \left(1+\frac{1}{\sum_{v_{j}\in V_{t}}d_{j}^{l}+1}\right).
\end{equation}Since a new edge establishes in time slot $t+\Delta t$, the total degrees of all nodes after time slot $t+\Delta t$ becomes $\sum_{v_{j}\in V_{t}}d_{j}^{l}+2$. Then the expected degree of node $v_{n}$ at time slot  $t+2\Delta t$ is
\begin{align}
\label{degree2}\mathbb{E}(d_{n}^{t+2\Delta t})&=\mathbb{E}(d_{n}^{t+\Delta t})\cdot\left(1+\frac{1}{\sum_{v_{j}\in V_{t}}d_{j}^{l}+3}\right)\\
\vspace{-1mm}
\label{degree3}&=d_{n}^{t}\cdot \prod_{s=1}^{2} \left(1+\frac{1}{\sum_{v_{j}\in V_{t}}d_{j}^{t}+(2s-1)}\right).
\vspace{-1mm}
\end{align}Here, $\Delta t$ denotes an evolving slot. Under the growing speed $n(t)$, there are $n(T+t)-n(t)$ new nodes joining the network during $t$ to $t+T$. Thus there are $n(T+t)-n(t)$ evolving time slots during $t$ to $t+T$, and we have
\begin{equation}\label{degree4}
 \mathbb{E}(d_{n}^{T+t})=d_{n}^{t}\cdot \prod_{s=1}^{n(t+T)-n(t)} \left(1+\frac{1}{\sum_{v_{j}\in V_{t}}d_{j}^{t}+(2s-1)}\right).
\end{equation}

Then we consider the general cases when $m\geq 2$. Under the PA rule, the $m$ new edges brought by a same new node are respectively established in $m$ evolving time slots. Thus there are $m[n(T+t)-n(t)]$ evolving time slots during $t$ to $t+T$ in the general cases. Then  Eqn. (\ref{degree4}) inductively becomes 
\begin{equation}
\notag \mathbb{E}(d_{n}^{T+t})=d_{n}^{t}\cdot \prod_{s=1}^{m[n(t+T)-n(t)]} \left(1+\frac{1}{\sum_{v_{j}\in V_{t}}d_{j}^{t}+(2s-1)}\right).
\end{equation} Thus we complete the proof for Lemma \ref{degree}. 
\end{proof}
 \section{Derivations for \textnormal{$\mathbb{E}_{i}(d_{e}^{r+1})$}}\label{derivations for added degrees}
In each trial, we first take the growing function $n_{i}(t)$ into Lemma \ref{degree} to compute the expected incremental degrees of the observed nodes.  Such incremental degrees are then taken as the prior value of particle $\rho_{i}$. Let the real incremental degrees of observed nodes serve as the ground truth. Then a resampling process is conducted to resample the particles whose prior values are near the ground truth as more new particles,  while killing those with large deviations from the ground truth.

\begin{figure}[h]
 \centering
\centering
  \includegraphics[width=0.48\textwidth]{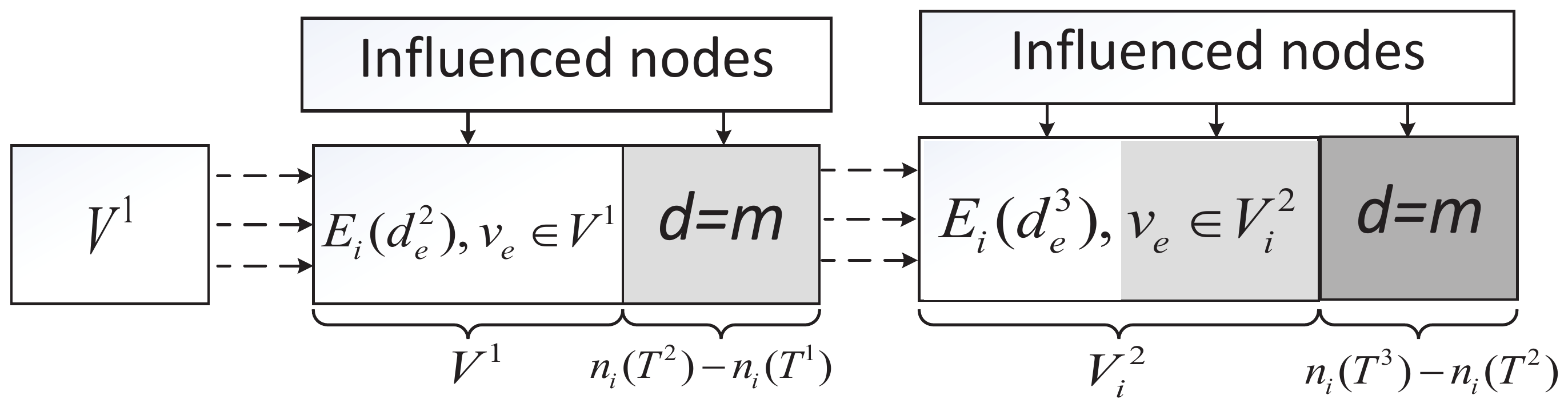}
  \vspace{-4mm}
  \caption{A sketch of evolving process under particle $\rho_{i}$ in the first three trials.}\label{evosketch}
   \vspace{-2mm}
  \end{figure}
  
{\bfseries Evolving process under particle $\rho_{i}$.} The evolving process starts from the given initial nodes set at the beginning of the first trial $V^{1}$, which is the same for all the particles. 
A sketch that contains the first three trials is shown in Figure \ref{evosketch}. Let $V_{i}^{r}$ denote the evolving nodes set under particle $\rho_{i}$ until time $T^{r}$ with $|V_{i}^{r}|=n_{i}(T^{r})$. In the first trial,  given the initial degree of a node $v_{e} \in V^{1}$ being $d_{e}^{1}$, according to Lemma \ref{degree}, its expected degree until time $T^{2}$ can be estimated as
{\small \begin{equation}\label{expected degree1}
\notag \mathbb{E}_{i}(d_{e}^{2})=d_{e}^{1}\cdot \prod_{s=1}^{m[n_{i}(T^{2})-n_{i}(T^{1})]}\left(1+\frac{1}{\sum_{v_{j}\in V^{1}}d_{j}^{1}+(2s-1)}\right).
\end{equation}}Besides, from time $T^{1}$ to $T^{2}$, there are $n_{i}(T^{2})-n_{i}(T^{1})$ newly added nodes in expectation under particle $\rho_{i}$, and the degrees of such nodes in $V_{i}^{2}\backslash V^{1}$ are uniformly expected as $m$, as shown in the middle part of Figure \ref{evosketch}. Here, $|V_{i}^{2}\backslash V^{1}|=n_{i}(T^{2})-n_{i}(T^{1})$. Then in the $2$-nd trial, given influenced nodes set during $T^{1}$ to $T^{2}$ being $O(T^{1})$, the degrees of nodes in $O(T^{1})$ are updated by their real observed degrees, while the others in $V_{i}^{2}\backslash O(T^{2})$ still reserve their estimating degrees. Thus for $v_{e}\in V_{i}^{2}$, its expected degree until $T^{3}$ equals to 
\vspace{-1mm}
{\small \begin{equation}\label{expected degree}
\notag \mathbb{E}_{i}(d_{e}^{3})=\widetilde{d}_{e}^{2}\cdot \prod_{s=1}^{m[n_{i}(T^{3})-n_{i}(T^{2})]} \left(1+\frac{1}{\sum_{v_{j}\in V_{i}^{2}}\widetilde{d}_{j}^{\,2}+(2s-1)}\right).
\vspace{-1mm}
\end{equation}} And $\widetilde{d}_{e}^{\,2}, v_{e}\in V_{i}^{2}$ (the updated degrees of nodes in $V_{i}^{2}$) is defined as 
{\small
\begin{equation}\label{added degrees 2}
\notag \widetilde{d}_{e}^{2}=\begin{cases}
d_{e}^{2}, v_{e}\in O( T^{1})\\
 \mathbb{E}_{i}(d_{e}^{2}), v_{e}\in V^{1}\backslash O( T^{1})\\
 m,  v_{e}\in V^{2}_{i}\backslash \left(V^{1} \bigcup O(T^{1})\right).
\end{cases}
\end{equation}}Based on the analysis above, we can inductively obtain the expected degrees until time $T^{r+1}$ in the $r$-th trial. Specially,  
{\small\begin{equation}\label{expected degree}
\mathbb{E}_{i}(d_{e}^{r+1})=\widetilde{d}_{e}^{r}\cdot \prod_{s=1}^{m[n_{i}(T^{r})-n_{i}(T^{r})]} \left(1+\frac{1}{\sum_{v_{j}\in V_{i}^{r}}\widetilde{d}_{j}^{r-1}+(2s-1)}\right).
\end{equation}}Accordingly, $\widetilde{d}_{e}^{r}  (v_{e}\in V_{i}^{r})$ is defined as 
{\small \begin{equation}\label{added degrees r}
\notag \widetilde{d}_{e}^{r-1}=\begin{cases}
d_{e}^{r}, v_{e}\in O( T^{r-1})\\
 \mathbb{E}_{i}(d_{e}^{r-1}), v_{e}\in V^{r-1}_{i}\backslash O( T^{r-1})\\
 m,  v_{e}\in V^{r}_{i}\backslash \left(V_{i}^{r-1} \bigcup O(T^{r-1})\right).
\end{cases}
\end{equation}}
 \section{ Proof for Lemma 3}\label{proof for Kalman Gain}

 Lemma 3. The Kalman Gain in the refinement of $w_{e}$ in the $r$-th trial is determined by
\begin{displaymath}
\mathbf{G}_{e,r}=\mathbf{\Sigma}_{e, r-1}\cdot \mathbf{Q}_{e,r}^{-1},
 \end{displaymath}
where $\mathbf{Q}_{e,r}=\mathbf{\Sigma}_{e, r}+1$ denotes variance of the activating result via $e$.
\begin{proof}
In the refinement of $w_{e}$ in $r$-th trial, the Kalman Gain $\mathbf{G}_{e,r}$ is determined by minimizing the mean square estimation error of $w_{e,r}$, i.e.,
\vspace{-1mm}
\begin{align}\label{kalman1}
&\mathbf{G}_{e,r} = \mathop{\arg\min}_{M\in \mathbb{R}}\mathbb{E}\left[ \left(\overline{w}_{e,r}'-w_{e,r} \right)^2\right], \\
&\textnormal{where} \quad  \overline{w}_{e,r}' =\overline{w}_{e,r-1}' +M\cdot \left(z_{e,r}-\overline{w}_{e,r-1}'\right).
\vspace{-1mm}
\end{align}Here, $\mathbb{R}$ represents the set of  real numbers. By minimizing  the objective function in Eqn. (\ref{kalman1}), the Kalman Gain in refinement is formulated as $\mathbf{G}_{e,r}=\mathbf{\Sigma}_{e, r-1}\cdot \mathbf{Q}_{e,r}^{-1}$, where $\mathbf{Q}_{e,r}$ denotes the variance of the activating result via edge $e$. With the consideration of both the variance of $w_{e}$ and the observing error, $\mathbf{Q}_{e,r}$ is formalized as $\notag \mathbf{Q}_{e,r}=\mathbf{\Sigma}_{e, r-1}+\sigma^{2}$,
where $\sigma^{2}$ denotes the square observing error of Bernoulli distribution $\mathcal{B}(w_{e,r})$ with $0\leq\sigma^{2}\leq1$, and we set $\sigma^{2}$ as its maximum value $1$ (e.g., $w_{e,r}=0$ and $z_{e,r}=1$). Then the distribution of the weight of edge $e$ is refined with $\mathbf{G}_{e, r}$ and $\mathbf{Q}_{e, r}$ as Eqn. (\ref{mean}) and Eqn. (\ref{square}).
\end{proof}

 \section{ Proofs for Lemmas 4  and 6}\label{hardnessproof2}

Lemma \ref{submodular}. The influence function in Eqn. (\ref{problem}) is monotonous and submodular.

\begin{proof}
{\bf Monotonicity.} We consider an instance of $\mathbb{G}^{r}$, i.e., $\overline{\mathbb{G}}^{r}$ where the state of each edge $e$ is determined by flipping a coin of bias $w_{e,r}$. If the coin of edge from node $u$ to node $v$ flips, user $u$ can successfully influence user $v$ after he has been influenced. Let $S_{1}$ and $S_{2}$ denote two seed sets with $S_{1}\subseteq S_{2}$, and $I(S_{1}, \overline{\mathbb{G}}^{r})$ and $I(S_{2}, \overline{\mathbb{G}}^{r})$ respectively denote the weights sum of nodes influenced by $S_{1}$ and $S_{2}$ over $\overline{\mathbb{G}}^{r}$. For any node $a$ influenced by $S_{1}$, there must be an active path from a node in $S_{2}$ to $a$ since $S_{1}\subseteq S_{2}$. Thus we have  $I(S_{1}, \overline{\mathbb{G}}^{r})\leq I(S_{2}, \overline{\mathbb{G}}^{r})$, which demonstrates the monotonicity of influence function in Eqn. (\ref{problem}). 

{\bf Submodularity.} Furthermore, let $S_{3}=S_{1} \cup x$, $S_{4}=S_{2} \cup x$, $I(S_{3}\backslash S_{1}, \overline{\mathbb{G}}^{r})=I(S_{3}, \overline{\mathbb{G}}^{r})-I(S_{1}, \overline{\mathbb{G}}^{r})$ and $I(S_{4}\backslash S_{2}, \overline{\mathbb{G}}^{r})=I(S_{4}, \overline{\mathbb{G}}^{r})-I(S_{2}, \overline{\mathbb{G}}^{r})$. For a node $a$ that can be influenced by $x$ while cannot be influenced by $S_{2}$ over $\overline{\mathbb{G}}^{r}$, there is an active path from $x$ to $a$ while no active path from $S_{2}$ to $a$. Since $S_{1}\subseteq S_{2}$ and $S_{3}=S_{1} \cup x$, the node $a$ cannot be influenced by $S_{1}$ while can be influenced by $S_{3}$ over  $\overline{\mathbb{G}}^{r}$. Thus we have $I(S_{3}\backslash S_{1}, \overline{\mathbb{G}}^{r})\geq I(S_{4}\backslash S_{2}, \overline{\mathbb{G}}^{r})$, which demonstrates the submodularity of influence function in Eqn. (\ref{problem}). 
\end{proof}

Lemma 6. If the Inequalities (\ref{R}) and  (\ref{R1}) hold, with at least $(1-1/2n^{l'})$ probability, we have $\mathbb{E}[I(S^{r}, \mathbb{G}^{r})]\geq (1-1/e-\varepsilon)\cdot OPT$. 

\begin{proof}
Let $S$ be a seed set with size $K$. We say $S$ is a bad seed set if $\mathbb{E}[I(S, \mathbb{G}^{r})]\leq (1-1/e-\varepsilon)\cdot OPT$. Since the number of bad sets is at most $\binom{n}{K}$, proving Lemma \ref{approximation ratio} is equivalent to proving that any bad set $S$ has a probability of at most $n^{l'}/\binom{n}{K}$ to be returned by the NodeSelection phase. If $S$ is returned, there must be $F_{\mathcal{R}}(S)\geq F_{\mathcal{R}}(S^{r})$.
Thus the probability of $S$ being returned by the NodeSelection phase is upper bounded by $Pr[F_{\mathcal{R}}(S)\geq F_{\mathcal{R}}(S^{r})]$. Then 
{\small \begin{align}
\notag &Pr[F_{\mathcal{R}}(S)\geq F_{\mathcal{R}}(S^{r})]\\
\notag =&Pr\left[\frac{n'}{\theta'}F_{\mathcal{R}}(S)\geq \frac{n'}{\theta'}F_{\mathcal{R}}(S^{r})\right]\\
\label{R3} =&Pr\left[\frac{n'}{\theta'}F_{\mathcal{R}}(S)-\mathbb{E}[I(S, \mathbb{G}^{r})]\geq \frac{n'}{\theta'}F_{\mathcal{R}}(S^{r})-\mathbb{E}[I(S, \mathbb{G}^{r})]\right]
\end{align}}From Inequality (\ref{R1}) and the property of bad sets, we have 
\begin{align}
\notag &\frac{n'}{\theta'}F_{\mathcal{R}}(S^{r})-\mathbb{E}[I(S, \mathbb{G}^{r})]\\
\notag \geq& (1-1/e)(1-\varepsilon_{1})\cdot OPT-(1-1/e-\varepsilon)\cdot OPT\\
\notag=& (\varepsilon-(1-1/e)\varepsilon_{1})\cdot OPT. 
\end{align}Let $\varepsilon_{2}=\varepsilon-(1-1/e)\varepsilon_{1}$ and $\mathbb{E}[I(S, \mathbb{G}^{r})]=n'p$, then Eqn. (\ref{R3}) becomes 
\begin{align}
\notag &Pr\left[\frac{n'}{\theta'}F_{\mathcal{R}}(S)-n'p\geq \varepsilon_{2}\cdot OPT\right]\\
\notag =&Pr\left[F_{\mathcal{R}}(S)-\theta'p\geq \frac{\varepsilon_{2}OPT}{n'p}\cdot \theta'p\right].
\end{align}Let $\xi=\frac{\varepsilon_{2}OPT}{n'p}$, by the Chernoff bound, we have

\begin{align}
\notag &Pr\left[F_{\mathcal{R}}(S)-\theta'p\geq \xi\cdot \theta'p\right]\\
\notag \leq &\exp \left(-\frac{\xi^{2}}{2+\xi} \cdot \theta'p\right)\\
\notag=&\exp\left(-\frac{\varepsilon_{2}^{2}\cdot OPT^{2}}{2n'^{2}p+\varepsilon_{2}\cdot OPT \cdot n'}\cdot \theta'\right)\\
\notag\leq&\exp\left(-\frac{\varepsilon_{2}^{2}\cdot OPT^{2}}{2n'(1-1/e-\varepsilon)\cdot OPT+\varepsilon_{2}\cdot OPT \cdot n'}\cdot \theta'\right)\\
\notag\leq&\exp\left(-\frac{(\varepsilon-(1-1/e)\cdot\varepsilon_{1})^{2}\cdot OPT}{(2-2/e)\cdot n'}\cdot \theta'\right)\\
\notag \leq &\exp\left(-\log(\binom{n}{K}\cdot (2n^{l'}))\right)\\
\notag \leq &n^{-l'}/\left[2\cdot \binom{n}{K}\right].
\end{align}Then by the union bound, the probability that NodeSelection phase returns a bad seed set is upper bounded by $\left[n^{-l'}/\left(2\cdot \binom{n}{K}\right)\right]\cdot \binom{n}{K}=1/2n^{l'}$. Thus, with a probability of at least $(1-1/2n^{l'})$, the NodeSelection phase returns a seed set that satisfies $\mathbb{E}[I(S^{r}, \mathbb{G}^{r})]\geq (1-1/e-\varepsilon)\cdot OPT$. 
\end{proof}

\section{Proof for Lemma 7}\label{proof for efficiency}

Lemma 7. The time complexity of {\bfseries Evo-IMM} is $O\big((K+l\ \left((n+m)+\frac{n}{OPT}\big)\log n/\varepsilon^{2}\right)$, where $n=|\mathbb{V}^{r}|$ and $m=|\mathbb{E}^{r}|$.

     \begin{proof}
     We divide the time costs of {\bf Evo-IMM} into two parts, Sampling \& NodeSelection and Priority-based sampling. 
     
{\bf Sampling \& NodeSelection.} Let $EPT$ denote the expected number of edges pointing to the nodes in an ERR-set. Since generating an ERR-set needs to traverse all the edges inside the set , each ERR-set costs $O(EPT)$ time in the sampling phase. According to the analysis in the general IMM framework \cite{IMM}, the size of $\mathcal{R}$ is $|\mathcal{R}|=O\big((K+l)n\log n \cdot  \varepsilon^{-2}/OPT\big)$. And the NodeSelection runs in the time linear to the size of $\mathcal{R}$ since it corresponds to the standard greedy approach for the maximum coverage problem. Thus Sampling and NodeSelection cost $O(|R|\cdot EPT)$ expected time. 
Besides, by the analysis of the general IMM framework, we have $n\cdot EPT\leq m\cdot OPT$. Then the time complexity becomes
\begin{displaymath}
O(|\mathcal{R}|\cdot EPT)=O\left((K+l)(n+m)\log n \cdot  \varepsilon^{-2}\right). 
\end{displaymath}

{\bf Priority-based sampling.} Priority-based sampling is called by Sampling to generate ERR-sets. In each calling, Priority-based sampling deletes the root node of the ERR-set that has just been sampled to update the sampling interval and then generates a new ERR-set. Thus the total time cost of Priority-based sampling is linear to $O(|\mathcal{R}|)$, which is the number of ERR-sets needed.
 
Summing up the time costs of the two parts, the expected total time involved in the evolving seed selection algorithm {\bfseries Evo-IMM}  is $O\big((K+l)\big((n+m)+\frac{n}{OPT}\big)\log n/\varepsilon^{2}\big)$. 
\vspace{-2mm}
\end{proof}

 \section{ Proof for \textnormal{ Inequality (22)}}\label{AC}
 For a node $u\in T_{r,v}$, we define the probability that it is being influenced as $h_v(u,\vec{w})$. Thus, if $u\in S^{r},$ $h_v(u,\vec{w})=1$, otherwise we have 
  \begin{displaymath}
h_v(u,\vec{w}) = 1 - \prod_{u'\in\mathcal{C}(u) } \Big( 1- w_{u'u} h_v(u',\vec{w})\Big), 
	\end{displaymath} where $\mathcal{C}(u)$ denotes the set of neighbors of node $u$, and  $w_{u'u}$ denotes the weight of edge between $u'$ and $u$. 
Considering the difference between influence diffusion under  $\vec{w}_{r}$ and  $\vec{w}_{r}'$, we have

{\small \begin{align}
\notag &h_v(u,\vec{w}_{r}') -h_v(u,\vec{w}_{r}) \\
\notag = &\prod_{u'\in\mathcal{C}(u) } \Big( 1- w_{u'u}^{'} h_v(u',\vec{w}_{r}')\Big)-\prod_{u'\in\mathcal{C}(u) } \Big( 1- w_{u'u} h_v(u',\vec{w}_{r})\Big)\\
\label{a7} \leq  &\sum_{u'\in\mathcal{C}(u) } \Big(w_{u'u}^{'}h_v(u',\vec{w}_{r}')-w_{u'u}h_v(u',\vec{w}_{r})\Big)\\ \notag \leq & \sum_{u'\in\mathcal{C}(u) } \Big(h_v(u',\vec{w}_{r}') - h_v(u',\vec{w}_{r}) + (w_{u'u}^{'}-w_{u'u})h_v(u',w_{u'u})\Big)\\
\label{41}  \leq & \sum_{u'\in\mathcal{C}(u) } \Big ((w_{u'u}^{'}-w_{u'u})h_v(u',w_{u'u})\Big)
\end{align}}The Inequality (\ref{a7}) is obtained according to the following lemma: 
\begin{lemma}\label{a8} 
(\cite{semi-bandit}.) Given $a_1,\cdots,a_n,b_1,\cdots,b_n \in (0,1)$ and $a_{k} \leq b_{k}, k = 1,\cdots,n$, then 
	\begin{displaymath}
			\prod_{k=1}^n b_k - \prod_{k=1}^n a_k \leq \sum_{k=1}^{n}(b_k - a_k). 
		\end{displaymath}
\end{lemma}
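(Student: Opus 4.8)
The plan is to prove this elementary product-difference inequality by a telescoping (hybrid) decomposition that swaps the factors $a_k$ for $b_k$ one index at a time. First I would write the difference of the two products as a telescoping sum
$$\prod_{k=1}^n b_k - \prod_{k=1}^n a_k = \sum_{j=1}^n \left( \prod_{k=1}^{j} b_k \prod_{k=j+1}^n a_k - \prod_{k=1}^{j-1} b_k \prod_{k=j}^n a_k \right),$$
where by convention an empty product equals $1$, so that the $j=0$ boundary term is $\prod_{k=1}^n a_k$ and the $j=n$ boundary term is $\prod_{k=1}^n b_k$. Each consecutive pair of hybrid products differs only in the $j$-th factor, so the $j$-th summand factors cleanly as $\left(\prod_{k<j} b_k\right)(b_j - a_j)\left(\prod_{k>j} a_k\right)$.

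The key observation is then that every factor lies in $(0,1)$, so the two flanking products satisfy $\prod_{k<j} b_k \leq 1$ and $\prod_{k>j} a_k \leq 1$; combined with $b_j - a_j \geq 0$, this bounds the $j$-th summand above by $b_j - a_j$. Summing over $j$ yields the claimed inequality $\prod_{k=1}^n b_k - \prod_{k=1}^n a_k \leq \sum_{j=1}^n (b_j - a_j)$. The only analytic inputs are the pointwise monotonicity $a_k \leq b_k$ and the containment in $(0,1)$, which is precisely what forces each partial product to be bounded by one.

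An alternative route is a short induction on $n$: the base case $n=1$ is an equality, and in the inductive step one adds and subtracts $b_{n+1}\prod_{k\leq n}a_k$ to obtain $b_{n+1}\big(\prod_{k\leq n}b_k - \prod_{k\leq n}a_k\big) + (b_{n+1}-a_{n+1})\prod_{k\leq n}a_k$, then applies the induction hypothesis together with $b_{n+1}\leq 1$ and $\prod_{k\leq n}a_k \leq 1$. Either proof is routine, and there is essentially no real obstacle here beyond bookkeeping of the empty-product conventions and the index ranges. I would favor the telescoping argument for its transparency, since it makes explicit where each $(b_j - a_j)$ contribution originates and it immediately reveals that the inequality can only be tight in degenerate cases where the flanking products equal one.
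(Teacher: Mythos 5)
Your proof is correct, and both routes you sketch (the telescoping hybrid decomposition and the induction on $n$) are sound; the telescoping identity is stated with the right index conventions, each summand factors as $\bigl(\prod_{k<j} b_k\bigr)(b_j-a_j)\bigl(\prod_{k>j} a_k\bigr)$, and bounding the flanking products by $1$ gives the claim. There is, however, nothing in the paper to compare against: the paper does not prove this lemma at all, but simply quotes it from the cited reference \cite{semi-bandit} as an imported tool used in the regret analysis of Appendix G. So your contribution is a complete, self-contained proof where the paper offers only a citation, and it is the standard argument one would expect to find in that reference. Two small remarks: first, your proof never uses the full hypothesis $a_k, b_k \in (0,1)$ --- only $0 \leq a_k \leq b_k \leq 1$ is needed, so the lemma holds on the closed interval as well, which is actually relevant here since edge weights and their UCB estimates can sit at the endpoints; second, your closing comment about tightness is slightly imprecise --- for $n \geq 2$ with values strictly inside $(0,1)$ the flanking products are strictly below $1$, so equality forces $a_j = b_j$ for every $j$ (and for $n=1$ the statement is an identity), rather than there being nondegenerate tight cases. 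Neither point affects the validity of the proof.
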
On the other hand, if node  $u\in T_{r,v}$ is observed, there must be a fact that the edge from $S^{r}$ to $u$ is triggered. Thus for the root node $v$ of $T_{r,v}$, based on Inequality (\ref{41}), we further have the following conclusion from the edge level, i.e., 
	\begin{equation}\label{a9}
		h_v(v,\vec{w}_{r}') -h_v(v,\vec{w}_{r}) \leq \sum_{w_{e}\in T_{r,v}}  \mathbb{E}\left[\mathbb{I}(o_e^r)(w_{e, r}'- w_{e, r})\right].
	\end{equation}Since $I(S^{r}, \vec{w}_{r}')$ and $I(S^{r}, \vec{w}_{r})$ denote the expected influenced size of seed set $S^{r}$ under $\vec{w}_{r}'$ and $\vec{w}_{r}$ respectively, by summing Eqn. (\ref{a9}) over $V^{r+1}\setminus S^{r}$, we  have
 \begin{equation}
\notag \mathbb{E}\left[I(S^{r}, \vec{w}_{r}')-I(S^{r},\vec{w}_{r})\right] \leq  \sum_{v\in V^{r+1}\setminus S^{r}}\sum_{v_{e}\in T_{r,v}}\mathbb{E}\left[\mathbb{I}(o_e^r)(w_{e,r}' -w_{e,r})\right]. 
\end{equation}Thus we end the proof for Inequality (\ref{L1}). 
\begin{figure*}[t]
 \centering
 \vspace{-3mm}
\centering
  \includegraphics[width=1\textwidth]{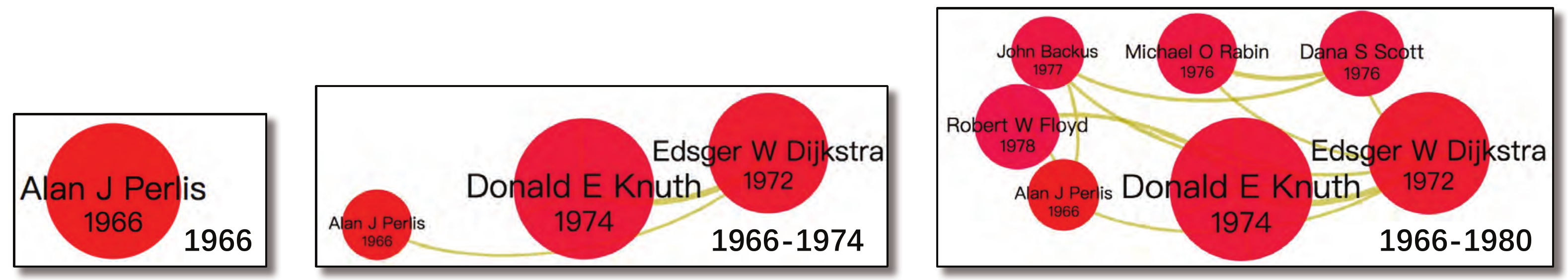}
  \vspace{-2mm}
  \caption{The evolving cooperation network of Turing awardees}\label{Turing4}
   \vspace{-1mm}
  \end{figure*}
    \begin{figure*}[t]
 \centering
 \vspace{-3mm}
\centering
  \includegraphics[width=0.9\textwidth]{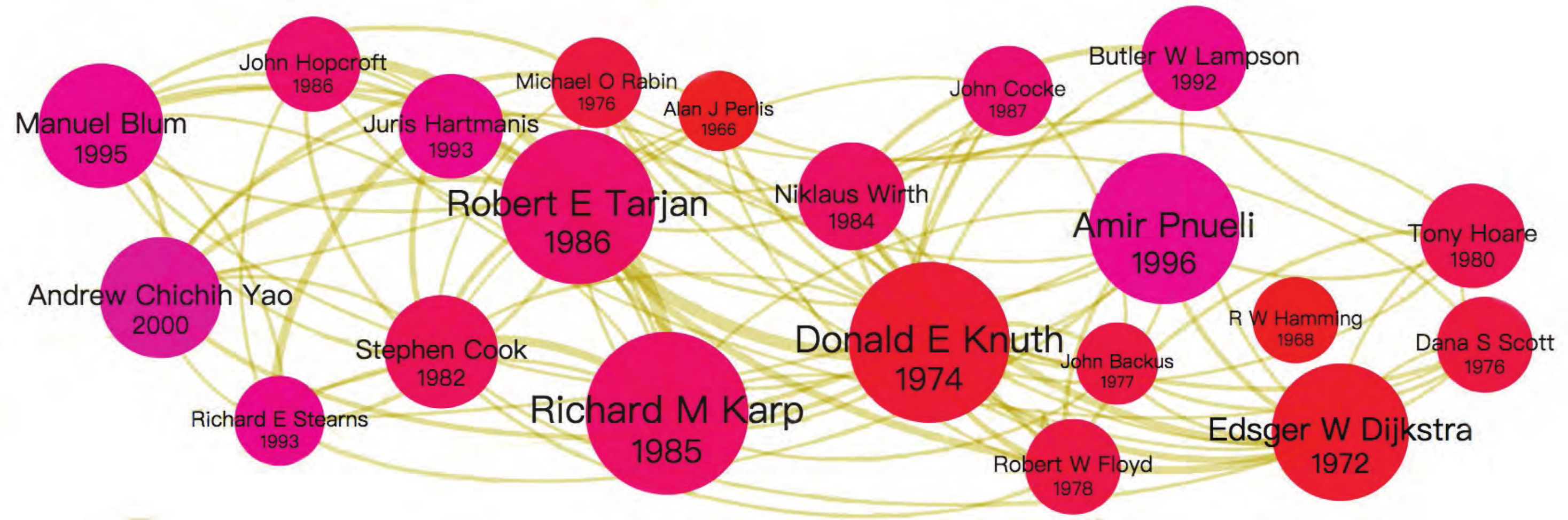}
  \vspace{-2mm}
  \caption{The evolving cooperation network of Turing awardees (1966-2000)}\label{Turing5}
   \vspace{-4mm}
  \end{figure*}
 \section{ Proofs for Lemmas 10, 11 and 12}
\begin{lemma}\label{b1}
If $0<k\leq 1$ and $r\leq 2$, then 
\begin{equation}
\frac{3}{r^{k}} - \frac{3}{(r-1)^k + 3} \geq \frac{1}{r^{2k}}.
\end{equation}
\end{lemma}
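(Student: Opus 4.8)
The plan is to clear denominators and reduce the claim to a single elementary polynomial inequality in the two quantities $a:=r^{k}$ and $b:=(r-1)^{k}$, and then exploit the fact that the hypotheses $0<k\le 1$ and $r\le 2$ confine $a$ to a range where the dominant quadratic term is automatically nonpositive. Since $r\ge 1$ is implicit here (so that $(r-1)^{k}$ is well defined and nonnegative), every denominator in the statement, namely $r^{k}$, $(r-1)^{k}+3$, and $r^{2k}$, is strictly positive. First I would multiply the target inequality $\frac{3}{r^{k}}-\frac{3}{(r-1)^{k}+3}\ge \frac{1}{r^{2k}}$ through by the positive quantity $r^{2k}\big((r-1)^{k}+3\big)$. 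Using $r^{2k}=a^{2}$, this turns the claim into $3a(b+3)-3a^{2}\ge b+3$, which after collecting terms becomes the single equivalent inequality
\begin{equation}\label{reducedb1}
b(3a-1)\ \ge\ 3\big(a^{2}-3a+1\big).
\end{equation}

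The key observation is then a sign analysis of the two sides of \eqref{reducedb1}. The quadratic $a^{2}-3a+1$ is nonpositive exactly when $a\in\big[\tfrac{3-\sqrt5}{2},\tfrac{3+\sqrt5}{2}\big]$. I would next invoke the elementary monotonicity fact that for $r\ge 1$ the map $k\mapsto r^{k}$ is nondecreasing (its logarithmic derivative is $\ln r\ge 0$), so $0<k\le 1$ gives $a=r^{k}\le r^{1}=r\le 2$; together with $a\ge 1$ this places $a$ in $[1,2]$, which is a subinterval of $\big[\tfrac{3-\sqrt5}{2},\tfrac{3+\sqrt5}{2}\big]$ because $\tfrac{3+\sqrt5}{2}\approx 2.618>2$. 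Hence the right-hand side of \eqref{reducedb1} is $\le 0$. On the left-hand side, $b=(r-1)^{k}\ge 0$ and $3a-1\ge 3\cdot 1-1=2>0$, so the product is $\ge 0$. Therefore \eqref{reducedb1} holds as a comparison of a nonnegative with a nonpositive quantity, and reversing the (equivalence-preserving) multiplications recovers the stated inequality.

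The main thing to watch is not a hard estimate but the bookkeeping of the reduction and the degenerate endpoints. I would confirm that the multiplication step is a genuine equivalence, which only needs all three factors to be strictly positive, and this in turn needs $r\ge 1$. I would also sanity-check the two extreme cases against \eqref{reducedb1}: at $r=1$ one has $b=0$, so the claim reads $0\ge 3(1-3+1)=-3$, and at $r=2,\ k=1$ it reads $1\cdot 5\ge 3(4-6+1)=-3$, both consistent with the sign argument. The sole quantitative input is the numeric bound $2<\tfrac{3+\sqrt5}{2}$, immediate from $\sqrt5>1$. I do not anticipate a real obstacle: once the problem is cast as \eqref{reducedb1}, the hypothesis $r\le 2$ (with $k\le 1$) is exactly what pins $a$ inside the zero-set interval of $a^{2}-3a+1$, collapsing the inequality to the trivial sign comparison above.
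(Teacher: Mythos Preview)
Your algebra is clean and your argument is correct for the hypothesis as literally printed, namely $1\le r\le 2$. The reduction to $b(3a-1)\ge 3(a^2-3a+1)$ with $a=r^k\in[1,2]$ and $b=(r-1)^k\ge 0$ is valid, and the sign comparison finishes it.

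However, the printed hypothesis ``$r\le 2$'' is almost certainly a typo for ``$r\ge 2$'': the lemma is invoked in the induction step of Theorem~\ref{friendship bound} to pass from $\Sigma_{r-1}\le 3/(r-1)^{k/2}$ to $\Sigma_r\le 3/r^{k/2}$ for \emph{every} $r\ge 2$, and the paper's own proof uses the bound $k(r-1)^{k-1}\le k$, which requires $r-1\ge 1$. Your sign argument does not survive this correction: for large $r$ one has $a=r^k$ unbounded, and $a^2-3a+1>0$ once $a>\tfrac{3+\sqrt5}{2}$, so the right side of your reduced inequality becomes positive and the ``nonnegative vs.\ nonpositive'' trick collapses.

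The paper's route is different and is what you would need for $r\ge 2$. It combines the two left-hand fractions into a single quotient, bounds the numerator below via the mean-value theorem and concavity of $x\mapsto x^k$ (giving $r^k-(r-1)^k\le k(r-1)^{k-1}\le k$, hence numerator $\ge 9-3k$), and bounds the denominator above by $r^k\big((r-1)^k+3\big)\le r^k(r^k+3)\le 4r^{2k}$ using only $r\ge 1$. This yields $\tfrac{9-3k}{4r^{2k}}\ge \tfrac{1}{r^{2k}}$ for $k\le 1$. Compared to your approach, the paper trades the exact algebraic rearrangement for a calculus estimate that scales with $r$; what you gain in elegance for small $r$ is lost for the range the lemma is actually used in.
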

\begin{proof}
\begin{displaymath}
\frac{3}{r^{k}} - \frac{3}{(r-1)^k + 3} = \frac{9 + 3(r-1)^k - 3r^{k}}{r^{k}\big((r-1)^k + 3\big)}.
\end{displaymath}For $r^k - (r-1)^{k}$, we have $r^k - (r-1)^{k}=(r^{k})'|_{\lambda}, \big ((r-1)\leq \lambda \leq r \big)$. Since $\frac{d^{2}(r^{k})}{d^{2}r}\leq 0$, then
\begin{displaymath}
(r^{k})'|_{\lambda}\leq (r^{k})'|_{r-1}=k(r-1)^{k-1}\leq k.
\end{displaymath}Since $r^k >1, r^k > (r-1)^k$, $r^{k}\big((r-1)^k + 3\big)\leq 4r^{2k}$. Thus 
\begin{displaymath}
 \frac{9 + 3(r-1)^k - 3r^{k}}{r^{k}\big((r-1)^k + 3\big)} \geq \frac{9 -3k }{4r^{2k}} \geq \frac{1}{r^{2k}}.
\end{displaymath}\end{proof}

\begin{lemma}\label{b2}
If $0<k<1, r\geq 1$, then
\begin{equation}
\sum_{r=1}^{R} \frac{1}{r^k} \leq \frac{1}{1-k}R^{1-k}.
\end{equation}
\end{lemma}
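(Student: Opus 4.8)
The plan is to bound the sum by an integral via the standard comparison for monotone functions. First I would set $f(x) = x^{-k}$ and observe that, since $0 < k < 1$, $f$ is positive and strictly decreasing on $(0, \infty)$. The key inequality is that for a decreasing function the value at the right endpoint of a unit interval underestimates its average there: for every integer $r \geq 1$,
\begin{equation}
\notag f(r) = \int_{r-1}^{r} f(r)\, dx \leq \int_{r-1}^{r} f(x)\, dx,
\end{equation}
because $f(x) \geq f(r)$ for all $x \in [r-1, r]$.

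Next I would sum this inequality over $r = 1, \dots, R$, at which point the integrals telescope into a single integral over $[0, R]$:
\begin{equation}
\notag \sum_{r=1}^{R} \frac{1}{r^k} \leq \sum_{r=1}^{R} \int_{r-1}^{r} x^{-k}\, dx = \int_{0}^{R} x^{-k}\, dx.
\end{equation}
It then remains only to evaluate this integral. Since $\int x^{-k}\, dx = \frac{x^{1-k}}{1-k}$ and $1 - k > 0$, the antiderivative vanishes at $x = 0$, giving
\begin{equation}
\notag \int_{0}^{R} x^{-k}\, dx = \frac{R^{1-k}}{1-k},
\end{equation}
which is exactly the claimed bound.

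The only point demanding care is the convergence of the improper integral at the lower endpoint $x = 0$, where $f$ blows up. This is precisely where the hypothesis $k < 1$ is indispensable: the singularity $x^{-k}$ is integrable near $0$ only when $k < 1$, and the same condition ensures $1-k>0$ so that $R^{1-k}$ is the correct growth rate. As a sanity check on the boundary term, one has $f(1) = 1 \leq \frac{1}{1-k}$ for $0 < k < 1$, consistent with $\int_0^1 x^{-k}\, dx = \frac{1}{1-k}$. No step is genuinely difficult here; the statement is a routine integral-comparison estimate, so the main ``obstacle'' is merely the bookkeeping of the improper integral at the origin rather than any substantive difficulty.
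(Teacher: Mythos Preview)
Your proof is correct and arrives at the same telescoping inequality as the paper, namely
\[
\frac{1}{r^{k}} \;\leq\; \frac{r^{1-k}-(r-1)^{1-k}}{1-k}\,,
\]
but you justify it differently. You read the right-hand side as $\int_{r-1}^{r} x^{-k}\,dx$ and invoke monotonicity of $x^{-k}$, which is the standard integral-comparison argument and handles the improper endpoint at $x=0$ automatically since $k<1$. The paper instead expands $(r-1)^{1-k}$ as a binomial series about $r$ and checks that every term beyond the linear one carries a negative sign, yielding $(r-1)^{1-k}\leq r^{1-k}-(1-k)r^{-k}$ algebraically. Your route is shorter and avoids the need to verify signs in an infinite series (and the attendant convergence caveat at $r=1$); the paper's route is purely algebraic and never writes down an integral. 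Either way the sum telescopes to $\frac{R^{1-k}}{1-k}$.
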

\begin{proof}The power series expansion of $(r-1)^k$ is 
{\small \begin{align}
\notag &(r-1)^k\\
&= r^k -k r^{k-1} +\frac{k(k-1)}{2}r^{k-2} + \cdots + (-1)^n\frac{\prod_{i=0}^{n-1}(k-i)}{n!} r^{k-n} + \cdots .
\end{align}}Note that from the third term to the end in above power series are all negative, thus 
\begin{align}
\notag &(r-1)^k \leq  r^k - k r^{k-1}\\
\notag &\frac{r^{1-k}-(r-1)^{1-k}}{1-k} \geq  \frac{1}{r^k}\quad (k \rightarrow (k-1)).
\end{align}Hence, 
\begin{displaymath}
\sum_{r=1}^{R} \frac{1}{r^k}\leq \frac{1}{1-k} (R^{1-k} -1) \leq \frac{1}{1-k}R^{1-k} .
\end{displaymath}
\end{proof}
\begin{lemma}\label{b3}
If $0<k\leq 1, r\geq 2, x>0$
\begin{equation}
\frac{3}{r^{k}}+\frac{x}{4} - \frac{x(r-1)^k+3}{(x+1)(r-1)^k + 3} \geq \frac{1}{r^{2k}}. 
\end{equation}
\end{lemma}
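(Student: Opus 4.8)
The plan is to reduce Lemma~\ref{b3} to the already-established Lemma~\ref{b1} by isolating the dependence on $x$. Throughout I would write $a=(r-1)^k$ and $b=r^k$; since $r\ge 2$ and $0<k\le 1$ we have $a\ge 1$, $b>1$ and $b>a$. Setting $x=0$ collapses the fraction to $\tfrac{3}{a+3}$, so the $x=0$ instance of the claim reads $\tfrac{3}{b}-\tfrac{3}{a+3}\ge \tfrac{1}{b^{2}}$, which is exactly Lemma~\ref{b1}. Hence the whole problem is to control how the left-hand side moves as $x$ increases from $0$, and the natural device is to compare against this anchor.

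Concretely I would introduce $\Phi(x)=\tfrac{x}{4}-\tfrac{xa+3}{(x+1)a+3}$, so the target is $\tfrac{3}{b}+\Phi(x)\ge \tfrac{1}{b^{2}}$ with Lemma~\ref{b1} furnishing it at $x=0$. A short simplification (common denominator, the numerator telescoping to $-xa^{2}$) gives the clean identity
\[
\Phi(x)-\Phi(0)=x\left(\frac14-\frac{a^{2}}{\big((x+1)a+3\big)(a+3)}\right).
\]
Since $x>0$, this shows $\Phi(x)\ge\Phi(0)$ precisely when $\big((x+1)a+3\big)(a+3)\ge 4a^{2}$. Whenever this condition holds we obtain $\Phi(x)\ge\Phi(0)$, and combining with Lemma~\ref{b1} finishes the proof at once.

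The easy regime is $a=(r-1)^k\le 3$: here $\big((x+1)a+3\big)(a+3)\ge (a+3)^{2}=a^{2}+6a+9\ge 4a^{2}$ for every $x\ge 0$, so $\Phi(x)\ge\Phi(0)$ and the claim follows from Lemma~\ref{b1} with no further work. The main obstacle is the complementary regime $a>3$, where $(a+3)^{2}<4a^{2}$ and the reduction above can fail on an initial interval of $x$. Differentiating $\Phi$ shows it has an interior minimiser $x^{*}=1-\tfrac{3}{a}>0$, at which $(x^{*}+1)a+3=2a$ and hence $\Phi(x^{*})=-\tfrac14-\tfrac{3}{4a}$; on this interval Lemma~\ref{b1} cannot be invoked as a black box, and one is forced to verify the sharper bound $\tfrac{3}{b}-\tfrac14-\tfrac{3}{4a}\ge \tfrac{1}{b^{2}}$ directly at the minimiser.

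For that sharper bound I would attempt to exploit the finer relation between $a$ and $b$ that already drives Lemma~\ref{b1}, namely $b-a=r^{k}-(r-1)^{k}\le k\le 1$ by the mean value theorem, to replace $\tfrac{3}{b}$ by $\tfrac{3}{a+1}$ and collapse everything to a single-variable inequality in $a$. I expect this to be the genuinely delicate heart of the argument: the control supplied by $b-a\le 1$ weakens as $a$ grows, so pinning down the interior minimum for large $(r-1)^k$ is exactly where the estimate becomes tight and where the real care is required.
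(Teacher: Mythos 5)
Your reduction is sound as far as it goes: the identity
\[
\Phi(x)-\Phi(0)=x\left(\frac14-\frac{a^{2}}{\big((x+1)a+3\big)(a+3)}\right),\qquad a=(r-1)^k,\ b=r^k,
\]
is correct, the regime $a\le 3$ (which covers every $r$ with $(r-1)^k\le 3$, in particular all $2\le r\le 4$ when $k\le1$) is completely settled by Lemma~\ref{b1}, and your identification of the interior minimiser $x^{*}=1-3/a$ with $\Phi(x^{*})=-\frac14-\frac{3}{4a}$ is exact. But the step you defer --- verifying $\frac{3}{b}-\frac14-\frac{3}{4a}\ge\frac{1}{b^{2}}$ at the minimiser --- is a genuine gap, and it cannot be closed, because that inequality (and hence the lemma as stated) is \emph{false} once $a$ is moderately large. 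Concretely, take $k=1$, $r=9$, $x=x^{*}=5/8$, so $a=8$, $b=9$:
\[
\frac{3}{9}+\frac{5/8}{4}-\frac{5+3}{13+3}=\frac13+\frac{5}{32}-\frac12=-\frac{1}{96}<\frac{1}{81}=\frac{1}{r^{2k}},
\]
and the violation worsens as $r$ grows (at $k=1$, $r=100$, $x=32/33$ the left side is about $-0.228$ against $10^{-4}$). The mean-value control $b-a\le k\le 1$ that you hoped to exploit cannot rescue this: $\frac{3}{b}\to 0$ while $\Phi(x^{*})\to-\frac14$, so no refinement of the $a$-versus-$b$ relation closes the gap. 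You correctly sensed that this is where "the real care is required"; the honest answer is that the estimate is not merely tight there but wrong.

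For comparison, the paper's own proof does not survive your analysis either. After the (correct) algebraic regrouping, it bounds the fraction $\frac{3(x+1)a+9-3b-xab}{b((x+1)a+3)}$ below by $\frac{3x-3k+9-xr^{2k}}{(x+4)r^{2k}}$, i.e., it shrinks the numerator \emph{and} enlarges the denominator simultaneously --- a move valid only when the numerator is nonnegative. In the regime you isolated the numerator is negative (in the $r=9$ example above it equals $39+9-27-45=-24$, and indeed $-\frac{24}{144}=-\frac16<\frac{-42.75}{374.6}$), so the inequality reverses, and the paper's conclusion $\mathrm{LHS}\ge\frac{3x+6}{(x+4)r^{2k}}>0$ contradicts the true negative value. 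So while your proposal is incomplete, it is more trustworthy than the published argument: pushed one step further, your minimiser computation yields an exact counterexample exposing the paper's sign error. The lemma is salvageable only under an added hypothesis, e.g.\ $(r-1)^k\le 3$ (your easy regime), a restriction keeping $x$ away from a neighbourhood of $1-3/(r-1)^k$, or $x\to 0$; notably, in the paper's application within Theorem~\ref{friendship bound} the lemma is invoked with $x=\epsilon_\tau$, which is subsequently argued to be $0$, in which case the statement degenerates to Lemma~\ref{b1} and is true.
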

\begin{proof}
Similar to the proof of Lemma \ref{b1}, we have 
\begin{displaymath}
\begin{split}
&\frac{3}{r^k}+\frac{x}{4} - \frac{x(r-1)^k+3}{(x+1)(r-1)^k + 3}\\
=&\frac{x}{4} + \frac{3(x+1)(r-1)^k + 9 - 3r^k - xr^k(r-1)^k}{r^{k}\big((x+1)(r-1)^k + 3\big)} \\
\geq &\frac{x}{x+4} + \frac{3x-3k + 9  - xr^{2k}}{(x+4)r^{2k}}\\
\geq &\frac{3x+6}{(x+4)r^{2k}} \geq \frac{1}{r^{2k}} \quad \left(\frac{3x+6}{x+4} \geq \frac{3}{2}(x>0)\right)
\end{split}
\end{displaymath}Thus we complete the proof for Lemma \ref{b3}. 
\end{proof}

\begin{figure*}[t]
 \centering
 \vspace{-1mm}
\centering
  \includegraphics[width=0.9\textwidth]{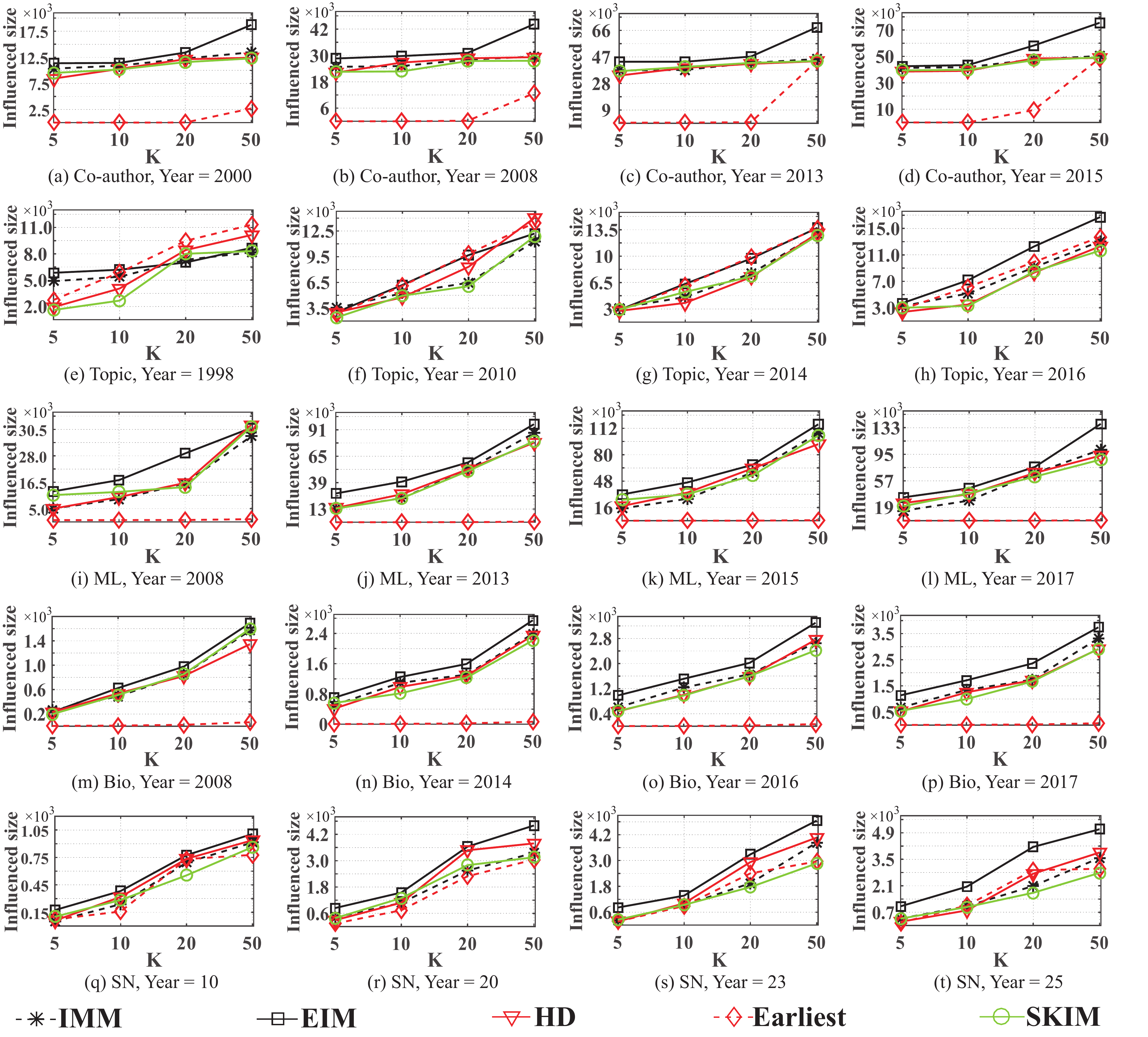}
  \vspace{-4mm}
  \caption{Influence size vs. K. }\label{KK}
   \vspace{-4mm}
  \end{figure*}
  \begin{figure*}[t]
 \centering
 \vspace{-1mm}
\centering
  \includegraphics[width=0.9\textwidth]{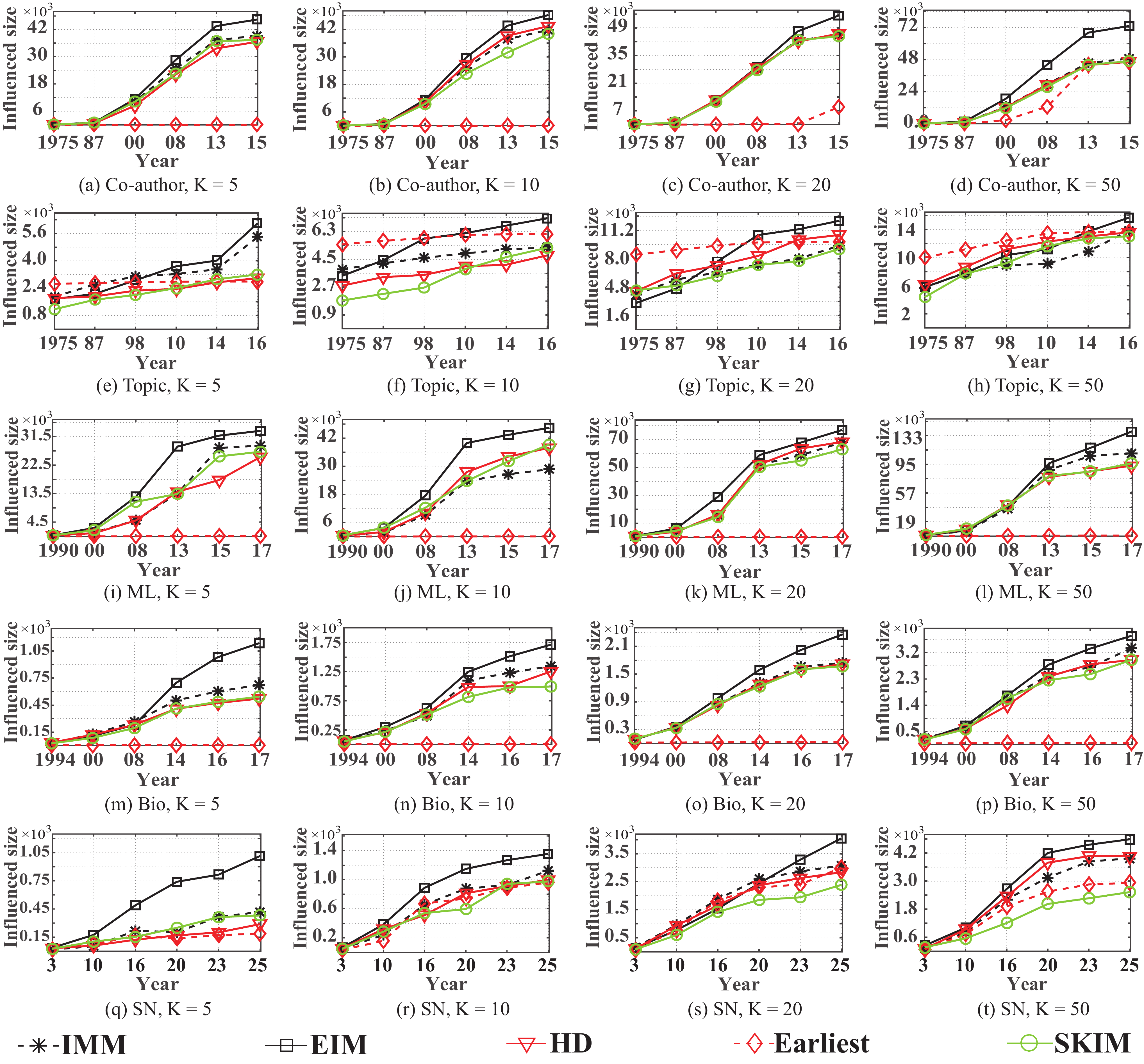}
  \vspace{-4mm}
  \caption{Influence size vs. Year. }\label{KL}
   \vspace{-4mm}
  \end{figure*}

\section{Supplementary Experimental Results}\label{Sexp}
 
\subsection{Evolving Network of Turing Awardees}
In Figures \ref{Turing4} and \ref{Turing5}, we provide additional interesting visualizations of the collaborative relationship among Turing Awardees (from 1966-2016) that we identify from the datasets of Coauthor, ML and Bio. The visualizations also serve as a
typical exmaple of evolving network with the joining time being the awarding
time of each awardee. The edges are based on both co-authorship and citations among the awardees. 
Such evolving cooperation network also validates heterogeneity in the attractiveness of different users (e.g.,  Donald E. Knuth has more new cooperators than others during $1974$-$1980$). 
\subsection{Complete Effectiveness Study}
Figures \ref{KK} and \ref{KL} presents the complete effectiveness study over the five evolving networks. We can see from both figures that $\mathbb{EIM}$ outperforms the four baselines owing to the network knowledge learning in each trial.
       \end{document}